\DeclareMathOperator\dom{dom}
\newcommand*\D{\mathop{}\!\mathrm{d}}
\newcommand*\E{\mathop{}\!\mathrm{e}}
\DeclareMathOperator{\tr}{Tr}
\numberwithin{equation}{section}
\newtheorem{theorem}{Theorem}[section]
\newtheorem*{theorem*}{Theorem}
\newtheorem{lemma}[theorem]{Lemma}
\newtheorem{proposition}[theorem]{Proposition}
\newtheorem*{proposition*}{Proposition}
\theoremstyle{plain}
\theoremstyle{definition}
\newtheorem{definition}[theorem]{Definition}
\newtheorem{remark}[theorem]{Remark}
\newtheoremstyle{example}
  {.3\baselineskip}
  {.3\baselineskip}
  {\normalsize}  
  {0pt}       
  {\bfseries} 
  {.}         
  {5pt plus 1pt minus 1pt} 
  {}          
\theoremstyle{example}
\newtheorem*{assumption*}{\assumptionnumber}
\providecommand{\assumptionnumber}{}
\newenvironment{assumption}[2]
 {%
  \renewcommand{\assumptionnumber}{Assumption #1$\mathfrak{#2}$}%
  \begin{assumption*}%
  \protected@edef\@currentlabel{#1$
  \mathfrak{#2}$}%
 }
 {%
  \end{assumption*}
 }
\def\eps{\varepsilon}
\renewcommand{\epsilon}{\eps}
\renewcommand{\MR}{\mathbb{R}}
\newcommand{\MC}{\mathbb{C}}
\newcommand{\MN}{\mathbb{N}}
\newcommand{\MP}{\mathbb{P}}
\newcommand{\MQ}{\mathbb{Q}}
\newcommand{\R}{\MR}
\newcommand{\N}{\MN}
\newcommand{\cF}{\mathcal{F}}
\newcommand{\cA}{\mathcal{A}}
\newcommand{\cB}{\mathcal{B}}
\newcommand{\cH}{\mathcal{H}}
\newcommand{\cL}{\mathcal{L}}
\newcommand{\cO}{\mathcal{O}}
\newcommand{\cR}{\mathcal{R}}
\newcommand{\cV}{\mathcal{V}}
\newcommand{\cY}{\mathcal{Y}}
\newcommand{\cS}{\mathcal{S}}
\newcommand{\sR}{\mathsf{R}}
\newcommand{\sY}{\mathsf{Y}}
\newcommand{\fA}{\mathfrak{A}}
\newcommand{\df}{\coloneqq}
\newcommand{\interior}[1]{({\kern0pt#1})^{\textnormal{o}}}
\newcommand{\set}[1]{\left\{ #1\right\}}
\newcommand{\norm}[1]{\|#1\|}
\newcommand{\EX}[1]{\mathbb{E}\left[#1\right]}
\newcounter{Task}\setcounter{Task}{1}
\newcommand{\sgc}{\begin{color}{blue}}
\newcommand{\cgs}{\end{color}}
\newcommand{\cHplus}{\cH^{+}}
\newcommand{\cHpluso}{\cHplus\setminus \{0\}}
\newcommand{\dm}{m(\D\xi)}
\newcommand{\dmu}{\mu(\D\xi)}
\begin{document}
\title[Pricing Options on Forwards in Affine SV Models]{Pricing Options on Forwards in
Function-Valued Affine Stochastic Volatility Models} 
\address{\today{}, Korteweg-de Vries Institute for Mathematics at University of Amsterdam.}

\thanks{Asma Khedher is grateful for the financial support by the Research Foundation Flanders (FWO) under the grant FWO WOG W001021N}

\maketitle{}

\vspace{-3mm}
\begin{center}
  \begin{tabular}{ccc}
  \textsc{Jian He}  &  \textsc{Sven Karbach} &  \textsc{Asma Khedher}\\
  \small[\texttt{\MakeLowercase{j.he2@uva.nl}}] &    
   \small[\texttt{\MakeLowercase{sven@karbach.org}}] &
   \small[\texttt{\MakeLowercase{a.khedher@uva.nl}}]
  \end{tabular}
\end{center}

\begin{abstract}
We study the pricing of European-style options written on forward contracts within function-valued infinite-dimensional affine stochastic volatility models. The dynamics of the underlying forward price curves are modeled within the Heath-Jarrow-Morton-Musiela framework as solution to a stochastic partial differential equation modulated by a stochastic volatility process. We analyze two classes of affine stochastic volatility models: (i) a Gaussian model governed by a finite-rank Wishart process, and (ii) a pure-jump affine model extending the Barndorff–Nielsen–Shephard framework with state-dependent jumps in the covariance component. For the pure-jump model, we derive conditions for the existence of exponential moments and develop semi-closed Fourier-based pricing formulas for vanilla call and put options written on forward contracts with fixed delivery lag. For the Wishart model, we develop finite-rank Riccati approximations and corresponding Fourier-based pricing approximations. Our approach allows for tractable pricing in function-valued forward-curve models while capturing maturity-specific and term structure risk essential in forward markets.\\

\noindent \textbf{Keywords:} option pricing, stochastic volatility models, affine processes, forward curve dynamics, commodity derivatives, Wishart processes, Fourier methods, HJM framework, exponential moments.

\end{abstract}


\section{Introduction}

The aim of this work is to derive semi-closed pricing formulas for European vanilla options written
on forward contracts in a function-valued HJMM model with stochastic volatility, and to validate
these formulas numerically. Our pricing results are formulated for option families with a fixed
delivery lag. To achieve this, we adopt the Heath--Jarrow--Morton--Musiela (HJMM) paradigm,
originally proposed for interest-rate markets by \cite{heath1992bond} and later transferred to
commodity markets by \cite{clewlow2000energy,benth2008stochastic}. In this paradigm, forward prices are modeled directly in time-to-maturity coordinates. Under this representation, the forward curve evolves according to a hyperbolic stochastic partial differential equation (SPDE) driven by a cylindrical Brownian motion whose covariance is itself modeled by a stochastic process that we call the \emph{instantaneous covariance process}. We investigate two \emph{affine} specifications of this covariance
process: (i) infinite-dimensional Wishart processes
taking values in the cone of positive self-adjoint trace class operators on
a separable real Hilbert space, as introduced in \cite{cox2024infinite}, and (ii) a state-dependent jump extension of the operator-valued Ornstein--Uhlenbeck process introduced in~\cite{CKK22b}, in which the jump intensity measure depends affinely on the current covariance state. The pure-jump model admits semi-explicit Fourier-Laplace expressions in terms of \emph{generalized Riccati equations}, while for the Wishart model we work with finite-rank Riccati approximations. Exploiting this structure, together with establishing rigorous conditions for the existence of exponential moments in the pure-jump setting, enables the derivation of semi-closed pricing formulas for vanilla call and put options and finite-rank pricing approximations in the Wishart setting.

\subsection{The HJMM framework}
For any time $t \geq 0$ and maturity date $T \geq t$, the forward price $F(t,T)$ represents the agreed-upon price at time $t$ for a transaction occurring at the future date $T$. Utilizing the Musiela parametrization $x \equiv T - t$, forward prices become functions of time-to-maturity. We define the logarithmic forward prices as $f_{t}(x)\equiv \ln(F(t,t+x))$, where $x,t\geq 0$, and call the mapping $x \mapsto f_t(x)$ the (logarithmic) \emph{forward price curve} at time $t \geq 0$. The forward curve dynamics are described by the stochastic partial differential equation (SPDE):
\begin{align}\label{eq:HJMM}
\begin{cases}
\D f_t(x) &= \left(\frac{\partial}{\partial x}f_t(x) + g_t(x)\right)\D t + \sum_{i=1}^{d}\sigma_t^{(i)}(x)\D W_t^{(i)},\\
f_0(x) &= \ln(F(0,x)),
\end{cases}
\end{align}
where $(W^{(i)}_t)_{0 \leq t \leq T}$, ${i=1}, \ldots, {d}$, are independent Brownian motions. The
drift term $g_t(x)$ satisfies a \emph{HJM no-arbitrage condition} (see~\cite[Sec.~2.4]{CT06} and
Section~\ref{sec:appl-math-finance} below), ensuring that each forward price process $(F(t,T))_{0 \leq t \leq T}$ is a martingale under some risk-neutral measure $\MQ$. 

The functions $(\sigma_t^{(i)}(x))_{0 \leq t \leq T}$, ${i=1}, \ldots, {d}$, are referred to as the \textit{instantaneous volatilities} of the forward curve. They characterize the local covariance structure between forward prices of different maturities. The integer $d \in \mathbb{N} \cup \{+\infty\}$ is called the \emph{rank of the noise} and corresponds to the number of independent risk factors driving the forward curves. If $d < \infty$, we refer to~\eqref{eq:HJMM} as a \emph{finite-rank HJM} model; otherwise, it is an \emph{infinite-rank HJM} model. 

\subsection{Stochastic covariance models for forward curve dynamics}

The diffusion term in~\eqref{eq:HJMM} can be interpreted in a more abstract form as the stochastic integral of an operator-valued process $(\sigma_t)_{0\leq t \leq T}$ acting on the increments of a cylindrical Brownian motion $(\D W_t)_{0\leq t \leq T}$. We understand $(\sigma_t)_{0\leq t \leq T}$ as the stochastic volatility operator, encoding the time-varying covariance structure of the forward curve dynamics. The functions $\sigma_t^{(i)}$ can therefore be interpreted as the principal components of $\sigma_t$, capturing the volatility structure along different directions in the underlying space of forward curves. The joint evolution of the forward curve and its stochastic volatility, represented by the process $(f_t, \sigma_t)_{0\leq t \leq T}$, thus defines an \emph{infinite-dimensional stochastic volatility model} for forward curve dynamics.

The motivation for adopting a stochastic volatility framework stems from empirical evidence in commodity markets: volatility is time-varying, exhibits clustering, and is prone to abrupt spikes; see, e.g., \cite{eydeland2002energy, benth2012modeling}. Classical models with constant volatility cannot capture these phenomena, whereas stochastic volatility models allow for a more realistic and flexible description of market behavior. Furthermore, to accurately capture volatility smiles in commodity option markets, incorporating stochastic volatility is essential. In contrast to approaches that perform a priori dimension reduction of the covariance, the pure-jump specification is formulated directly on the full maturity space rather than on a PCA-truncated factor model. For the Wishart specification considered here, the covariance state remains finite rank for fixed initial rank, and our analytical and numerical treatment is therefore based on finite-rank Riccati approximations. This still preserves the infinite-dimensional maturity state space while keeping the Gaussian covariance dynamics implementable. Empirical studies support the relevance of high-dimensional maturity dynamics: for instance, \cite{andresen2010modeling} and \cite{koekebakker2005forward} demonstrate that power forward markets exhibit substantial idiosyncratic risk, and that high-dimensional noise is needed to capture the observed variability. In particular, the latter shows that even in PCA, around 10 factors are typically required to explain the majority of the variance, underlining the importance of accommodating rich maturity dynamics from the outset. Here our function-valued formulation helps, as it is essentially non-parametric, relying on the full initial forward curve and the full covariance operator as model inputs. The covariance structure is encoded via a positive kernel operator, which can capture empirical features with a minimal set of hyperparameters and enables direct calibration to market data, including option prices and historical term structure dynamics.

The stochastic volatility models considered in this paper belong to the class of \emph{affine processes}, which play a central role in modern financial modeling due to their analytical tractability. Affine processes satisfy the so-called \emph{affine transform formula}, which expresses the cumulant generating function as an affine function of the initial state and the solution of a system of generalized Riccati ordinary differential equations (ODEs). A key advantage of this framework is the quasi-explicit availability of the Fourier–Laplace transform, which enables efficient and accurate pricing of options and other derivatives via Fourier inversion methods; see~\cite{CM99, DFS03, eberlein2010analysis}. The extension of affine modeling techniques to infinite-dimensional settings has been the subject of extensive research; see, e.g.,~\cite{schmidt2020infinite, cuchiero2020generalized, benth2018ornstein, benth2018heston, Kar24, Kar25, FK24, cuchiero2025measure, benth2025measure}. In parallel, robust statistical methods have been developed for the estimation and analysis of stochastic volatility and covariation in term structure models. These approaches address both the functional nature of the data and the dynamic consistency required for sound modeling; see~\cite{schroers2024robustfunctionaldataanalysis, schroers2024dynamicallyconsistentanalysisrealized}. 

\subsection{Contribution} In this paper, we contribute to this growing literature by developing a pricing framework for European options written on forward contracts under function-valued affine stochastic volatility. We formulate the forward curve dynamics within the HJMM framework and model the instantaneous covariance as an operator-valued affine process. The resulting risk-neutral pricing problem is treated for a fixed delivery lag. Two concrete model classes are considered: a Gaussian specification driven by a Wishart covariance process on the cone of positive trace-class operators, and a pure-jump model based on an extension of an operator-valued Ornstein–Uhlenbeck process to accommodate state-dependent jumps. The inclusion of jumps in the volatility process enhances the model’s ability to reflect sudden market shocks, which commonly arise from supply disruptions, geopolitical tensions, or abrupt changes in demand; see, e.g.~\cite{leippold2010asset}. For the pure-jump specification, we rigorously extend the affine transform formula to a complex strip in an infinite-dimensional Hilbert space and establish sufficient conditions under which exponential moments exist in real and complex domains. In particular, we identify a sufficient admissible domain for the moment generating function using generalized Riccati equations in Banach spaces, building on the analytic machinery introduced in~\cite{Mar76}, and extending prior finite-dimensional results such as those in~\cite{KM15}. For the Wishart specification, we work with finite-rank Riccati systems and the resulting Fourier-based pricing approximations.

Based on these results, we derive semi-closed-form pricing formulas for vanilla call and put options in the pure-jump model, together with finite-rank pricing approximations in the Wishart model, which enable numerical evaluation via Fourier inversion. We validate the accuracy and robustness of this pricing methodology through comparisons with direct or conditional Monte Carlo benchmarks. The numerical results show that the first forward moment in the Wishart experiment, the corrected Wishart option prices, and the option prices in the pure-jump experiments are consistent with these benchmarks, see Figures~\ref{fig:mmt_ws}, \ref{fig:wishart_pricing}, \ref{fig:mc_b0}, and \ref{fig:mc_b1} below. For the explicit L\'evy and BNS specifications, the affine formulas provide substantial computational gains over Monte Carlo simulation. In the state-dependent jump example, the relative runtime ranking is much more implementation-dependent and should therefore not be overinterpreted. Our results demonstrate that, despite the infinite-dimensional nature of the maturity state space, affine stochastic volatility models remain useful tools for capturing rich dynamics in forward and futures markets, provided the tractability claims are interpreted in line with the particular specification under consideration.

\subsection{Layout of the article}

The remainder of the paper is organized as follows. In Section~\ref{sec:hjmm}, we introduce the modeling framework for forward curve dynamics using the HJMM approach and define the class of infinite-dimensional stochastic volatility models under consideration. Sections~\ref{sec:OU-jumps} and~\ref{sec:OU-Wishart} present two concrete affine specifications for the volatility process: a pure-jump model based on an extension of an operator-valued Ornstein–Uhlenbeck process, and a Gaussian model driven by a Wishart process on the cone of positive trace-class operators. In Section~\ref{sec:appl-math-finance}, we provide Fourier-based pricing formulas for the pure-jump model and finite-rank pricing approximations for the Wishart model. Section~\ref{sec:numerics} presents a numerical study comparing these affine formulas with Monte Carlo simulation methods. We highlight the accuracy, robustness, and computational efficiency of the affine approach, especially in settings where simulation becomes numerically challenging. Finally, in Section~\ref{sec:complex-moments}, we study the existence of the exponential moments.

\subsection{Notation}
For $(X,\tau)$ a topological space and $S \subset X$ we let $\cB(S)$ denote the Borel-$\sigma$-algebra generated by the relative topology on $S$. We denote by $C^k([0,T];S)$ the space of $S$-valued $k$-times continuously differentiable functions on $[0,T]$. Throughout this article we fix a separable, infinite-dimensional real Hilbert space $(H,\langle\cdot,\cdot \rangle_H)$. The complexification of a real Hilbert space is the complex vector space $H^{\mathbb{C}}= H\oplus_\R \mathrm{i} H=\{x+\mathrm{i}y \colon x,y \in H\}$ equipped with the complex \emph{bilinear} extension of the real inner product,
$\langle x+\mathrm{i}y, u+\mathrm{i} v\rangle_{H^{\mathbb{C}}} = \langle x,u \rangle_{H}-\langle y,v\rangle_{H}+\mathrm{i}\langle y,u \rangle_{H}+\mathrm{i}\langle x, v\rangle_{H}$, for all $x,y,u,v \in H$. Note that $\langle\cdot,\cdot\rangle_{H^{\mathbb{C}}}$ is bilinear (not sesquilinear) and hence \emph{not} a Hilbert-space inner product; it is the natural pairing for the analytic continuation of the Fourier--Laplace transform used throughout this paper. For $z =x+\mathrm{i} y \in H^{\mathbb{C}}$, we say that $x=\Re(z)$ and $y=\mathcal{I}(z)$ are the real and imaginary parts of $z$. Moreover, $\bar{z}=\Re(z)- \mathrm{i} \,\mathcal{I}(z)$ 
denotes the conjugate of $z$. \par
The space of bounded linear operators from $H$ to $H$ is denoted by $\cL(H)$. The space of compact operators from $H$ to $H$ is denoted by $(K(H),\left\| \cdot \right\|_{\cL(H)})$. The adjoint of an
operator $A \in \cL(H)$ is denoted by $A^*$, i.e.,  $\langle Ah,g\rangle_{H} = \langle h, A^*g\rangle_{H}$ for all $h\in H$.
For $A\in \cL(H^{\mathbb{C}})$
we define the transpose $A^T\in \cL(H^{\mathbb{C}})$
by $\langle A^T h, g\rangle_{H^{\mathbb{C}}} = \langle A \bar{g}, \bar{h} \rangle_{H^{\mathbb{C}}}$, $h,g\in H^{\mathbb{C}}$. Note that in contrast to $A^*$, the operator $A^{T}$ corresponds to the transpose without conjugation.
Note that any $A\in \cL(H)$ extends in a trivial and norm-conserving way to an operator $\tilde{A}\in \cL(H^{\mathbb{C}})$ by setting $\tilde{A}(h + ig)=Ah+iAg$, $h,g\in H$. For all $p\in [1,\infty)$ let 
$(\cL_{p}(H), \left\| \cdot \right\|_{\cL_p(H)})$ be the Banach space of Schatten class operators from $H$ to $H$, i.e.,
\begin{equation}
 \cL_{p}(H) = \Big\{ A \in K(H) \colon \sum_{\lambda \in \sigma(A^*A)} \lambda^{p/2} < \infty \Big\}\,,
\end{equation}
and $\| A \|_{\cL_p(H)}^{p} = \sum_{\lambda \in \sigma(A^*A)} \lambda^{p/2}$, where eigenvalues are counted with multiplicity.
In particular, we let $\cL_{1}(H)$ and $\cL_{2}(H)$ denote respectively the
space of \emph{trace class operators} and the space of \emph{Hilbert-Schmidt operators} on $H$. Recall 
that $\cL_{2}(H)$ is a Hilbert space when endowed with the inner product
\begin{align*}
 \langle A, B \rangle_{\cL_{2}(H)} = \sum_{n=1}^{\infty} \langle A e_n, B e_n \rangle_H.
\end{align*}

For notational brevity, we reserve $\langle \cdot, \cdot \rangle$ to denote the
inner product on $\cL_{2}(H)$, and $\| \cdot \|_2$ for the norm induced by
$\langle \cdot, \cdot \rangle$. Moreover, we reserve $\| \cdot \|_1$ for the norm $\| \cdot \|_{\cL_{1}(H)}$.
We recall that the trace of $A\in \cL_1(H)$ is defined by
\begin{equation}
 \tr(A) = \sum_{n\in \N} \langle Ah_n, h_n \rangle_{H} \in  \mathbb{C} ,
\end{equation}
where $(h_n)_{n\in \N}$ is an orthonormal basis for $H$; $\tr(A)$ does not depend on the choice of the orthonormal basis. 

Writing $V'$ for the dual of a Banach space $V$, 
we recall (see, e.g.,~\cite[Section 19]{Conway:2000}) that the dual space of trace class operators satisfies $(\cL_1(H))'\simeq \cL(H)$  under the duality pairing
\begin{equation}
 \langle A, B \rangle_{\cL(H),\cL_1(H)}
 = \tr(B^* A),\quad A\in \cL(H),\, B\in \cL_1(H).
\end{equation}

We let $\cS(H)$, $\cS_1(H)$ denote the (closed) subspaces of $\cL(H)$, $\cL_1(H)$ consisting of all operators that are self-adjoint and we let $\cS^+(H)$ and $\cS_1^+(H)$ denote the (closed) subsets of $\cS(H)$ and $\cS_1(H)$ consisting of all self-adjoint operators $A$ satisfying $\sigma(A) \subset [0,\infty)$.
Moreover, if $A\in \cS(H)$, then the extended operator $\tilde{A} \in \cL(H^{\mathbb{C}})$ fulfills $\tilde{A}^T=\tilde{A}=\tilde{A}^*$, in particular, $\tilde{A}\in \cS(H^{\mathbb{C}})$ (from now on we do not distinguish between $A$ and $\tilde{A}$). In this article we will encounter the set $\cS^+(H)\oplus \mathrm{i}\cS(H)\subseteq \cL(H^{\mathbb{C}})$, which
denotes the operators $A\in \cL(H^{\mathbb{C}})$ for which there exist (necessarily unique) $A_1 \in \cS^+(H)$, $A_2\in \cS(H)$ such that $A=A_1 + \mathrm{i} A_2$. The set $\cS(H)\oplus \mathrm{i} \cS(H)$ is defined analogously. Note that if $A\in \cS(H)\oplus \mathrm{i}\cS(H)$, then $A^{T}=A$.

A nonempty subset $K$ of a vector space is called a \emph{wedge} if $K+K\subseteq K$ and $\alpha K \subseteq K$
for all $\alpha \geq 0$, if moreover $K\cap (-K) = \{ 0\}$ then we call $K$ a \emph{cone}. A cone $K$ in a vector space $X$ induces a partial ordering: we 
write $x \leq_{K} y$ if $y-x\in K$ (and $x\geq_K y$ if $x-y\in K$). For a Banach space $V$, if 
$K\subset V$ is a wedge, 
we define the \emph{dual} of $K$ by
\begin{equation}
K^* = \{ x' \in V' \colon \langle x', x\rangle_{V', V} \geq 0 \text{ for all } x\in K \},
\end{equation}
and we say that $K$ is \emph{self-dual} if $K=K^*$. 
We define $\cH$ to be the space of all self-adjoint Hilbert-Schmidt operators on $H$ and 
$\cH^+$ to be the cone of all positive operators in $\cH$:
\begin{equation*}
 \cH := \{ A \in \cL_{2}(H) \colon A = A^* \}, \ \text{and}\ 
 \cH^{+} := \{ A \in \cH \colon \langle Ah, h\rangle_H \geq 0 \text{ for all } h\in H \}. 
\end{equation*}
Note that $\cH$ is a closed subspace of $\cL_{2}(H)$, and that $\cH^{+}$ is a self-dual 
cone in $\cH$.
For
$a,b\in H$, we let $a\otimes b$ be the linear operator defined by
$a\otimes b (h)=\langle a, h\rangle_{H} b$ for every $h\in H$. Note that $a\otimes a\in\cHplus$
for every $a\in H$ and when space is scarce, we shall write $a^{\otimes 2}\df a\otimes a$. Finally, throughout this article we let $\chi\colon \cH\rightarrow \cH$
denote the truncation function given by $\chi(x) = x\mathbbm{1}_{\{\| x \|_2 \leq 1\}}$.

\section{Infinite-Dimensional Affine Stochastic Volatility Models} \label{sec:hjmm}

In this section, we present two affine stochastic volatility models in an infinite-dimensional function-valued setting. In both cases, the underlying forward-curve process solves the HJMM SPDE~\eqref{eq:HJMM} in some separable Hilbert space $H$ of forward curves. The key distinction between the two models lies in the specification of their instantaneous covariance processes $(\sigma_t)_{t\geq 0}$.  

\subsection{The affine pure-jump stochastic volatility model}\label{sec:OU-jumps}
We begin by formally introducing the two instantaneous covariance processes and their respective joint Fourier-Laplace transform.

\subsubsection{A pure-jump $\cH^+$-valued instantaneous covariance process}
To introduce our pure-jump covariance process, we need the following admissibility conditions that we impose on the parameter of the process:
\begin{assumption}{}{A}\label{def:admissibility}
 An \emph{admissible parameter set}
$(b, B, m, \mu)$ 
consists of
\begin{enumerate}
\item[(i)] \label{eq:m-2moment} a measure $m\colon\cB(\cHpluso)\to [0,\infty]$ such that
    \begin{enumerate}
    \item[(a)] $\int_{\cHpluso} \| \xi \|_2^2 \,\dm < \infty$ and
    \item[(b)] $\int_{\cHpluso}|\langle\chi(\xi),h\rangle|\,\dm<\infty$ for all $h\in\cH$  
   and there exists an element $I_{m}\in \cH$ such that $\langle
   I_{m},h\rangle=\int_{\cHpluso}\langle \chi(\xi),h\rangle\, m(\D\xi)$ for every $h\in\cH$\,;
 \end{enumerate}   
\item[(ii)]\label{eq:drift} a vector $b\in\cH$ such that
  \begin{align}\label{ass:b_positive}
    \langle b, v\rangle - \int_{\cHpluso} \langle \chi(\xi), v\rangle\,m(\D\xi) \geq 0\, \quad\text{for all}\;v\in\cHplus\,;
  \end{align}
\item[(iii)] \label{eq:affine-kernel} a $\cH^{+}$-valued measure 
$\mu \colon \mathcal{B}(\cHpluso) \rightarrow \cH^+$ such that
\begin{align*}
\int_{\cH^+\setminus \{0\}} \langle \chi(\xi), u\rangle \frac{\langle \dmu, x \rangle}{\| \xi \|_2^2 }< \infty,  
\end{align*}
for all $u,x\in \cH^{+}$ satisfying $\langle u,x \rangle = 0$\,;
 \item[(iv)] \label{eq:linear-operator} an operator $B\in \mathcal{L}(\mathcal{H})$ 
 with adjoint $B^{*}$ satisfying
\begin{align*}
    \left\langle B^{*}(u) , x \right\rangle 
    - 
    \int_{\cHpluso}
        \langle \chi(\xi),u\rangle 
        \frac{\langle \dmu, x \rangle}{\| \xi\|_2^2 }
    \geq 0,
  \end{align*}
  for all $x,u \in \cHplus$ satisfying $\langle u,x\rangle=0$.
\end{enumerate}
\end{assumption}

In this paper, we consider affine pure-jump covariance processes that are either driven by a pure jump L\'evy process (nullifying the state-dependent jump parts), or assume that the jump-process is of finite-variation, which is due to the existence of exponential moments in these two settings (see Section~\ref{sec:complex-moments} below).

\begin{assumption}{}{B}\label{ass:chkk-finite-variation}
Assume that $(b,B,m, \mu)$ is an admissible parameter such that $\mu(\D\xi)$ takes values in $\cL_{1}(H)\cap\cHplus$. Moreover, assume that either one of the following two cases holds:
\begin{enumerate}
    \item[(i)] $\mu=0$,
\item[(ii)] 
$\int_{\cHplus \cap\{ \norm{\xi}_2\leq 1\}} \|\xi\|_2\,m(\D\xi) < \infty\,$ and 
$\int_{\cHplus \cap \{\norm{\xi}_2\leq 1\}} \|\xi\|_2^{-1}\langle y,\mu (\D
\xi)\rangle < \infty$,\quad $\forall y\in\cHplus.$
\end{enumerate}
\end{assumption}

 Given an admissible parameter set $(b, B, m, \mu)$ that satisfies Assumptions \ref{def:admissibility} and \ref{ass:chkk-finite-variation}, we conclude from \cite[Theorem 2.8]{CKK22a} that there exists a probabilistically and analytically weak solution to
\begin{align}\label{eq:process-sY}
\D \sY_{t} &=\Big(b+B(\sY_{t})+\int_{\cHplus \cap \{\norm{\xi}_2> 1\}}\xi \,M(\sY_t,\D\xi)\Big)\D t+ \D J_{t},   \qquad t\geq 0\,,\\
\sY_0 &= \mathsf{y} \in \cH^+\,,\nonumber
\end{align}
where $(J_t)_{t\geq 0}$ is a purely-discontinuous square integrable $\cH$-valued martingale with jump intensity measure depending on the state $\sY_t$ in affine manner as
\begin{equation}\label{eq:measure-M}
M(y, \D \xi):= m(\D\xi)+ \frac{\langle \mu(\D\xi), y\rangle}{\|\xi\|_2^2}.
\end{equation}


Notice that $(\sY_t)_{t\geq 0}$ is a square integrable $\cH^+$-valued Markovian semimartingale and \eqref{eq:process-sY} amounts to the semimartingale representation of $(\sY_t)_{t\geq 0}$, see \cite[Proposition 2.6]{CKK22b} for the details.

\subsubsection{The affine pure-jump joint stochastic volatility model} 
The first stochastic volatility model we consider takes as instantaneous volatility process the square root of the pure-jump $\cH^+$-valued process from~\eqref{eq:process-sY}. We denote the forward price dynamics by $(X_t)_{t\geq 0}$, governed by the HJMM--SPDE~\eqref{eq:HJMM}, but formulated in a Hilbert space $H$ as follows. Fix an initial value $(x,\mathsf{y})\in H \times \cH^+$ and consider the coupled SDE
\begin{align}\label{eq:joint-affine-SDE}
  \begin{cases}
    \D X_{t} = \big( \cA X_{t} + D^{1/2}\sY_t D^{1/2}\Upsilon \big) \,\D t + D^{1/2}\sY_{t}^{1/2}\,\D W_{t}, \\
    \D \sY_{t} = \Big( b + B(\sY_{t}) + \int_{\cHplus \cap \{\|\xi\|_2 > 1\}} \xi \,M(\sY_t,\D\xi) \Big) \D t + \D J_{t}, \quad t \geq 0 ,
  \end{cases}
\end{align}
where $(W_t)_{t\geq 0}$ is an $H$-cylindrical Brownian motion, $(J_t)_{t\geq 0}$ is as described after equation~\eqref{eq:process-sY}, $(W_t)_{t\geq 0}$ and $(J_t)_{t\geq 0}$ are independent, $(b,B,m,\mu)$ is an admissible parameter set satisfying Assumptions~\ref{def:admissibility} and~\ref{ass:chkk-finite-variation}, $(\cA,\dom(\cA))$ is the generator of a strongly continuous semigroup on $H$ (in applications, the differentiation operator on the Filipovi\'{c} space; see Section~\ref{sec:appl-math-finance}), $D \in \cS_1^+(H)$, and $\Upsilon \in H$. We include the linear term $D^{1/2}\mathsf{Y}_t D^{1/2}\Upsilon$ in the drift of $(X_t)_{t\geq0}$, as motivated by the no-arbitrage drift condition in forward markets;
see Section~\ref{sec:appl-math-finance} for details. This addition does not affect well-posedness or the affine property; see, for example, the proof of \cite[Lemma~2.8]{CKK22b}.

The existence of a (stochastically) weak solution to~\eqref{eq:joint-affine-SDE} in $H \times \cH$ follows from \cite[Proposition~2.6 and Lemma~2.8]{CKK22b}. 
We refer to~\eqref{eq:joint-affine-SDE} as the \emph{pure-jump affine stochastic volatility model}.

\begin{remark}[Independence assumption]\label{rem:independence}
The independence of the driving noise $(W_t)_{t\geq 0}$ of the forward price dynamics and the noise $(J_t)_{t\geq 0}$ (respectively $(B_t)_{t\geq 0}$ in the Wishart model of Section~\ref{sec:OU-Wishart}) of the volatility process is a modeling limitation: it precludes the \emph{leverage effect}, i.e., the empirically observed negative correlation between returns and volatility changes. This assumption is standard in the infinite-dimensional affine framework as it preserves the affine structure. Extending the model to incorporate leverage via correlated driving noises has been done for the BNS-model in \cite{zbMATH07927020}.
\end{remark}

\subsubsection{Riccati equations associated with the pure-jump affine volatility model} \label{sec:Riccati-jumps}

Next we aim to derive the Laplace transform of our joint stochastic volatility model. The
Laplace transform will be computed in terms of a solution to ordinary differential equations
known as \emph{Riccati equations}. We start by introducing the Riccati equations associated with our introduced model.

Let $(b,B,m,\mu)$ be an admissible parameter set satisfying Assumptions \ref{def:admissibility}. Recall the measure $M$ in \eqref{eq:measure-M} and define 
\begin{align*}
  \cO=\set{u\in\cH\colon \int_{\cHplus\cap\set{\norm{\xi}_2>1}}\E^{-\langle
  \xi,u\rangle}\,M(y,\D\xi)<\infty,\quad  \forall y\in\cHplus}.
\end{align*}
We note that $\cO$ is a convex subset in $\cH$ and since
$\cHplus\subseteq \cO$, it is non-empty. Moreover, we assume that $M(x,\D\xi)$
is such that $\cO$ is an open subset of $\cH$.
Define ${\sf F}\colon \cO \to \MR$ and ${\sf R}\colon H \times \cO \to \cH$, respectively as
\begin{align}
 {\sf F}(u)&=\langle \tilde{b}, u\rangle - \int_{\cHpluso}\big(\E^{-\langle \xi, u\rangle}-1\big) m(\D \xi), \label{eq:F}\\
 {\sf R}(h,u)&= \tilde{B}^{*}(u)-\tfrac{1}{2}(D^{1/2}h)^{\otimes 2}
     -D^{1/2} h \otimes D^{1/2}\Upsilon 
    - \int_{\cHpluso}\big(\E^{-\langle
    \xi,u\rangle}-1\big)\frac{\mu(\D \xi)}{\norm{\xi}_2^{2}}, \label{eq:R}
\end{align}
where $\tilde{b}\df b-\int_{\cHpluso}\chi(\xi)\,m(\D\xi)$ and for all $v\in\cHplus$
\begin{align*}
  \tilde{B}(v)\df B(v)-\int_{\cHpluso}\chi(\xi)\frac{\langle
  v,\mu(\D\xi)\rangle}{\norm{\xi}_2^{2}}.    
\end{align*}
Note that $\tilde{b}$ and $\tilde{B}$ are well defined due to the
finite-variation Assumption~\ref{ass:chkk-finite-variation}.
\begin{definition}
Let $T\geq 0$ and $u=(u_{1},u_{2})\in H\times\cO$. Consider the
following \emph{extended generalized Riccati equations}:   
\begin{subequations}\label{eq:extended-Riccati}
  \begin{empheq}[left=\empheqlbrace]{align} 
    \,\frac{\partial {\sf P}}{\partial t}(t,u)&={\sf F}({\sf q}_{2}(t,u)), &\, 0< t \leq T, \quad {\sf P}(0,u)=0,\label{eq:extended-Riccati-P}\\
    \,{\sf q}_{1}(t,u)&=u_{1}+\cA^{*}\left(\int_{0}^{t}{\sf q}_{1}(s,u)\D s\right), &\, 0<
    t \leq T,\quad {\sf q}_{1}(0,u)=u_{1}, \label{eq:extended-Riccati-q1}\\
    \,\frac{\partial {\sf q}_{2}}{\partial t}(t,u)&={\sf R}({\sf q}_{1}(t,u), {\sf q}_{2}(t,u)), &\,
    0< t \leq T, \quad 
    {\sf q}_{2}(0,u)=u_{2}. \label{eq:extended-Riccati-q2}
  \end{empheq}                               
\end{subequations}
For $u=(u_{1},u_{2})\in H\times \cO$ and $T\geq 0$, we say that $({\sf P}(\cdot, u), {\sf q}_1(\cdot,u), {\sf q}_2(\cdot, u)) \colon  [0,T] \rightarrow \mathbb{R}\times H\times \cH$ is a mild solution to \eqref{eq:extended-Riccati-P}-\eqref{eq:extended-Riccati-q2} if 
    ${\sf P}(\cdot,u)\in C^{1}([0,T],\MR)$, ${\sf q}_{1}(\cdot,u)\in C([0,T],H)$ and ${\sf q}_{2}(\cdot,u)\in
    C^{1}([0,T], \cO)$ satisfies \eqref{eq:extended-Riccati-P}-\eqref{eq:extended-Riccati-q2}.
\end{definition}
\begin{remark}[Local well-posedness on $\cO$]\label{rem:forward-invariance}
Local Lipschitz continuity of the map $\mathsf{R}$ on $\cO$ guarantees short-time existence of the Riccati flow. In this paper we use the Riccati equations only on time horizons for which a solution is known to exist. Proposition~\ref{prop:extended-Riccati-existence} provides local existence up to the maximal lifetime, while global existence is verified explicitly only in the concrete examples treated later.
\end{remark}
Let $\cH^{\MC}$ denote the complexification of $\cH$. We introduce the following notation
 \begin{align*}
  S(\cO)\df\set{u\in\cH^{\MC}\colon\, \Re(u)\in\cO}.
\end{align*}
Note that for every $u\in S(\cO)$ we have
\begin{align*}
\int_{\cHplus\cap \set{\norm{\xi}_2>1}}|\E^{-\langle
  \xi,u\rangle}|M(y,\D\xi)=\int_{\cHplus\cap
  \set{\norm{\xi}_2>1}}\E^{-\langle \xi,\Re(u)\rangle}M(y,\D\xi)<\infty\,, \quad \forall y\in \cHplus\,.
\end{align*}
\begin{definition}
We shall consider $\mathsf F$
and $\mathsf R$ as functions from $S(\cO)$ to $\MC$ resp.~from $H^{\MC} \times S(\cO)$ to $\cH^{\MC}$. We then
consider the following \emph{complex generalized Riccati equations}:
 \begin{subequations}
\begin{empheq}[left=\empheqlbrace]{align} 
  \,\frac{\partial {\mathsf \phi}}{\partial t}(t,u)&= \mathsf F(\psi_{2}(t,u)),\quad 0<t\leq T, \quad 
  \phi(0,u)=0, \label{eq:complex-Riccati-phi}\\
  \,\psi_{1}(t,u)&= u_{1}+\cA^{*}\int_{0}^{t}\psi_{1}(s,u)\D s, \quad 0<t\leq T, \quad \psi_{1}(0,u)=u_{1}\,,  \label{eq:complex-Riccati-psi-1}\\
  \,\frac{\partial \psi_{2}}{\partial t}(t,u)&=\sR(\psi_{1}(t,u),\psi_{2}(t,u)), \quad 0<t\leq T,\quad \psi_{2}(0,u)=0. \label{eq:complex-Riccati-psi-2}
\end{empheq}
\end{subequations}
Let $T\geq 0$, $u=(u_{1},0)$, for $u_1 \in H^{\MC}$. Analogously to the
extended generalized Riccati
equations~\eqref{eq:extended-Riccati-P}-\eqref{eq:extended-Riccati-q2}, we say
that the map $(\phi(\cdot, u), \psi_1(\cdot,u), \psi_2(\cdot, u))$ is a
solution to equations~\eqref{eq:complex-Riccati-phi}-\eqref{eq:complex-Riccati-psi-2} whenever $\phi(\cdot,u)\in C^{1}([0,T], \MC),\psi_{1}(\cdot,u)\in C([0,T],H^{\MC}),\psi_{2}(\cdot,u)\in
C^{1}([0,T],S(\cO))$ satisfies \eqref{eq:complex-Riccati-phi}-\eqref{eq:complex-Riccati-psi-2}.
\end{definition}
\subsubsection{Fourier-Laplace transforms of the pure-jump affine volatility model}
From Proposition~\ref{prop:existence-complex-solution}
and following similar derivations as in \cite[Theorem 2.26 and Section 5.3]{KM15}, we deduce the following result:
\begin{theorem}\label{prop:complex-extension}
Let $(b,B,m,\mu)$ be an admissible parameter set satisfying Assumptions~\ref{def:admissibility} and~\ref{ass:chkk-finite-variation},
let $(X,\sf Y)$ be the stochastic volatility model \eqref{eq:joint-affine-SDE}, and let $u_1 \in H^\mathbb{C}$.
Let $(\sf P, \sf q_1, \sf q_2)$ be the mild solution to the extended Riccati equations \eqref{eq:extended-Riccati-P}-\eqref{eq:extended-Riccati-q2} on $[0,T]$ with initial value $(\Re(u_1), 0)$, the existence of which is guaranteed by Proposition \ref{prop:extended-Riccati-existence}. Then 
it holds 
 $\EX{|\E^{\langle X_{t}, u_{1}\rangle_{H^{\mathbb{C}}}}|}<\infty$, and
 \begin{align*}
   \EX{\E^{\langle X_{T},u_{1}\rangle_{H^{\mathbb{C}}}}\mid \mathcal{F}_t}=\E^{-\phi(T-t,u_{1},0)+\langle
   X_t,\psi_{1}(T-t,u_{1},0)\rangle_{H^\mathbb{C}}-\langle Y_t, \psi_{2}(T-t,u_{1},0)\rangle_{\cH^\mathbb{C}}},  
 \end{align*}
 where $(\phi(\cdot,u_1,0), \psi_1(\cdot,u_1,0), \psi_2(\cdot,u_1,0))$ is the unique solution 
to~\eqref{eq:complex-Riccati-phi}-\eqref{eq:complex-Riccati-psi-2} on $[0,T]$ the existence of which is guaranteed by Proposition \ref{prop:existence-complex-solution}.
\end{theorem}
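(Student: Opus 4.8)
The plan is to run the classical two-step programme for affine processes, following \cite[Thm.~2.26 and Sec.~5.3]{KM15}: first prove the affine transform formula for \emph{real} arguments, where the candidate exponential functional is a genuine nonnegative martingale, and then extend analytically to the complex strip $S(\cO)$, using the real formula as an a priori domination bound. Throughout, the unbounded shift operator $\cA$ is handled through the mild reformulations \eqref{eq:extended-Riccati-q1} and \eqref{eq:complex-Riccati-psi-1}: writing $S(\cdot)$ for the $C_0$-semigroup generated by $\cA$, the solutions of the linear $q_1$- and $\psi_1$-equations are ${\sf q}_1(t,v)=S(t)^{*}v_1$ and $\psi_1(t,u)=S(t)^{*}u_1$, so that $\langle X_t,{\sf q}_1(T-t,v)\rangle=\langle S(T-t)X_t,v_1\rangle$ and all the time-derivatives below can be taken without ever requiring $X_t\in\dom(\cA)$.

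\emph{Step 1 (real case and the a priori moment bound).} Since $(X,\sY)$ is itself an affine process on $H\times\cH$ by \cite[Proposition~2.6 and Lemma~2.8]{CKK22b}, I would apply Itô's formula to $Z_t:=\exp(-{\sf P}(T-t,v)+\langle X_t,{\sf q}_1(T-t,v)\rangle-\langle \sY_t,{\sf q}_2(T-t,v)\rangle)$ for the real datum $v:=(\Re(u_1),0)\in H\times\cO$ (note $0\in\cHplus\subseteq\cO$), where $({\sf P},{\sf q}_1,{\sf q}_2)$ is the mild solution of the extended Riccati equations \eqref{eq:extended-Riccati-P}--\eqref{eq:extended-Riccati-q2} on $[0,T]$ supplied by Proposition~\ref{prop:extended-Riccati-existence}. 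Substituting the Riccati equations and the definitions \eqref{eq:F}--\eqref{eq:R} makes the finite-variation part of $\D Z_t$ vanish, so $Z$ is a nonnegative local martingale, hence a supermartingale; Fatou's lemma then gives $\EX{\E^{\langle X_T,\Re(u_1)\rangle}}<\infty$ (recall $Z_T=\E^{\langle X_T,\Re(u_1)\rangle}$ since ${\sf q}_2(0,v)=0$). A uniform-integrability argument, whose key input is that ${\sf q}_2(\cdot,v)$ stays inside the \emph{open} set $\cO$ on which the integrals in ${\sf F}$, ${\sf R}$ and in $M(y,\D\xi)$ are finite, upgrades $Z$ to a true martingale and turns this into the conditional identity $\EX{\E^{\langle X_T,\Re(u_1)\rangle}\mid\mathcal F_t}=Z_t$.

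\emph{Step 2 (complex extension).} For $u=(u_1,u_2)\in H^{\mathbb C}\times S(\cO)$ let $(\phi,\psi_1,\psi_2)$ be the solution of the complex Riccati equations \eqref{eq:complex-Riccati-phi}--\eqref{eq:complex-Riccati-psi-2} on $[0,T]$ furnished by Proposition~\ref{prop:existence-complex-solution} (with $\psi_2(0,u_1,0)=0$, which is why $u_2$ does not survive in the final identity), and set $N_t:=\exp(-\phi(T-t,u_1,0)+\langle X_t,\psi_1(T-t,u_1,0)\rangle-\langle \sY_t,\psi_2(T-t,u_1,0)\rangle)$. Repeating the Itô computation of Step~1 with the complex Riccati equations shows that $N$ is a \emph{complex} local martingale with $N_T=\E^{\langle X_T,u_1\rangle}$. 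The decisive point is domination: the $\psi_1$-equation being linear, $\Re\psi_1(t,u_1,0)={\sf q}_1(t,\Re(u_1))$; and a comparison argument for the nonlinear $\psi_2$-equation — showing that $\Re\,{\sf R}$ evaluated along the complex flow dominates the real Riccati field, in the spirit of \cite[Sec.~5.3]{KM15} — yields $\Re\psi_2(t,u_1,0)\geq_{\cHplus}{\sf q}_2(t,v)$ and $\Re\phi(t,u_1,0)\geq{\sf P}(t,v)$. Since $\sY_t\in\cHplus$, it follows that $|N_t|\leq Z_t$ pointwise, where $Z$ is the uniformly integrable martingale from Step~1; hence $N$ is a true martingale. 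In particular $\EX{|\E^{\langle X_t,u_1\rangle}|}<\infty$ for every $t\le T$ (for $t=T$ this is $\EX{|N_T|}\le\EX{Z_T}<\infty$, and the general case is Step~1 with horizon $t$), and the martingale identity $\EX{N_T\mid\mathcal F_t}=N_t$ is precisely the asserted conditional Fourier–Laplace formula.

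\emph{Main obstacle.} The hard part is the comparison/domination step in infinite dimensions. In the finite-dimensional setting of \cite{KM15} the control of the real part of the complex Riccati flow rests on the explicit maximal-domain description of the moment generating function together with convexity of the real Riccati field; transplanting this to the operator-valued, Banach-space equation \eqref{eq:complex-Riccati-psi-2} forces one to work with the analytic semiflow machinery of \cite{Mar76} already invoked in Proposition~\ref{prop:existence-complex-solution}, and to verify that $S(\cO)$ is forward-invariant under the complex flow, that $\Re\psi_2$ does not exit $\cO$, and that no blow-up occurs before $T$ — all of which ultimately rely on the admissibility and finite-variation Assumptions~\ref{def:admissibility} and~\ref{ass:chkk-finite-variation}. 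A secondary technical point is the rigorous justification of Itô's formula for the complex exponential of the mild solution $(X_t,\sY_t)$: the stochastic convolution coming from $S(\cdot)$ and the compensated jump integral against $M(\sY_t,\D\xi)$ must be controlled, which I would handle by localization, by replacing $\cA$ with its Yosida approximants and passing to the limit, and by using the square-integrability of $(J_t)_{t\ge 0}$ together with the second-moment bound in Assumption~\ref{def:admissibility}(i)(a).
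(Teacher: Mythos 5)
Your proposal is correct and follows essentially the same route as the paper: the authors also obtain the real affine transform formula first (Proposition~\ref{prop:extended-affine-Z}, via Itô's formula along Yosida approximants, the local-martingale/supermartingale argument and a passage to the limit), and then extend to the complex strip exactly as you describe, using the Volkmann-type comparison $q_2(t,\Re(u_1),0)\leq_{\cHplus}\Re\psi_2(t,u_1,0)$ established in Proposition~\ref{prop:existence-complex-solution} together with the domination-by-the-real-martingale derivation of \cite[Theorem~2.26, Section~5.3]{KM15}. No gaps beyond the technical points you already flag, which are the ones the paper delegates to Section~\ref{sec:complex-moments} and to \cite{KM15}.
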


\subsection{The infinite-dimensional Wishart stochastic
volatility model}\label{sec:OU-Wishart}

As in the case of the pure-jump stochastic volatility model, we start by introducing the instantaneous covariance process that defines a Gaussian infinite-dimensional stochastic volatility model.

\subsubsection{Infinite-dimensional Wishart process}
Let $\mathbb{A} \in \cL(H)$, let $Q \in \cS^+(H)$, let $n \in \mathbb{N}$, and let $y \in \cS_1^+(H)$ be of rank at most $n$. Moreover, assume that  
$$\int_0^t \|\E^{s\mathbb{A}}\sqrt{Q}\|^2_{2} \, \D s< \infty\,,$$
for all $t\geq 0$. Consider the following stochastic differential equation 
\begin{align}\label{eq:infinite-dimensional-wishart}
\D \cY_t &= n Q\, \D t+ \cY_t \mathbb{A} \D t + \mathbb{A}^* \cY_t\, \D t +\sqrt{\cY}_t\, \D B_t\sqrt{Q} + \sqrt{Q} \D B_t^*\sqrt{\cY}_t\,, \quad t\geq 0\,,\nonumber\\
\cY_0 &= y\,,
\end{align}
where $(B_t)_{t\geq0}$ is an $\cL_2(H)$-cylindrical Brownian motion.
As shown in \cite[Theorem 2.1]{cox2024infinite}, equation~\eqref{eq:infinite-dimensional-wishart} admits a probabilistically and analytically weak solution. In fact, the existence result in \cite[Theorem 2.1]{cox2024infinite} holds more generally when $\mathbb{A}$ is an unbounded operator generating a strongly-continuous semigroup.

Furthermore, \cite[Proposition 4.7 and Corollary 4.11]{cox2024infinite} provide a characterization of finite-rank $\cS_1^+(H)$-valued Wishart processes. In this paper, however, we restrict our attention to the case of a bounded operator $\mathbb{A}$, since under this assumption equation~\eqref{eq:infinite-dimensional-wishart} can naturally serve as a model for the instantaneous covariance process as we argue next.

\subsubsection{The infinite-dimensional Wishart stochastic volatility model}

Let $n \in \mathbb{N}$, let $y \in \cS_1^+(H)$ be of rank at most $n$, and let $(W_t)_{t \geq 0}$ be an $H$-cylindrical Brownian motion.  
Let $x, \Upsilon \in H$ and $D \in \cS_1^+(H)$.

In this second model, the instantaneous covariance process is given by the square root of the Wishart process~\eqref{eq:infinite-dimensional-wishart}.  
For parameters $(\mathbb{A}, Q, D, \Upsilon, \cA)$ and initial value $(x, y)$, we consider the coupled SDE
\begin{align}\label{eq:stoc-vol-1}
\begin{cases}
\D X_{t} = \big( \cA X_{t} + D^{1/2} \cY_t D^{1/2} \Upsilon \big) \,\D t
           + D^{1/2} \cY_t^{1/2} \,\D W_{t}, \\[4pt]
\D \cY_t = \big( n Q + \cY_t \mathbb{A}+ \mathbb{A}^* \cY_t \big) \,\D t
           + \sqrt{\cY_t} \,\D B_t \sqrt{Q}
           + \sqrt{Q} \,\D B_t^* \sqrt{\cY_t}, 
           \quad t \geq 0 ,
\end{cases}
\end{align}
where $(W_t)_{t \geq 0}$ and $(B_t)_{t \geq 0}$ are independent.  
The existence of a (stochastically) weak solution to~\eqref{eq:stoc-vol-1} in $H \times \cS_1^+(H)$ follows from \cite[Lemma~2.8]{CKK22b} and \cite[Theorem~2.1]{cox2024infinite}. Similarly to the pure-jump stochastic volatility model, we include the linear term $D^{1/2}\cY_t D^{1/2}\Upsilon$ in the drift of $(X_t)_{t\geq0}$, as motivated by the no-arbitrage drift condition; see Section~\ref{sec:appl-math-finance}.

We refer to~\eqref{eq:stoc-vol-1} as the \emph{infinite-dimensional Wishart stochastic volatility model}.

\subsubsection{Riccati equations associated with the infinite-dimensional Wishart stochastic volatility model} 

Let $n \in \mathbb{N}$, $Q, D \in \cS_1^+(H)$, $\Upsilon \in H$, and $\mathbb{A} \in \cL(H)$.  
Define $F \colon \cL_1(H) \to \mathbb{R}$ and $R \colon H \times \cL_1(H) \to \cL_1(H)$ by
\begin{align}
F(u) &= n \, \tr(Q u), \label{eq:F-Wishart}\\
R(h,u) &= u \mathbb{A}^* + \mathbb{A} u - \tfrac{1}{2} \big(D^{1/2} h\big)^{\otimes 2}
          - D^{1/2} h \otimes D^{1/2} \Upsilon \nonumber \\
       &\quad - \tfrac{1}{2} \big(u + u^T \big) Q \big(u + u^T \big). \label{eq:R-Wishart}
\end{align}

As in the affine pure-jump case, the functions $F$ and $R$ give rise to a system of operator-valued Riccati-type equations.  
Because of the mixed term $D^{1/2}h\otimes D^{1/2}\Upsilon$, the Riccati vector field $R(h,\cdot)$ does not preserve self-adjointness in general. Accordingly, the Riccati variable is matrix-valued, even though the Wishart state process $(\cY_t)_{t\ge0}$ itself remains in the cone of self-adjoint positive operators.
Since the operators $(\mathcal{Y}_t)_{t\geq0}$ of the Wishart process are finite rank in the present setup once the initial condition has rank at most $n$, we adopt a numerically convenient simplification:  
we approximate the model by a matrix-valued Wishart process whose rank remains constant over time.  
This reduces the infinite-dimensional problem in the covariance component to a more tractable finite-dimensional one while preserving the maturity-space structure of the forward curve model.

Let $(H_n)_{n \in \mathbb{N}}$ denote an increasing sequence of finite-dimensional subspaces such that $\overline{\bigcup_{n} H_n} = H$. Let $\{e_1, \ldots, e_n\}$ be an orthonormal basis of $H_n$ and define the orthogonal projection $\Pi_n: H \to H_n$. We denote by $\mathcal{L}_n := \mathcal{L}(H_n) \cong \mathbb{R}^{n \times n}$ the space of bounded linear operators on $H_n$ and by $\mathcal{S}_n$ the space of real symmetric $n \times n$ matrices. Define the finite-rank counterparts $F_n \colon \mathcal{L}_n \to \mathbb{R}$ and $R_n \colon H \times \mathcal{L}_n \to \mathcal{L}_n$ of the nonlinear maps $F$ and $R$ introduced in \eqref{eq:F-Wishart} and \eqref{eq:R-Wishart} via
\begin{equation}\label{Fn-Rn}
F_n(X) := F(\iota_n X \iota_n^*), \qquad R_n(y, X) := \Pi_n R(y, \iota_n X \iota_n^*) \Pi_n^*,
\end{equation}
where $\iota_n: \mathbb{R}^n \hookrightarrow H$ is the canonical injection.
\begin{definition}\label{def:finite-rank-riccati}
Let $F_n$ and $R_n$ be as in \eqref{Fn-Rn}. Then, for $u = (u_1, u_2) \in H \times \mathcal{L}_n$ and $T \geq 0$, the \emph{finite-rank Riccati system} is given by the system of ordinary differential equations
\begin{subequations}\label{eq:riccati-finite-Wishart}
\begin{empheq}[left=\empheqlbrace]{align} 
  \frac{\partial P_n}{\partial t}(t,u) &= F_n(q_{2,n}(t,u)), \quad && P_n(0,u) = 0, \label{eq:riccati-finite-phi}\\
  q_{1}(t,u) &= u_1 + \cA^* \int_0^t q_{1}(s,u) \, \D s, \quad && q_{1}(0,u) = u_1, \label{eq:riccati-finite-psi-1}\\
  \frac{\partial q_{2,n}}{\partial t}(t,u) &= R_n(q_{1}(t,u), q_{2,n}(t,u)), \quad && q_{2,n}(0,u) = u_2\,. \label{eq:riccati-finite-psi-2}
\end{empheq}
\end{subequations}
We call $(P_n, q_{1}, q_{2,n})$ a \emph{solution} to \eqref{eq:riccati-finite-Wishart} if $P_n \in C^1([0,T], \mathbb{R})$, $q_{1} \in C^1([0,T], H)$, and $q_{2,n} \in C^1([0,T], \mathcal{L}_n)$ solve \eqref{eq:riccati-finite-phi}--\eqref{eq:riccati-finite-psi-2}.
\end{definition}

\begin{definition}\label{def:complex-riccati-Wishart}
Denote by $\cL_n^{\mathbb{C}}= \cL_n \oplus \mathrm{i}\cL_n$. Let $F_n$ be the analytic extension to $\cL_n^{\mathbb{C}}$ of the function $F_n$ defined in \eqref{Fn-Rn} and 
$R_n$ be the analytic extension to $H^\MC \times \cL_n^{\mathbb{C}}$ of the function $R_n$ defined in \eqref{Fn-Rn}. Consider the following system of ordinary differential equations
\begin{subequations}\label{eq:riccati-Wishart2}
\begin{empheq}[left=\empheqlbrace]{align} 
  \,\frac{\partial {\eta_n}}{\partial t}(t,u)&=F_n(\zeta_{2,n}(t,u)),\quad 0<t\leq T, \quad 
  \eta_n(0,u)=0, \label{eq:complex-Riccati-phi-Wishart}\\
  \,\zeta_{1}(t,u)&= u_{1}+\cA^{*}\int_{0}^{t}\zeta_{1}(s,u)\D s, \quad 0<t\leq T, \quad \zeta_{1}(0,u)=u_{1}\,,  \label{eq:complex-Riccati-psi-1-Wishart}\\
  \,\frac{\partial \zeta_{2,n}}{\partial t}(t,u)&=R_{n}(\zeta_{1}(t,u),\zeta_{2,n}(t,u)), \quad 0<t\leq T,\quad \zeta_{2,n}(0,u)=u_2. \label{eq:complex-Riccati-psi-2-Wishart}
\end{empheq}
\end{subequations}
For $u=(u_{1},u_{2})\in H^\mathbb{C}\times \cL_n^\MC$ and $T\geq 0$, we say that \sloppy $(\eta_n(\cdot, u), \zeta_{1}(\cdot,u), \zeta_{2,n}(\cdot, u)) \colon  [0,T] \rightarrow \mathbb{R}\times H^\mathbb{C}\times \cL_n^\MC$ is a mild solution to \eqref{eq:complex-Riccati-phi-Wishart}-\eqref{eq:complex-Riccati-psi-2-Wishart} if 
    $\eta_n(\cdot,u)\in C^{1}([0,T],\MR)$, $\zeta_{1}(\cdot,u)\in C^1([0,T],H^\mathbb{C})$ and $\zeta_{2,n}(\cdot,u)\in
    C^{1}([0,T], \cL_n^{\mathbb{C}})$ satisfies \eqref{eq:complex-Riccati-phi-Wishart}-\eqref{eq:complex-Riccati-psi-2-Wishart}.
    Then we call $(\eta_n,\zeta_1,\zeta_{2,n})$ a solution (up to time $T$ and with starting point $u$) of \emph{the complex Riccati} system.
\end{definition}

Next, we state an existence result for the complex Riccati system \eqref{eq:riccati-Wishart2}. 
The proof follows the same lines as Proposition~\ref{prop:existence-complex-solution}; 
see also \cite[Proposition~5.1]{Kel10} for the case $D=S_d^+$, $d\geq 2$.

\begin{theorem}\label{thm:extended-affine-Z}
Let $T>0$ and $n\in\mathbb{N}$. Let $Q,D\in\cS_1^+(H)$, $\Upsilon\in H$, and $\mathbb{A}\in\cL(H)$. 
Let $(\cA,\dom(\cA))$ be the generator of a strongly continuous semigroup $(S_t)_{t\ge0}$ on $H$, 
and $(\cA^*,\dom(\cA^*))$ its adjoint. 
Let $u=(u_1,u_2)\in H^\mathbb{C}\times\cL_n^{\mathbb{C}}$ and suppose that the \emph{real} Riccati system
\eqref{eq:riccati-finite-phi}--\eqref{eq:riccati-finite-psi-2} with initial data $\Re(u)=(\Re(u_1),\Re(u_2))$ 
admits a (unique) mild solution $(P_n,q_{1,n},q_{2,n})$ on $[0,T]$, as guaranteed by 
Proposition~\ref{prop:extended-Riccati-existence-Wishart}. Then the \emph{complex} Riccati system 
\eqref{eq:complex-Riccati-phi-Wishart}--\eqref{eq:complex-Riccati-psi-2-Wishart} with initial data $u$ 
admits a (necessarily unique) mild solution $(\eta_n,\zeta_{1,n},\zeta_{2,n})$ on $[0,T]$.
\end{theorem}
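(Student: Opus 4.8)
The plan is to reduce the complex Riccati system to a single finite-dimensional quadratic ODE and then exclude finite-time blow-up before $T$ by comparison with the real solution supplied by the hypothesis. First I would dispose of the two easy components. Equation~\eqref{eq:complex-Riccati-psi-1-Wishart} is linear and decouples from the rest, so its unique mild solution is $\zeta_1(t,u)=S^{*}_t u_1$, where $(S^{*}_t)_{t\ge0}$ is the strongly continuous semigroup generated by $\cA^{*}$; this is defined for every $t\ge0$, is continuous, and satisfies $\sup_{t\le T}\|\zeta_1(t,u)\|_{H^{\mathbb{C}}}<\infty$. Since $\cA$ — hence $\cA^{*}$ and each $S^{*}_t$ — maps $H$ into $H$, one also has $\Re\zeta_1(\cdot,u)=q_1(\cdot,\Re u)$, the first component of the real system with initial data $\Re u$. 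Similarly, $\eta_n$ is recovered from $\zeta_{2,n}$ by a quadrature, $\eta_n(t,u)=\int_0^t F_n(\zeta_{2,n}(s,u))\,\D s$. Everything therefore reduces to the initial value problem~\eqref{eq:complex-Riccati-psi-2-Wishart} for $\zeta_{2,n}$ in the finite-dimensional space $\cS_n^{\mathbb{C}}$, which is of the form $\dot\zeta_{2,n}(t)=G(t,\zeta_{2,n}(t))$ with $G(t,\cdot)$ a quadratic (hence entire, locally Lipschitz) polynomial and $t\mapsto G(t,\cdot)$ continuous, built from $\zeta_1$. By Picard--Lindelöf there is a unique maximal solution on some $[0,t_{+})$ — uniqueness of the complex solution being immediate from the local Lipschitz property — and by the blow-up alternative either $t_{+}>T$, and we are done, or $\|\zeta_{2,n}(t,u)\|\to\infty$ as $t\uparrow t_{+}\le T$.

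To rule out the latter I would first obtain a one-sided comparison bound on the real part. Write $\zeta_{2,n}=a+\mathrm{i}b$ with $a,b\in\cS_n$. Expanding $R_n$, the real part of the quadratic term $-\tfrac12(u+u^{T})Q(u+u^{T})$ at $u=a+\mathrm{i}b$ equals the same term at $a$ plus a positive-semidefinite term quadratic in $b$, while the real parts of the $h$-dependent terms equal the corresponding real Riccati terms at $\Re\zeta_1(t)=S^{*}_t\Re u_1$ plus the explicit positive-semidefinite quantity (up to the projection $\Pi_n$) $\tfrac12\big(D^{1/2}S^{*}_t\Im u_1\big)^{\otimes2}$. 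Hence $a=\Re\zeta_{2,n}$ solves the \emph{real} Riccati field $R_n(q_1(\cdot,\Re u),\cdot)$ perturbed by a $\cS_n^{+}$-valued forcing, started at $\Re u_2=q_{2,n}(0,\Re u)$. Using the quasimonotonicity of $R_n$ with respect to the cone $\cS_n^{+}$ (part of the admissibility structure already exploited in Proposition~\ref{prop:extended-Riccati-existence-Wishart}) together with the standard comparison principle for ODEs on a proper cone, this gives $q_{2,n}(t)\le_{\cS_n^{+}}\Re\zeta_{2,n}(t)$ for every $t\in[0,t_{+})$. Since the real solution $q_{2,n}$ exists and is bounded on all of $[0,T]\supseteq[0,t_{+})$, the real part $a$ is uniformly bounded from below on $[0,t_{+})$.

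It then remains to upgrade this to a genuine a priori bound on $\|\zeta_{2,n}\|$ over $[0,t_{+})$. Because $a$ is bounded below, a blow-up can occur only through $\Tr(a(t))\to+\infty$ or $\|b(t)\|\to\infty$; the equation for $b$ is \emph{linear} in $b$ with coefficients controlled by $a$, and along the scalar quantities $\Tr(a)$ and $\|b\|^{2}$ one can exploit the dissipative term $-\Tr(Q a^{2})\le0$ coming from $Q\in\cS_1^{+}(H)$ to close a Gronwall estimate. This contradicts blow-up, so $t_{+}>T$; the triple $(\eta_n,\zeta_1,\zeta_{2,n})$, with $\eta_n$ obtained by quadrature as above, is then the desired (necessarily unique) mild solution on $[0,T]$. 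I expect this last step to be the main obstacle: the forcing driving $a$ still involves $b$, and the equation for $b$ involves $a$, so the real/imaginary coupling has to be closed carefully, leaning precisely on the dissipative structure of the Wishart Riccati field and on the comparison with the real solution — exactly as in the proof of Proposition~\ref{prop:existence-complex-solution} and in~\cite[Proposition~5.1]{Kel10}.
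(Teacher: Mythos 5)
Your reduction to the finite-dimensional ODE for $\zeta_{2,n}$, the formula $\zeta_1(t,u)=S^*_tu_1$, the quadrature for $\eta_n$, and the Volkmann-type comparison $q_{2,n}(t)\leq_{\cS_n^+}\Re\zeta_{2,n}(t)$ (obtained by exhibiting the real part as a supersolution of the real Riccati field with the nonnegative forcing $2\,\Im\zeta_{2,n}\,Q\,\Im\zeta_{2,n}+\tfrac12(D^{1/2}S^*_t\Im u_1)^{\otimes 2}$, up to the projections) are exactly the route the paper intends, which simply defers to the proof of Proposition~\ref{prop:existence-complex-solution}. The genuine gap is the final blow-up exclusion, and the closure you sketch does not work as stated. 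Write $a=\Re\zeta_{2,n}$, $b=\Im\zeta_{2,n}$, $\alpha=\Tr(a)$ (comparable to $\|a\|$ thanks to the lower cone bound) and $\beta=\|b\|^2$. The $b$-equation is indeed linear in $b$, but its coefficient involves $a$: pairing $-2(aQb+bQa)$ with $b$ produces $-4\Tr(aQb^2)$, and the symmetrized product of $Q$ with $b^2$ need not be positive semidefinite (already for $2\times2$ matrices with $Q$ of rank one), so this term can be of size $\|a\|\,\|b\|^2$ with the unfavourable sign. Using $-\Tr(Qa^2)\le 0$ only through its sign, the best you obtain is $\dot\alpha\le C(1+\alpha+\beta)$ and $\dot\beta\le C(1+\alpha)\beta$, and such differential inequalities do not preclude finite-time blow-up (the model system $\dot\alpha=\beta$, $\dot\beta=\alpha\beta$ explodes); even a quantitative use of the dissipation, say $\Tr(Qa^2)\gtrsim\alpha^2$ when $Q>0$ on $H_n$, only caps $\alpha\lesssim\sqrt{\beta}$ and leaves $\dot\beta\lesssim\beta^{3/2}$, again compatible with explosion. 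So no contradiction with $t_+\le T$ is reached, and the theorem is not proved.

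The reason this is not a routine repair is that the intermediate step of the paper's template does not transfer. In the proof of Proposition~\ref{prop:existence-complex-solution} one first bounds $\|\Re\psi_2\|$ via \eqref{eq:existence-complex-solution-2}: there the nonlinearity is an order-reversing exponential, so the one-sided bound $\Re\psi_2\geq_{\cHplus}q_2$ makes the nonlinear term bounded and the estimate linear in $\|\Re\psi_2\|$; only afterwards does the growth inequality of Lemma~\ref{lem:growth-inequality}, whose coefficient $g$ depends on the now-bounded real part, close the Gronwall bound for the full complex norm. For the Wishart field the analogue of Lemma~\ref{lem:growth-inequality} does hold (the cubic terms give $\Re\langle\bar u,R_n(h,u)\rangle\le C(1+\|\Re u\|^3)(1+\|u\|^2)(1+\|h\|^2)$), but it is only useful once $\|\Re\zeta_{2,n}\|$ is bounded from above, and the one-sided comparison cannot deliver this because the real-part equation is forced by $+2bQb$, i.e.\ by the very quantity you are trying to control. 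You therefore need an additional ingredient: either a genuine two-sided control of $\Re\zeta_{2,n}$, or a structural argument special to the matrix Riccati flow — e.g.\ its linearization (Radon's lemma), by which existence for complex data with $\Re u$ in the real existence domain reduces to non-degeneracy of a fundamental solution, which is the mechanism behind \cite[Proposition~5.1]{Kel10}, the reference the paper cites precisely for this step.
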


\subsubsection{Approximation of the Laplace transform of the infinite-dimensional Wishart stochastic volatility model} 

Our primary focus in this paper is the numerical implementation of option pricing in a setting where the forward price or rate follows an infinite-dimensional stochastic differential equation. To enable computations, we approximate the model by finite-dimensional systems.  

For the Wishart stochastic volatility model, we do not perform a detailed convergence analysis of these approximations. Such an analysis, while important for theoretical completeness, would introduce significant technical complexity and is therefore left for future work.  
Our objective here is to demonstrate the practical feasibility and computational tractability of the method. For comparison, the case in which the covariance process is driven by jumps has been analyzed in~\cite{Kar24,Kar25}.

To keep the presentation concise and highlight the numerical methodology, we approximate the Laplace transform of the infinite-dimensional Wishart stochastic volatility model directly.  
Let $(X_t, \cY_t)_{t \geq 0}$ be as in~\eqref{eq:stoc-vol-1}, and let $(\eta_n, \zeta_1, \zeta_{2,n})$ be the solution to~\eqref{eq:riccati-Wishart2} with initial condition $(u_1, u_2) \in H^{\mathbb{C}} \times \cL_n^{\mathbb{C}}$, whose existence is guaranteed by Theorem~\ref{thm:extended-affine-Z}.  

In our option pricing derivations in the next section, we use the following finite-rank approximation to the Laplace transform:
\begin{align}\label{eq:Wishart-price-approx}
&\mathbb{E}\left[
    \E^{\langle X_{T}, u_{1} \rangle_{H^\mathbb{C}} - \tr\big( \cY_{T} \, \iota_n u_{2} \iota_n^* \big)}
    \,\middle|\, \mathcal{F}_t
\right] \nonumber\\
&\qquad\approx
\exp\Big(
    - \eta_n(T-t, u)
    + \langle X_t, \zeta_{1}(T-t, u) \rangle_{H}
    - \tr\big( \cY_t \, \iota_n \zeta_{2,n}(T-t, u) \iota_n^* \big)
\Big).
\end{align}
The approximation in~\eqref{eq:Wishart-price-approx} is the object used in the numerical section below. For general infinite-dimensional data we do not prove convergence of the right-hand side as $n\to\infty$. 

\section{Pricing options on forwards in infinite-dimensional stochastic volatility models}\label{sec:appl-math-finance}

We begin with a brief introduction to the modeling of commodity forward
curves in the Heath-Jarrow-Morton framework, here we closely follow
\cite{BK14} and connect our two infinite-dimensional stochastic volatility models $(X_t, \cY_t)_{t\geq 0}$ and $(X_t, {\sf Y}_t)_{t\geq 0}$ to this framework.\par

Suppose $w\colon \R^+ \rightarrow  [1, \infty)$ is a non-decreasing function such that $w(0) = 1$ and $\int_{0}^{\infty}w^{-1}(x)\,\D
 x<\infty$. Let $H_w$ be the space of
absolutely continuous functions $f: \R^+\rightarrow \R$ such that  
 \begin{align*}
  \norm{f}_{w}^{2}:=f(0)^{2}+\int_{0}^{\infty}w(x)|f'(x)|^{2}\,\D x < \infty\,,
\end{align*}
where $f'$ is the weak derivative of $f$. The space $H_w$ is a separable Hilbert space when endowed with the inner product 
\begin{align*}
 \langle f,g\rangle_w := f(0)g(0) +\int_{0}^{\infty}w(x)f'(x)g'(x)\,\D x.
\end{align*}
This space is proposed in \cite{Fil01} as a class of Hilbert spaces that match the economic reasoning about the forward rate curve in fixed income markets (see also~\cite{BK14} for the case of commodity forward markets).

The left-shift semigroup is a strongly continuous semigroup on $H_w$ with infinitesimal generator $\cA$ being the operator of differentiation (in time-to-maturity coordinate). 
Moreover, the point evaluation $\delta_x(h) = h(x)$ is a continuous linear functional on $H_{w}$, for all $x \in \R^+$. 
It can be expressed as (see \cite[Lemma 5.3.1]{Fil01})
 \begin{align}\label{eq:insert-element}
 \delta_{x}=\langle \cdot, u_{x}\rangle_{w}\,,  
 \end{align}
 where for $x \in \R^+$, $u_x: \R^+ \rightarrow \R$, $y \mapsto 1+\int_{0}^{x\wedge y}w^{-1}(z)\,\D z$.\par
In particular, setting $h_0:=u_0\equiv 1\in H_w$, we have $u_x=S^*(x)h_0$ for every $x\ge0$.

Let $F(t,\tau)$ be the forward price at time $t>0$ for delivery at $\tau>t$.
The log–forward in the Musiela parametrisation (time to maturity $x=\tau-t$) is
$$
  f_t(x) := \ln F(t,t+x), \qquad x \in \R_+.
$$
Fix a weight function $w$ and assume the state process $(X_t)_{t\ge0}$ satisfies
\begin{align}\label{eq:X-under-P}
  \D X_t
  = \big( \cA X_t + C_t \big)\,\D t
    + D^{1/2} Y_t^{1/2}\, \D W_t, 
  \qquad X_0 = x \in H_w,
\end{align}
where $\cA=\partial_x$ is the derivative (shift) operator, $C$ is an integrable adapted $H_w$–valued process, 
$D\in\cL(H_w)$ is self-adjoint and positive, $(W_t)_{t\ge0}$ is an $H_w$-cylindrical Brownian motion, and $(Y_t)_{t\ge0}$
is either $(\cY_t)_{t\ge0}$ from~\eqref{eq:infinite-dimensional-wishart} or $({\sf Y}_t)_{t\ge0}$ from~\eqref{eq:process-sY}.
On a filtered probability space $(\Omega,\cF,(\cF_t)_{0\le t\le T},\MP)$, we model the (arbitrage-free) forward curve via
\begin{equation}\label{dynamics-f}
  F(t,t+x) := \exp\{\delta_x(X_t)\}
             = \exp\{\langle X_t, u_x\rangle_w\}, 
  \qquad 0\le t\le T,
\end{equation}
where $u_x$ is the Riesz representer of $\delta_x$ (see~\eqref{eq:insert-element}). Assume $\xi$ is an $H_w$–valued adapted process such that
\begin{align}\label{eq:girsanov-condition}
  \mathbb{E}\left[\exp\Big\{-\int_0^T \langle \xi_s, \D W_s\rangle_w
                 - \tfrac12 \int_0^T \|\xi_s\|_w^2\,\D s\Big\}\right]=1.
\end{align}
By Girsanov’s theorem for cylindrical Brownian motion (e.g., \cite[Thm.~13]{da2013strong}), 
$$
  \widetilde W_t := W_t - \int_0^t \xi_s\,\D s
$$
is a cylindrical Brownian motion under $\MQ$ with density
$$
  \left.\frac{\D\MQ}{\D\MP}\right|_{\cF_t}
  = \exp\Big\{-\int_0^t \langle \xi_s, \D W_s\rangle_w
                 - \tfrac12 \int_0^t \|\xi_s\|_w^2\,\D s\Big\}, \qquad 0\le t\le T.
$$
Under $\MQ$, the process $(X_{t})_{t\geq 0}$ solves
\begin{equation}\label{eq:X-under-Q}
  \D X_t
  = \big(\cA X_t + C_t - D^{1/2} Y_t^{1/2}\xi_t\big)\,\D t
    + D^{1/2} Y_t^{1/2}\,\D \widetilde W_t, \qquad 0\le t\le T.
\end{equation}
The following result gives a drift condition ensuring the martingale property of $t \mapsto F(t,\tau)$, for $t\leq \tau$. The proof follows the same lines as \cite[Proposition 6.1]{zbMATH07927020}
\begin{lemma}\label{lem:drift-condition}
Let $\xi$ satisfy \eqref{eq:girsanov-condition}. Let $\cA$, $C$, $D$, and $(Y_t)_{t\ge0}$ be as above, and 
let $S(t)$ be the left–shift semigroup generated by $\cA$. Recall that $u_x=S^*(x)h_0$ with $h_0\equiv 1$. Suppose that, for all $0\le t\le \tau\le T$,
\begin{equation}\label{eq:drift-condition}
  C_t - D^{1/2} Y_t^{1/2}\xi_t
  = -\tfrac12\, D^{1/2} Y_t D^{1/2} \, u_{\tau-t}.
\end{equation}
Then, for each fixed $\tau$, the process $t\mapsto F(t,\tau)=\exp\{\langle X_t,u_{\tau-t}\rangle_w\}$ is a (local) $\MQ$–martingale on $[0,\tau]$.
\end{lemma}

From now on we impose the no–arbitrage drift condition \eqref{eq:drift-condition} for a fixed delivery lag $\vartheta$. 
Equivalently, in the coupled SV dynamics \eqref{eq:stoc-vol-1} or \eqref{eq:joint-affine-SDE} used below to price claims with delivery lag $\vartheta$ under $\MQ$ we choose
\begin{equation}\label{eq:upsilon}
  \Upsilon^\vartheta := -\tfrac12\, u_\vartheta
  = -\tfrac12\, S^*(\vartheta)h_0, 
  \qquad \vartheta\ge0,
\end{equation}
Indeed, with this parametrization the model drift satisfies 
$C_t - D^{1/2}Y_t^{1/2}\xi_t = D^{1/2}Y_t D^{1/2}\Upsilon^\vartheta$,
and for a claim with delivery lag $\vartheta=\tau-t$ this equals 
$-\tfrac12 D^{1/2}Y_t D^{1/2} S^*(\tau-t)h_0$, which is exactly \eqref{eq:drift-condition}.
Moreover, $\vartheta\mapsto D^{1/2}S^*(\vartheta)h_0 \in H_w$ (see \cite[Lem.~6.2]{zbMATH07927020}), 
so $\Upsilon^\vartheta\in H_w$ is well defined.
Accordingly, all pricing results below are formulated under a risk-neutral specification for a fixed delivery lag $\vartheta$. This is sufficient for pricing the corresponding family of forward options, but we do not claim a single maturity-uniform HJMM drift specification for all delivery lags simultaneously.

In pricing, all quantities depend on the delivery lag $\vartheta:=T_1-T_0$ through the evaluation functional $u_\vartheta=S^*(\vartheta)h_0$
and the choice $\Upsilon^\vartheta$ in \eqref{eq:upsilon}. We therefore write the Riccati solutions as
$(\eta^\vartheta,\zeta^\vartheta_1,\zeta^\vartheta_2)$ for the Wishart specification and
$(\phi^\vartheta,\psi^\vartheta_1,\psi^\vartheta_2)$ for the pure-jump specification. 
For an option written at time $t$ on the forward delivering at $T_1$ and set at $T_0$ we have
\begin{equation}\label{eq:forward-rates}
  F(T_0,T_1) 
  = \exp\{\langle X_{T_0}, u_{T_1-T_0}\rangle_w\}
  =: F_{T_0}^{(\vartheta)}, 
  \qquad 0\le T_0\le T_1\le T,
\end{equation}
a standard payoff in commodity markets such as oil, metals, and agriculture, see \cite{benth2012modeling}. 
In the next proposition, we derive semi-explicit call prices via the Fourier transform of the payoff and the
corresponding extended Riccati equations for our stochastic volatility models.

\begin{proposition}[Call option pricing in the pure-jump affine SV model]\label{prop:price-option}
Let $T>0$, $0\le T_{0}\le T_{1}\le T$, and set $\vartheta := T_{1}-T_{0}$.
Assume $(X^{\vartheta}_t)_{t\ge 0}$ is as in \eqref{eq:X-under-P}, with $\Upsilon = -\tfrac{1}{2}u_\vartheta$,
and let the forward price $F$ be defined by \eqref{eq:forward-rates}.
Write $\mathbb{E}_{\mathbb{Q}}$ for expectation under $\mathbb{Q}$, and define $g:\mathbb{R}\to\mathbb{C}$ by
\begin{equation}\label{eq:chkk-fourrier-payoff}
  g(\lambda) := \frac{K^{-(\nu -1+\mathrm{i}\lambda)}}{(\nu +\mathrm{i}\lambda)(\nu-1 +\mathrm{i}\lambda)}\,,
  \qquad \nu>1,\ \lambda\in\mathbb{R}.
\end{equation}
Set $u := (\nu + \mathrm{i}\lambda)\,u_{\vartheta}$. Assume the covariance process is $Y_t=\mathsf{Y}_t$ as in \eqref{eq:process-sY}, and let $(\sf P, \sf q_1, \sf q_2)$ be the mild solution to the extended Riccati equations \eqref{eq:extended-Riccati-P}--\eqref{eq:extended-Riccati-q2} on $[0,T]$ with initial value $(\Re(u), 0)$, the existence of which is guaranteed by Proposition~\ref{prop:extended-Riccati-existence}.
Assume in addition that the Fourier integrand in \eqref{eq:chkk-call-option-price} belongs to $L^1(\mathbb R,\D\lambda)$ almost surely.
Then for $0\leq t\leq T_0$ the time-$t$ price of the call with maturity $T_0$ and strike $K>0$ written on $F$ is
\begin{align}\label{eq:chkk-call-option-price}
\mathbb{E}_{\mathbb{Q}}\big[(F(T_0,T_1)-K)^+ \mid \mathcal{F}_t\big]
= \frac{1}{2\pi}\int_{\mathbb{R}}
g(\lambda)\,
\exp\Big(
& -\phi^{\vartheta}(T_0-t,u,0)
+ \langle \psi_{1}^{\vartheta}(T_0-t,u,0), X_{t}^\vartheta\rangle_{H_w^{\mathbb{C}}} \nonumber\\
&\hspace{1.6cm}
- \langle \psi_{2}^{\vartheta}(T_0-t,u,0), \mathsf{Y}_{t}\rangle_{\mathcal{H}^{\mathbb{C}}}
\Big)\,\D\lambda\,,
\end{align}
where $(\phi^{\vartheta},\psi_1^{\vartheta},\psi_2^{\vartheta})$ denotes the solution to the generalized Riccati equations~\eqref{eq:complex-Riccati-phi}--\eqref{eq:complex-Riccati-psi-2} on $[0,T]$ with initial data $(u,0)$.
\end{proposition}
\begin{proof}
Standard Fourier techniques for option pricing (see, e.g., \cite{eberlein2010analysis} or \cite[Theorem 10.6]{filipovic2009term}) yield
\begin{align*}
\mathbb{E}_{\mathbb{Q}}[(F(T_0,T_1) -K)^+|\mathcal{F}_t] = \frac{1}{2\pi} \int_{\R}  g(\lambda)
\mathbb{E}_{\mathbb{Q}}[\E^{\langle X^\vartheta_{T_0}, (\nu+\mathrm{i}\lambda)u_{\vartheta}\rangle_{H_w^\mathbb{C}}}|\mathcal{F}_t] \, \D \lambda\,.
\end{align*}
Then the statement follows from Theorem~\ref{prop:complex-extension}.
\end{proof}

\begin{proposition}[Approximate call option pricing in the Wishart SV model]\label{prop:price-option-wishart}
Under the same hypotheses as Proposition~\ref{prop:price-option}, assume instead that the covariance process is $Y_t=\mathcal{Y}_t$ as in \eqref{eq:infinite-dimensional-wishart}, and let $(P_n,q_{1,n},q_{2,n})$ be the mild solution to the finite-rank Riccati equations
\eqref{eq:riccati-finite-phi}--\eqref{eq:riccati-finite-psi-2} on $[0,T]$ with initial value $(\Re(u),0)$, the existence of which is guaranteed by Proposition~\ref{prop:extended-Riccati-existence-Wishart}.
Assume in addition that the Fourier integrand in \eqref{eq:call-option-price-wishart} belongs to $L^1(\mathbb R,\D\lambda)$ almost surely. Then, for $0\leq t\leq T_0$, the finite-rank approximation of order $n$ to the call price is
\begin{align}\label{eq:call-option-price-wishart}
\mathbb{E}_{\mathbb{Q}}\big[(F(T_0,T_1)-K)^+ \mid \mathcal{F}_t\big]
\approx \frac{1}{2\pi}\int_{\mathbb{R}}
g(\lambda)\,
\exp\Big(
& -\eta_n^{\vartheta}(T_0-t,u,0)
+ \langle \zeta_{1}^{\vartheta}(T_0-t,u,0), X_{t}^\vartheta\rangle_{H_w^{\mathbb{C}}} \nonumber\\
&\hspace{1.6cm}
- \tr\big( \mathcal{Y}_{t}\iota_n \zeta_{2,n}^{\vartheta}(T_0-t,u,0)\iota_n^*\big)
\Big)\,\D\lambda\,,
\end{align}
where $(\eta^{\vartheta}_n,\zeta_1^{\vartheta},\zeta_{2,n}^{\vartheta})$ denotes the solution to the Riccati equations~\eqref{eq:complex-Riccati-phi-Wishart}--\eqref{eq:complex-Riccati-psi-2-Wishart} on $[0,T]$ with initial data $(u,0)$.
The approximation arises from the finite-rank projection~\eqref{eq:Wishart-price-approx} of the Wishart process.
\end{proposition}
\begin{proof}
The Fourier decomposition of the call payoff is identical to the proof of Proposition~\ref{prop:price-option}. Substituting the finite-rank approximation~\eqref{eq:Wishart-price-approx} for the conditional Laplace transform and applying Theorem~\ref{thm:extended-affine-Z} yields the result.
\end{proof}

Consider a put option with strike $K>0$ and exercise time $T_0$, i.e., the payoff is given by $(K-F(T_0,T_1))^+$.
Similarly to the case of a call option, the expressions
 in \eqref{eq:chkk-call-option-price} and \eqref{eq:call-option-price-wishart} yield the price of the corresponding put option on $(X_t^\vartheta, {\sf Y}_t)_{t\geq 0}$ and $(X_t^\vartheta, \cY_t)_{t\geq 0}$, respectively, by taking $u =(\nu +\mathrm{i}\lambda) u_{\vartheta}$ with $\nu <0$ and $\lambda \in \R$, provided the corresponding transform exists and the Fourier integrand is integrable. We use these damping conditions only as sufficient assumptions for Fourier inversion and do not attempt to characterize a maximal admissible strip in full generality.

\section{Numerical analysis}\label{sec:numerics}
In this section we study several specifications of the coupled process $(X_t,Y_t)_{t\geq 0}$ and compute prices of options written on forwards. Our primary goal is to verify, across models, that the pure-jump option values from the \emph{affine, semi-explicit} formulas~\eqref{eq:chkk-call-option-price} and the Wishart transform quantities underlying~\eqref{eq:call-option-price-wishart} are consistent with direct or conditional Monte Carlo benchmarks within sampling error, and to compare their computational cost. We start by describing the general settings of our model specifications.

\subsection{General settings}
In the implementation, we choose as a weight function $w(x) = \E^{\alpha x}$, for some variable $\alpha>0$. The initial forward rate curve is chosen using a Nelson-Siegel curve. 
We know that $H_w$ is isometrically isomorphic to $\R \times  L^2[0,\infty)$ via the isomorphism
\begin{equation}\label{eq:isomorphism}
T(h) = (h(0), h' w^{1/2})\,, \qquad T^{-1}(c,f) = c+\int_0^\cdot f(y)w^{-1/2}\, \D y.
\end{equation}
Assume $(e_n)_{n\in N}$ is an orthonormal basis of $\mathbb{R} \times L^2[0, \infty)$, then $(f_n) = (T^{-1}e_n)_{n\in N}$ is an orthonormal basis of $H_w$. We choose the orthonormal basis $(1, 0)$ and $(0, g_n)$ for $\mathbb{R} \times L^2[0, \infty)$, where $(g_n)_{n \in \mathbb{N}}$ are Laguerre polynomials 
multiplied by a weight function $\E^{-x/2}$ (see Appendix~\ref{sec:appendix1}). 
It follows that the orthogonal basis $(f_n)_{n\in \mathbb{N}}$ for $H_w$ is given by 
\begin{equation}\label{eq:f_basis}
\begin{aligned}
f_1 &= 1\,, \\
f_{n+1} &= \int_0^{\cdot}g_{n-1}(s)w^{-1/2}(s)\, \D s,\quad n\geq 1\,. 
\end{aligned}
\end{equation}
We consider the option value at time $t=0$. We recall the representation of the evaluation operators $\delta_x$. It holds for any linear operator $\cV \in \cL(H_w)$,
\begin{align*}
    \cV^*f(x) =\delta_x\cV^*f = \langle \cV^*f, u_x\rangle_w = \langle f, \cV u_x\rangle_w.
\end{align*}
We infer for $(S^*(t))_{t\geq 0}$ and the weight function $w(x) =  \E^{\alpha x}$,
\begin{align}\label{eq:semigroup-excplicit-form}
		S^*(t)h(x) &= h(0) + h(0)\int_0^{x\wedge t}w^{-1}(s)\,\D s + \int_t^{x\vee t} \frac{h^\prime(s-t)w(s-t)}{w(s)} \D s\nonumber\\
        & = h(0) + \frac{1}{\alpha}h(0)(1-\E^{-\alpha(x\wedge t)}) + \E^{-\alpha t}\mathbbm{1}_{x\geq t}(h(x-t)-h(0))\,.
\end{align}
Throughout the numerics we set $h_0\equiv 1$, so that $u_\vartheta=S^*(\vartheta)h_0$ is the evaluation functional at maturity lag $\vartheta$. Accordingly, we take
$$
\Upsilon^\vartheta=-\frac{1}{2}\mathcal{S}^*(\vartheta)h_0,
$$
so that the HJM drift in $(X_t)_{t\geq0}$ is $D^{1/2}\mathsf Y_t D^{1/2}\Upsilon^\vartheta$.
In the experiments on the stochastic volatility model with jumps, we consider two examples: the model driven by a L\'evy subordinator with $B(Y_t)=0$, and the so called Barndorff–Nielsen–Shephard (BNS) model, i.e., $B(\sY_t) = (\mathbb{C} \sY_t + \sY_t \mathbb{C}^*)$.

\subsection{Validation of the First Forward Moment and Option Prices in the Finite-Rank Wishart Model}

For the Wishart stochastic volatility model, we first validate the \emph{first forward moment} against its empirical counterpart obtained by Monte Carlo. In the code this corresponds to the special transform input $u_1=u_\vartheta$ and $u_2=0$, hence to the quantity $\EX{\exp(\langle X_T,u_\vartheta\rangle_w)}$. This is weaker than a full complex-MGF validation, but it checks the risk-neutral drift restriction and the finite-rank Riccati implementation in the simplest transform configuration relevant for pricing.

Throughout this experiment we work in the following setting: the coefficients $Q$, $\mathbb A$, $D$ and the initial condition $Y_0$ are all supported on $H_n:=\mathrm{span}\{f_1,\dots,f_n\}$. In particular, for this numerical specification the finite-rank transform is exact on $H_n$ rather than merely asymptotic. The theoretical first moment is therefore evaluated via the matrix Riccati system on $H_n$, while the empirical first moment is computed from Monte Carlo samples as the sample average of $\exp\{\langle X_T,u_\vartheta\rangle_w\}$.

We define the initial value $Y_0$ of the Wishart process as a rank-$n$ positive operator given by
$$
Y_0 = \sum_{k=1}^n \lambda_k f_k \otimes f_k,
$$
where $\lambda_k > 0$ are constant eigenvalues chosen for numerical stability and interpretation. The volatility-driving Wishart process $(Y_t)_{t \geq 0}$ is governed by the SDE
$$
\D Y_t = n Q\,\D t + Y_t \mathbb{A}\,\D t + \mathbb{A}^* Y_t\,\D t + \sqrt{Y_t}\,\D B_t \sqrt{Q} + \sqrt{Q}\,\D B_t^* \sqrt{Y_t},
$$
where $(B_t)_{t \geq 0}$ is a cylindrical Brownian motion on $L^2(H)$, and $Q \in \mathbb{S}_1^+(H)$ and $\mathbb{A} \in \mathcal{L}(H)$ are model parameters. We specify the operator $Q$ as
$$
Q = \sum_{k=1}^n q_k f_k \otimes f_k,
$$
with $q_k > 0$ to control the volatility of the Wishart process along each principal direction $f_k$. The drift operator $\mathbb{A}$ is chosen to be diagonal in the basis $(f_k)_{k\in\MN}$, such that
$$
\mathbb{A} = -\sum_{k=1}^n a_k f_k \otimes f_k,
$$
with $a_k > 0$ ensuring mean reversion of the covariance process. In this example, we set the $q_k=a_k=\frac{1}{k^2}$, $k\in \mathbb{N}$. We use $D \in \mathbb{S}_1^+(H)$ for the volatility loading, defined as
$$
D = \frac{1}{2} \sum_{k=1}^{\infty} \frac{1}{k^2} f_k \otimes f_k.
$$


The Wishart process is coupled with the forward dynamics via the volatility-modulated SPDE given in equation~\eqref{eq:HJMM}.

A solution to the Riccati equations~\eqref{eq:complex-Riccati-phi-Wishart}-\eqref{eq:complex-Riccati-psi-2-Wishart}, with $F$ and $R$ specified in~\eqref{eq:F-Wishart} and~\eqref{eq:R-Wishart}, is not available in closed form.  
In Appendix~\ref{sec:riccat_ws}, we present an efficient numerical scheme for solving these equations under the given parameterization.  
The method is flexible and can be adapted to alternative parameter choices.

For Monte Carlo simulation we exploit the standard factor representation of the finite-dimensional Wishart diffusion. Writing the matrix-valued covariance process on $H_n$ as
$$
Y_t = U_t^\top U_t,
$$
with $U_t\in\R^{n\times n}$ solving the Ornstein--Uhlenbeck SDE
$$
\D U_t = U_t A_n\,\D t + \D W_t\,Q_n^{1/2},
$$
one recovers exactly the finite-dimensional Wishart dynamics on $H_n$. In particular, the simulated covariance matrices remain symmetric positive semidefinite at the grid points without any diagonal reduction. For the first-moment experiment we use the short horizon $T_0 = 1/365$ and compare the empirical average of $\exp\{\langle X_{T_0},u_\vartheta\rangle_w\}$ with the affine/Riccati value.

In this corrected implementation, the theoretical first moment aligns closely with its Monte Carlo estimate, thereby confirming the consistency of the finite-rank affine transform in this benchmark configuration. This agreement is illustrated in Figure~\ref{fig:mmt_ws}.

\begin{figure}[ht]
    \centering
    \includegraphics[width=1\linewidth]{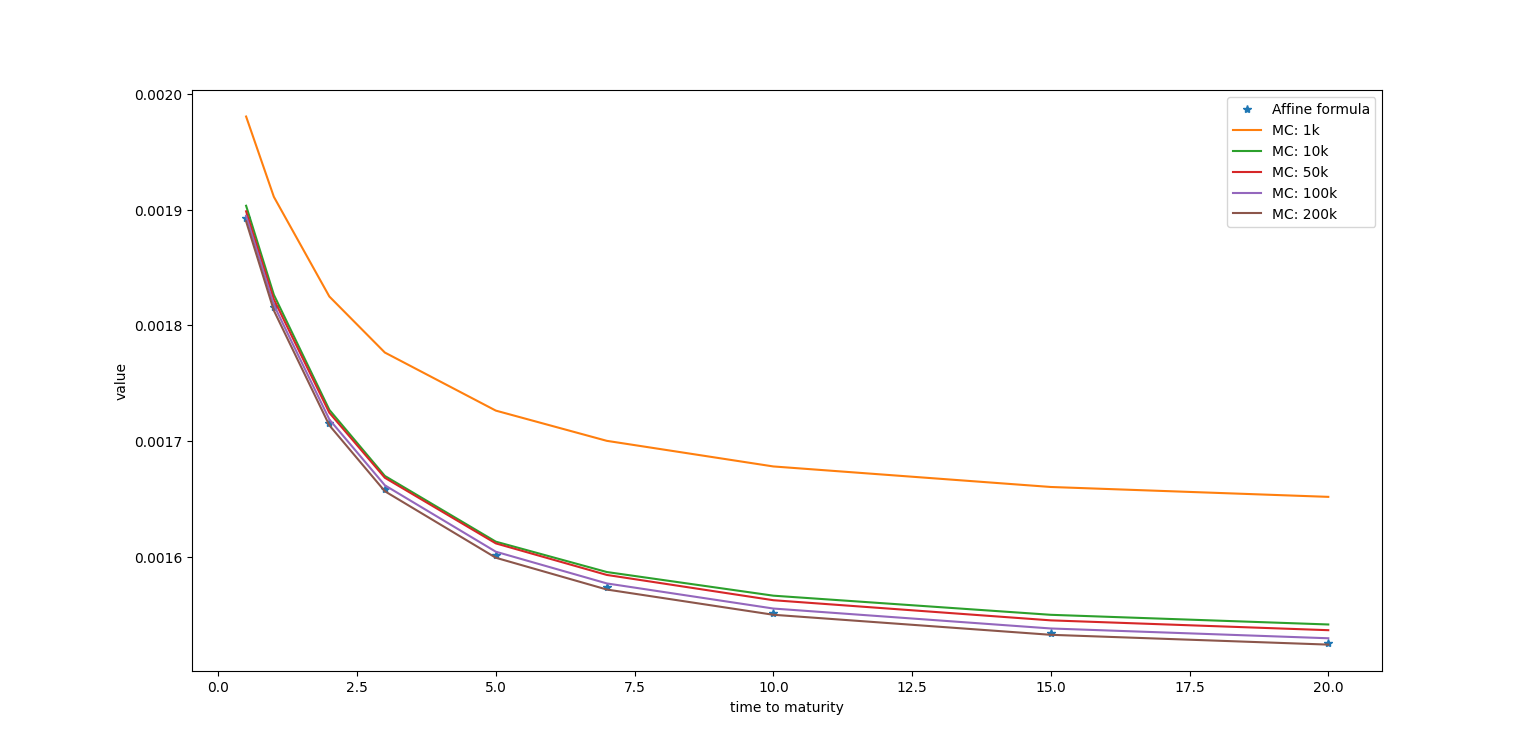}
    \caption{First forward moment $\EX{\exp(\langle X_{T_0},u_\vartheta\rangle_w)}$: Monte Carlo vs.\ affine formula, with $h_0=1, \alpha=0.1, T_0=1/365, N=5$.}
    \label{fig:mmt_ws}
\end{figure}

To address the option pricing validation, we employ a conditional Gaussian Monte Carlo scheme.
Given a simulated path of the \emph{full} finite-rank Wishart process $(Y_s)_{0\le s\le t}$, the log-forward $\log F(t,t+\vartheta)=\langle X_t,u_\vartheta\rangle_w$ is conditionally Gaussian with conditional mean
$$
  m_{\mathrm{cond}}
  = \langle X_0,u_{\vartheta+t}\rangle_w
    - \frac12\int_0^t
      \big\langle D^{1/2}Y_sD^{1/2}u_\vartheta,\,
      u_{\vartheta+t-s}\big\rangle_w\,\D s
$$
and conditional variance
$$
  \sigma^2_{\mathrm{cond}}
  = \int_0^t
    \big\langle D^{1/2}Y_sD^{1/2}u_{\vartheta+t-s},\,
    u_{\vartheta+t-s}\big\rangle_w\,\D s.
$$
Hence a call option price can be computed path-wise via the Black formula. In contrast to the previous diagonal-CIR surrogate, this conditional Gaussian scheme retains all off-diagonal covariance terms of $Y_s$ through the full quadratic forms above. Numerically, the former structural pricing gap disappears after this correction: in our implementation the remaining difference between conditional Monte Carlo and the affine formula is of the order expected from Monte Carlo noise together with time/Fourier discretization, whereas the diagonal surrogate systematically overprices by omitting the Wishart cross-correlation terms.

\begin{figure}[ht]
    \centering
    \includegraphics[width=1\linewidth]{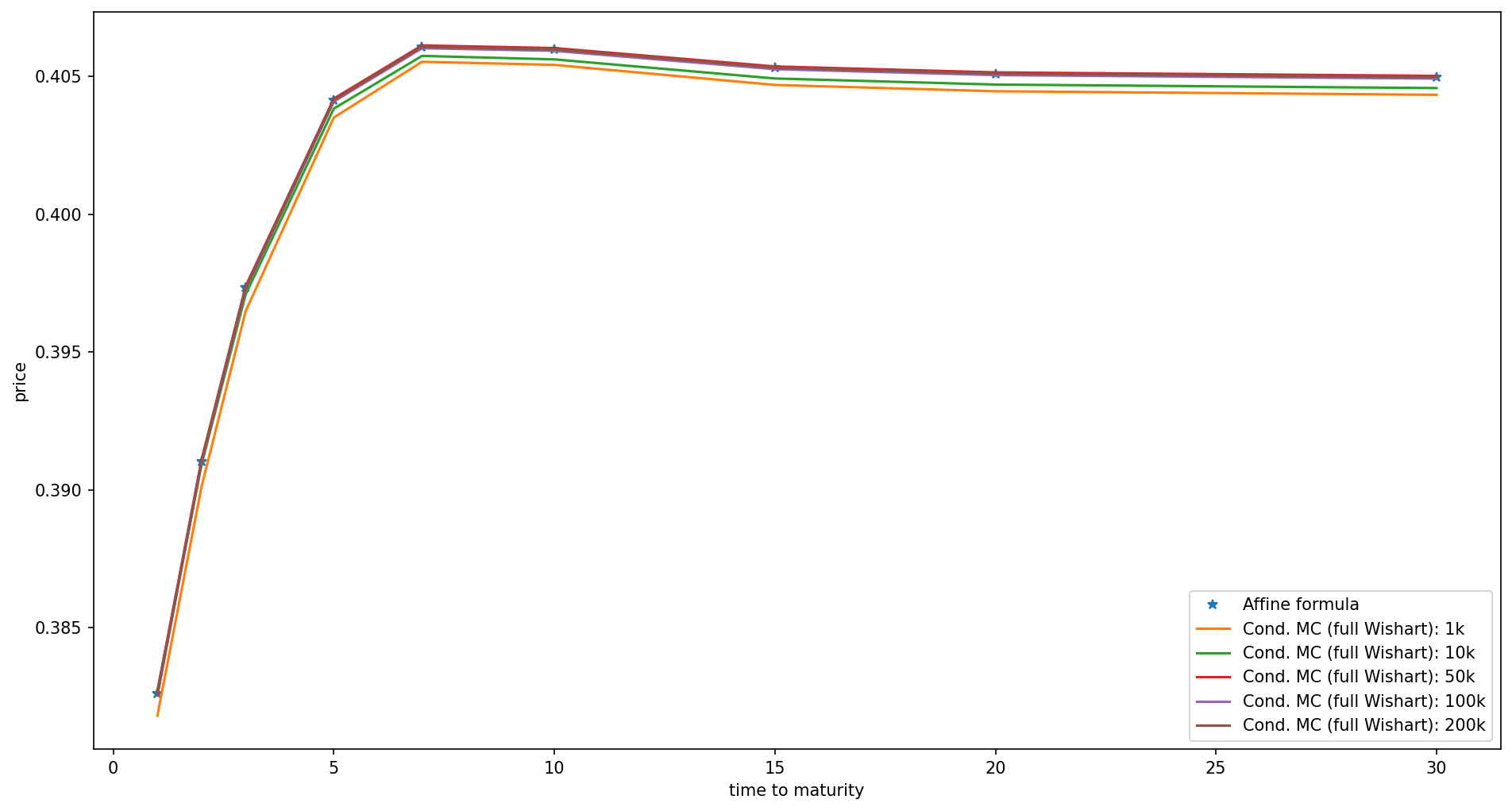}
    \caption{Call option prices: conditional Gaussian Monte Carlo based on the full finite-rank Wishart process vs.\ affine formula, with $h_0=1, \alpha=0.1, t=1, k=1, N=5$. }
    \label{fig:wishart_pricing}
\end{figure}

We emphasize that the slight decline of the pricing curves in Figure~\ref{fig:wishart_pricing} for large values of $\vartheta$ is not a numerical artefact. In this experiment the exercise time $t$ is fixed, and the horizontal axis varies the delivery lag $\vartheta$ in the forward contract $F(t,t+\vartheta)$. Hence the figure does not display option prices as a function of option maturity, and monotonicity in $\vartheta$ should not be expected. In the present specification, the deterministic initial forward level $F(0,t+\vartheta)$ decreases mildly with $\vartheta$, whereas the Wishart volatility loadings entering the conditional variance increase only up to a point and then essentially saturate. Consequently, for large $\vartheta$ the volatility contribution becomes nearly flat while the underlying forward level continues to decrease slightly, which leads to a small downward bend in the call prices. Any residual pointwise irregularity is due to Monte Carlo and Fourier discretization, but the overall long-end shape is a genuine feature of the model under this parameter choice.

\subsection{Pricing in a pure-jump L\'evy driven stochastic volatility model}\label{sec:example2}
As a second example we consider an affine pure-jump L\'evy–type stochastic volatility specification in which the
instantaneous covariance is an operator–valued L\'evy subordinator.

Let $(f_n)_{n\in\N}$ be the orthonormal basis of $H_w$ from \eqref{eq:f_basis}, and define the self-adjoint, positive and diagonal in $(f_n)_{n\in\MN}$ operator
$$
  D = \frac{1}{2}\sum_{n=1}^{\infty} \frac{1}{n^2}\, f_n\otimes f_n \in \cL(H_w).
$$
Let $\cA$ be the generator of the left-shift semigroup and $(\widetilde W_t)_{t\geq 0}$ a cylindrical Brownian motion under $\MQ$ as in Section~\ref{sec:appl-math-finance}.
Consider the compensated compound Poisson martingale $(J_t)_{t\geq 0}$ with deterministic jump directions $D$ and intensity $\beta>0$:
$$
  J_t = (N_t-\beta t)D, \qquad N_t \sim \mathrm{Poisson}(\beta t).
$$
Let $\cH_w$ denote the space of self-adjoint Hilbert–Schmidt operators on $H_w$ and $\cH_w^+$ its positive cone.
Under $\MQ$ we take the coupled model
\begin{align}\label{eq:sto-vol-pure-jump}
  \D X_t &= \big(\cA X_t + D^{1/2}\sY_t D^{1/2}\Upsilon^\vartheta\big)\,\D t
           + D^{1/2}\sY_t^{1/2}\,\D\widetilde W_t, 
           \qquad X_0=x\in H_w, \nonumber\\
  \D \sY_t &= \beta D\,\D t + \D J_t, 
           \qquad \sY_0=y\in \cH_w^+,
\end{align}
with $\Upsilon^\vartheta=-\tfrac12 S^*(\vartheta)h_0$ as in \eqref{eq:upsilon}.
The numerical experiments in Sections~\ref{sec:example2} and~\ref{sec:example3} specialize to the Lévy-driven case $\mu=0$ and to a compound-Poisson OU/BNS specification with jumps in the fixed direction $D$. Section~\ref{sec:example4} extends the validation to the fully state-dependent jump kernel from Section~\ref{sec:OU-jumps}.

The complex associated Riccati system \eqref{eq:complex-Riccati-phi}–\eqref{eq:complex-Riccati-psi-2} becomes
\begin{align*}
  \partial_t \phi(t,u)=F(\psi_2(t,u)),\qquad 
  \psi_1(t,u)=u_1+\cA^*\int_0^t\psi_1(s,u)\,\D s,\qquad
  \partial_t\psi_2(t,u)=R(\psi_1(t,u),\psi_2(t,u)),
\end{align*}
with
\begin{align*}
  R(h,u)= -\tfrac12\,(D^{1/2}h)^{\otimes 2} - D^{1/2}h\otimes D^{1/2}\Upsilon^\vartheta,
  \qquad
  F(u)= -\beta\big(\E^{-\langle D,u\rangle}-1\big).
\end{align*}
The following lemma states closed-form expressions of the solutions to the system in the basis $(f_n)$.

\begin{lemma}\label{lem:Riccati-pure-jump-stoc-vol}
Let $w(x)=\E^{\alpha x}$ with $\alpha>0$ and $(f_n)_{n\in\N}$ as above. For $u_1=(\nu+i\lambda)\,u_\vartheta$ with $\nu>1$ and $\lambda\in\R$, the Riccati solution satisfies
\begin{align}
  \psi_1(t,u)(x) 
  &= (\nu+i\lambda)\!\left(1+\frac{1}{\alpha}\big(1-\E^{-\alpha(x\wedge t)}\big)
                                +\frac{1}{\alpha}\E^{-\alpha t}\,\mathbf 1_{\{x\ge t\}}
                                   \big(1-\E^{-\alpha(\vartheta\wedge x-t)}\big)\right),
  \label{eq:riccati_pure_jump1}\\[4pt]
  \psi_2(t,u)
  &= -\frac{(\nu+i\lambda)^2}{4}
     \sum_{i=1}^{\infty}\sum_{j=1}^{\infty}\frac{1}{ij}\,
     \langle f_i,\cdot\rangle_w\, f_j \int_0^t f_i(s+\vartheta)f_j(s+\vartheta)\,\D s
     \nonumber\\[-2pt]
  &\quad\; -\frac{(\nu+i\lambda)}{2}
     \sum_{i=1}^{\infty}\sum_{j=1}^{\infty}\frac{1}{ij}\,c_j(\vartheta)\,
     \langle f_i,\cdot\rangle_w\, f_j \int_0^t f_i(s+\vartheta)\,\D s,
  \label{eq:riccati_pure_jump2}
\end{align}
where 
$D^{1/2}=\tfrac{1}{\sqrt2}\sum_{n\ge1}\frac{1}{n}\,f_n\otimes f_n$
and $c_j(\vartheta):=-\tfrac12\langle f_j,S^*(\vartheta)h_0\rangle_w$, i.e., $D^{1/2}\Upsilon^\vartheta=\tfrac{1}{\sqrt2}\sum_{j\ge1}\frac{1}{j}\,c_j(\vartheta)\,f_j$.
Moreover,
\begin{align}
  \langle D,\psi_2(t,u)\rangle
  &= -\frac{(\nu+i\lambda)^2}{8}\sum_{n=1}^{\infty}\frac{1}{n^4}\int_0^t f_n(s+\vartheta)^2\,\D s
     \;-\;\frac{(\nu+i\lambda)}{4}\sum_{n=1}^{\infty}\frac{c_n(\vartheta)}{n^4}\int_0^t f_n(s+\vartheta)\,\D s,
  \label{eq:D-psi2}
\end{align}
and 
\begin{align}
    \phi(t, u) &= \beta t - \beta \int_0^t \exp\left\{\frac{1}{8} (\nu+\mathrm{i}\lambda)^2\sum_{k=1}^\infty\frac{1}{k^4}\int_0^s f_k(l+\vartheta)\left(f_k(l+\vartheta)+\frac{2c_k(\vartheta)}{\nu+\mathrm{i}\lambda}\right)\, \D l\right\}\, \D s \label{eq:riccati_pure_jump3} \, . 
\end{align}
\end{lemma}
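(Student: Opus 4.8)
My plan is to exploit the triangular structure of the generalized Riccati system: in this purely L\'evy–driven specification ($B=0$ and $\mu=0$) the $\psi_1$–equation is linear and autonomous, the right‑hand side of the $\psi_2$–equation does not involve $\psi_2$, and $\phi$ is then obtained by a single quadrature, so every component can be integrated in closed form. \emph{Step 1 ($\psi_1$).} The equation $\psi_1(t,u)=u_1+\cA^{*}\int_0^{t}\psi_1(s,u)\,\D s$ is the mild form of $\partial_t\psi_1=\cA^{*}\psi_1$ with $\psi_1(0,u)=u_1$; by the standard semigroup identity $\cA^{*}\int_0^{t}S^{*}(s)x\,\D s=S^{*}(t)x-x$ its unique continuous solution is $\psi_1(t,u)=S^{*}(t)u_1$. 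Since $u_1=(\nu+\I\lambda)u_\vartheta$ and the reproducing property of $H_w$ gives $S^{*}(t)u_\vartheta=u_{t+\vartheta}$ (dualizing $\delta_\vartheta\circ S(t)=\delta_{t+\vartheta}$), this equals $(\nu+\I\lambda)u_{t+\vartheta}$; evaluating $u_{t+\vartheta}$ via the explicit formula~\eqref{eq:semigroup-excplicit-form} for $w(x)=\E^{\alpha x}$ and splitting into the cases $x<t$ and $x\ge t$ yields~\eqref{eq:riccati_pure_jump1}.

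\emph{Step 2 ($\psi_2$).} Because $B=0$ and $\mu=0$, the map $R(h,u)=-\tfrac12(D^{1/2}h)^{\otimes2}-D^{1/2}h\otimes D^{1/2}\Upsilon^\vartheta$ does not depend on $u$, so~\eqref{eq:complex-Riccati-psi-2} integrates directly to
\[
\psi_2(t,u)=-\tfrac12\int_0^{t}\bigl(D^{1/2}\psi_1(s,u)\bigr)^{\otimes2}\,\D s-\Bigl(\int_0^{t}D^{1/2}\psi_1(s,u)\,\D s\Bigr)\otimes D^{1/2}\Upsilon^\vartheta,
\]
by bilinearity of $\otimes$ and linearity of the Bochner integral. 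I would then substitute the basis expansions: the positive square root of $D=\tfrac12\sum_n n^{-2}f_n\otimes f_n$ is $D^{1/2}=\tfrac1{\sqrt2}\sum_n n^{-1}f_n\otimes f_n$, the reproducing property gives $\langle f_n,\psi_1(s,u)\rangle_w=(\nu+\I\lambda)f_n(s+\vartheta)$ and hence $D^{1/2}\psi_1(s,u)=\tfrac{\nu+\I\lambda}{\sqrt2}\sum_n n^{-1}f_n(s+\vartheta)f_n$, and likewise $D^{1/2}\Upsilon^\vartheta=\tfrac1{\sqrt2}\sum_j j^{-1}c_j(\vartheta)f_j$ as in the statement; using $(c\,v)^{\otimes2}=c^{2}v^{\otimes2}$ then produces the double series~\eqref{eq:riccati_pure_jump2}.

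\emph{Step 3 (trace term and $\phi$).} Since $D$ is diagonal in $(f_n)$, one has $\langle D,f_i\otimes f_j\rangle=\langle Df_i,f_j\rangle_w=\tfrac{1}{2i^{2}}\delta_{ij}$, so pairing~\eqref{eq:riccati_pure_jump2} with $D$ annihilates the off‑diagonal terms and leaves~\eqref{eq:D-psi2}. For the compound Poisson covariance with jump direction $D$ and intensity $\beta$ the L\'evy measure is $m=\beta\delta_D$; as $\|D\|_2\le1$ gives $\chi(D)=D$, the compensator $\tilde b=b-\int\chi(\xi)\,m(\D\xi)=\beta D-\beta D$ vanishes, so~\eqref{eq:F} reduces to $F(u)=\beta(1-\E^{-\langle D,u\rangle})$. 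Then $\phi(t,u)=\int_0^{t}F(\psi_2(s,u))\,\D s=\beta t-\beta\int_0^{t}\E^{-\langle D,\psi_2(s,u)\rangle}\,\D s$, and pulling $\tfrac{(\nu+\I\lambda)^{2}}{8}$ out of the exponent supplied by~\eqref{eq:D-psi2} gives~\eqref{eq:riccati_pure_jump3}.

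\emph{Main obstacle.} The algebra is routine; the part that will require genuine care is verifying that this formal quadrature is legitimate and in fact yields \emph{the} (mild, hence unique) solution of~\eqref{eq:complex-Riccati-psi-2} and its companions. One must check that $s\mapsto\psi_1(s,u)$ is continuous into $H_w^{\MC}$, that $s\mapsto R(\psi_1(s,u),\cdot)$ is continuous into $\cH^{\MC}$ (so that the integrals are genuine Bochner integrals, $\psi_2\in C^{1}([0,T],\cH^{\MC})$, and $\psi_2(t,u)$ stays in the admissible domain — here all of $\cH^{\MC}$, since $m$ is a finite point mass and the set $\cO$ is the whole space), that $\phi\in C^{1}$, and that the term‑by‑term pairing with $D$ and the interchange of summation and integration are permitted. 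I expect all of these to reduce to the uniform bound $|f_n(x)|\le\|g_{n-2}\|_{L^{2}}\,\|\E^{-\alpha\,\cdot/2}\|_{L^{2}}=\alpha^{-1/2}$, obtained by Cauchy--Schwarz from $f_n(x)=\int_0^{x}g_{n-2}(s)\E^{-\alpha s/2}\,\D s$ (for $n\ge2$, with $f_1\equiv1$) and the $L^{2}$–normalisation of the Laguerre functions $g_k$: this makes $\bigl|\int_0^{t}f_i(s+\vartheta)f_j(s+\vartheta)\,\D s\bigr|$ bounded uniformly in $i,j$, hence the Hilbert--Schmidt norm of the series in~\eqref{eq:riccati_pure_jump2} finite, while $\sum_n n^{-4}<\infty$ controls the scalar series in~\eqref{eq:D-psi2} and~\eqref{eq:riccati_pure_jump3}; uniqueness then follows from a Gronwall estimate.
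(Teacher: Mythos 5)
Your proposal is correct and follows essentially the same route as the paper's proof: $\psi_1(t,u)=S^{*}(t)u_1$ evaluated via the explicit shift-semigroup formula, direct integration of the $\psi_2$-equation (whose right-hand side does not depend on $\psi_2$ since $B=0$, $\mu=0$) in the basis $(f_n)$, diagonal pairing with $D$ to get \eqref{eq:D-psi2}, and a final quadrature for $\phi$ with $F(u)=\beta(1-\E^{-\langle D,u\rangle})$. Your additional checks (computation of $\tilde b=0$ for the compound Poisson input, the uniform bound $|f_n|\le\alpha^{-1/2}$ justifying the Bochner integrals and the interchange of sum and integral, and uniqueness via Gronwall) go beyond what the paper records but do not change the argument.
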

\begin{proof}
By the semigroup property, $\psi_1(t,u)=S^*(t)u_1$, which gives \eqref{eq:riccati_pure_jump1}. 
For the quadratic/linear drivers, write
$$
  D^{1/2}\psi_1(s,u)=\frac{\nu+i\lambda}{\sqrt2}\sum_{n\geq 1}\frac{1}{n}\,f_n(s+\vartheta)\,f_n,
  \qquad
  D^{1/2}\Upsilon^\vartheta=\frac{1}{\sqrt2}\sum_{j\geq 1}\frac{1}{j}\,c_j(\vartheta)\,f_j.
$$
Using $(a\otimes b)h=\langle a,h\rangle_w\,b$ and integrating 

$\partial_t\psi_2
= -\tfrac12(D^{1/2}\psi_1)^{\otimes2}-D^{1/2}\psi_1\otimes D^{1/2}\Upsilon^\vartheta$
yields \eqref{eq:riccati_pure_jump2}. 
For $\langle D,\psi_2\rangle=\sum_{n\geq 1 }\langle Df_n,\psi_2 f_n\rangle_w$, use
$Df_n=\tfrac{1}{2n^2}f_n$ and orthonormality to obtain \eqref{eq:D-psi2}. 
Finally, $\partial_t\phi=F(\psi_2)$ with 
$F(u)=-\beta\big(\E^{-\langle D,u\rangle}-1\big)$ gives \eqref{eq:riccati_pure_jump3}.
\end{proof}

From Theorem~\ref{prop:complex-extension} and Lemma~\ref{lem:Riccati-pure-jump-stoc-vol} (with $u_1=(\nu+i\lambda)u_\vartheta$, $\nu>1$) we obtain
\begin{equation}
\label{eq:option_b0}
\begin{aligned}
\mathbb{E}_{\MQ}\big[(F(t,t+\vartheta)-K)^+\mid \cF_0\big]
&= \frac{1}{2\pi}\int_{\R} g(\lambda)\,
   \exp\Big(
      -\phi(t,u) + \langle X_0,\psi_1(t,u)\rangle_w - \langle Y_0,\psi_2(t,u)\rangle
   \Big)\,\D\lambda \\
&= \frac{1}{2\pi}\int_{\R} g(\lambda)\,
   \exp\Big(
     -\beta t +  \beta \int_0^t \E^{-\langle D,\psi_2(s,u)\rangle}\,\D s
       \\
&\qquad\qquad
      +(\nu+i\lambda)\,X_0(t+\vartheta)- \langle Y_0,\psi_2(t,u)\rangle
   \Big)\,\D\lambda,
\end{aligned}
\end{equation}
where
$$
\langle D,\psi_2(s,u)\rangle
= -\frac{(\nu+i\lambda)^2}{8}\sum_{n=1}^{\infty}\frac{1}{n^4}\int_0^s f_n(r+\vartheta)^2\,\D r
  -\frac{(\nu+i\lambda)}{4}\sum_{n=1}^{\infty}\frac{c_n(\vartheta)}{n^4}\int_0^s f_n(r+\vartheta)\,\D r,
$$
and $c_n(\vartheta):=-\tfrac12\langle f_n,S^*(\vartheta)h_0\rangle_w$.

For numerical implementation we truncate the series at $N$:
\begin{equation}
\label{eq:option_b0_trunc}
\begin{aligned}
\mathbb{E}_{\MQ}\!\big[(F(t,t+\vartheta)-K)^+\mid \cF_0\big]
&\approx \frac{1}{2\pi}\int_{\R} g(\lambda)\,
   \exp\Big(
     -\beta t + \beta \int_0^t \E^{-\langle D,\psi_2^{(N)}(s,u)\rangle}\,\D s \\
&\qquad\qquad\qquad\qquad
      + (\nu+i\lambda)X_0(t+\vartheta)
      - \langle Y_0,\psi_2^{(N)}(t,u)\rangle
   \Big)\,\D\lambda,
\end{aligned}
\end{equation}
with 
$$
\langle D,\psi_2^{(N)}(s,u)\rangle
= -\frac{(\nu+i\lambda)^2}{8}\sum_{n=1}^{N}\frac{1}{n^4}\int_0^s f_n(r+\vartheta)^2\,\D r
  -\frac{(\nu+i\lambda)}{4}\sum_{n=1}^{N}\frac{c_n(\vartheta)}{n^4}\int_0^s f_n(r+\vartheta)\,\D r.
$$
The function $f_n$ can be efficiently evaluated by iteration, see Appendix~\ref{sec:appendix1}. The integrals $\int_0^s f_n(r+\vartheta)\,\D r$ and $\int_0^s f_n(r+\vartheta)^2\,\D r$ are computed numerically via Gauss--Legendre quadrature; in our implementation we use the \texttt{scipy.integrate.quad} routine, which provides adaptive Gaussian quadrature and is sufficient for the smooth integrands arising from the basis functions $f_n$.

  Tables \ref{tab:n_theta_b0} and \ref{tab:n_beta_b0} present the convergence analysis on the number of basis functions used, i.e., $N$. We use $N=10$ as the benchmark, and compute the relative absolute differences in the option price for $N=2, 3, 5, 8$ compared to $N=10$, with different time to maturity $\vartheta$ (in Table \ref{tab:n_theta_b0}) and different intensity $\beta$ (in Table \ref{tab:n_beta_b0}). The results show that the difference is already less than 0.02\% when $N=5$, which suggest the option price converges fast w.r.t the number of used basis function.  Moreover, Figure \ref{fig:mc_b0} shows the comparison on the option prices between the Monte Carlo simulation and the analytical formula. One easily sees that the forward curve generated by Monte Carlo simulation converges to the curve calculated by our analytical formula.

\begin{table}[ht]
    \centering
    \begin{tabular}{|c|ccccc|}
        \hline
        \textbf{Time to maturity} & \textbf{1} & \textbf{3} & \textbf{5} & \textbf{10} & \textbf{20} \\ \hline
        \textbf{N=2} & 0.06\% & 0.09\% & 0.33\% & 0.63\% & 0.65\% \\ 
        \textbf{N=3} & 0.01\% & 0.04\% & 0.03\% & 0.05\% & 0.04\% \\ 
        \textbf{N=5} & 0.01\% & 0.02\% & 0.01\% & 0.00\% & 0.00\% \\ 
        \textbf{N=8} & 0.00\% & 0.00\% & 0.00\% & 0.00\% & 0.00\% \\ \hline
    \end{tabular}
    \vspace{2mm}
    \caption{Relative absolute difference for different number of basis function $n$ and time to maturity $\vartheta$ on the option price compared to $n=10$, with $Y_0=D, h_0=1, \alpha=0.1, T_0=1$, strike $K=1$ and the intensity $\beta=1$. Pure jump case.}
    \label{tab:n_theta_b0}
\end{table}

\begin{table}[ht]
    \centering
    \begin{tabular}{|c|cccc|}
        \hline
        \textbf{beta} & \textbf{0.5} & \textbf{1} & \textbf{2} & \textbf{5} \\ \hline
        \textbf{N=2} & 1.46\% & 1.26\% & 1.01\% & 0.60\% \\
        \textbf{N=3} & 0.10\% & 0.05\% & 0.02\% & 0.02\% \\ 
        \textbf{N=5} & 0.01\% & 0.01\% & 0.01\% & 0.01\% \\ 
        \textbf{N=8} & 0.01\% & 0.01\% & 0.01\% & 0.00\% \\ \hline
    \end{tabular}
      \vspace{2mm}
    \caption{Relative absolute difference for different $n$ and $\beta$ on the option price compared to $n=10$, with $Y_0=D, h_0=1, \alpha=0.1, T_0=2$, strike $K=2$ and the time to maturity $\vartheta=1$. Pure jump case.}
    \label{tab:n_beta_b0}
\end{table}

\begin{figure}[ht]
    \centering
    \includegraphics[width=1\linewidth]{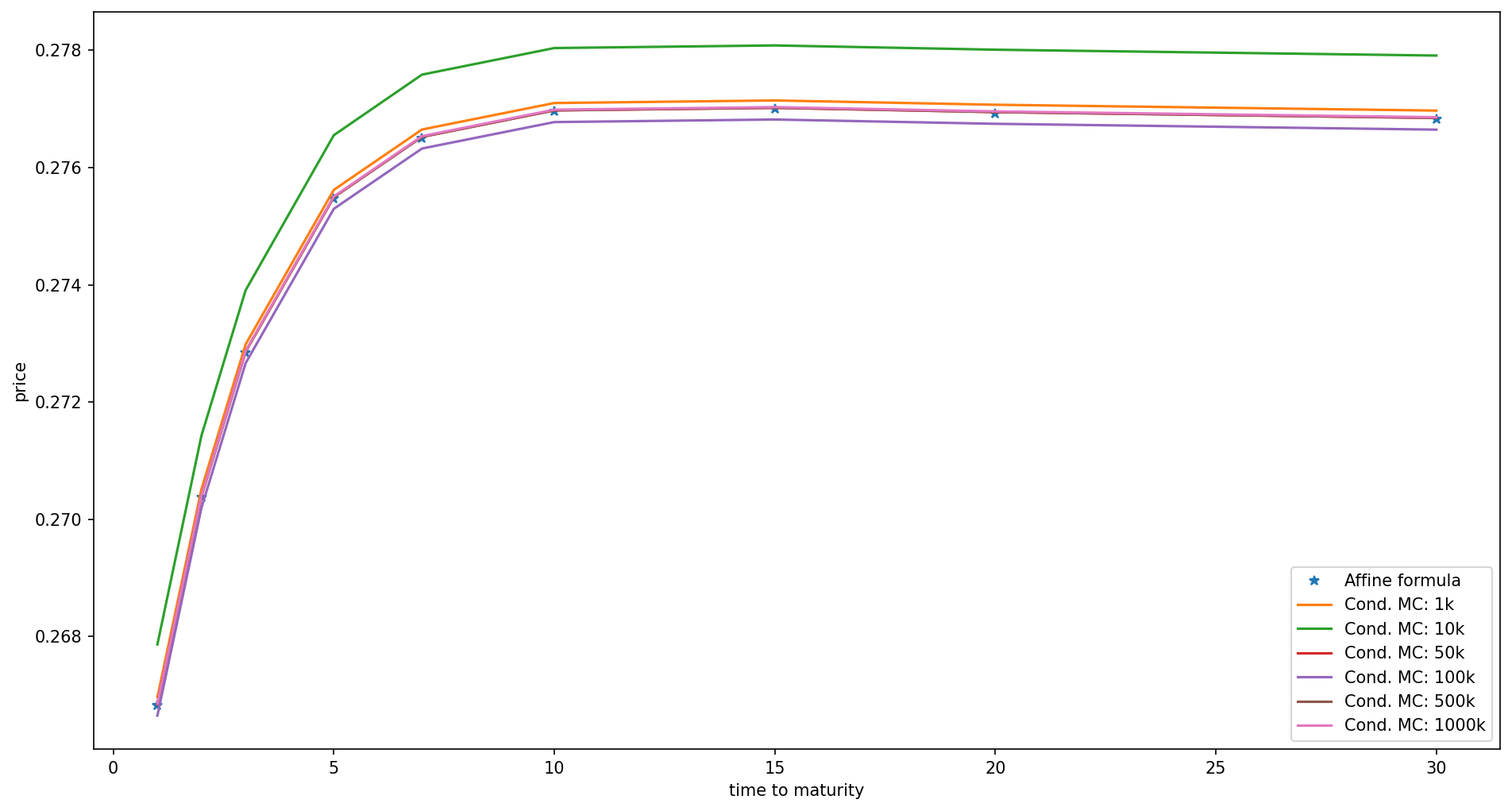}
    \caption{Conditional Monte Carlo vs.\ affine formula, with $Y_0=D, h_0=1, \alpha=0.1, T_0=1, N=5$, strike $K=1$ and the intensity $\beta=1$. }
    \label{fig:mc_b0}
\end{figure}

\subsection{Pricing in the operator-valued Barndorff–Nielsen–Shephard stochastic volatility model}\label{sec:example3}

Let $\cA$, $D$, $(\widetilde W_t)_{t\ge0}$, and $(J_t)_{t\ge0}$ be as in Example~\ref{sec:example2}, with
$J_t=(N_t-\beta t)D$ the compensated compound Poisson martingale of intensity $\beta>0$, independent of $\widetilde W$. 
Let $\cH_w$ be the space of self-adjoint Hilbert–Schmidt operators on $H_w$ and $\cH_w^+$ its positive cone. We consider the operator-valued BNS dynamics
\begin{align}\label{eq:OU-bns}
  \D X_t &= \big(\cA X_t + D^{1/2}\sY_t D^{1/2}\Upsilon^\vartheta\big)\,\D t
            + D^{1/2}\sY_t^{1/2}\,\D \widetilde W_t, 
            \qquad X_0=x\in H_w, \nonumber\\
  \D \sY_t &= \big(Q\sY_t + \sY_t Q^*\big)\,\D t + \D J_t, 
            \qquad \sY_0=y\in \cH_w^+,
\end{align}
where $Q\in\cL(H_w)$ is a (mean-reverting) drift operator (we will take it diagonal below) and 
$\Upsilon^\vartheta=-\tfrac12 S^*(\vartheta)h_0$ as in~\eqref{eq:upsilon}. The associated complex Riccati system \eqref{eq:complex-Riccati-phi}–\eqref{eq:complex-Riccati-psi-2} becomes
$$
  \partial_t \phi(t,u)=F(\psi_2(t,u)),\qquad 
  \psi_1(t,u)=u_1+\cA^*\int_0^t \psi_1(s,u)\,\D s,\qquad
  \partial_t \psi_2(t,u)=R(\psi_1(t,u),\psi_2(t,u)),
$$
with maps
\begin{equation*}
  R(h,u) = u Q^* + Q u - \tfrac12\,(D^{1/2}h)^{\otimes 2} - D^{1/2}h\otimes D^{1/2}\Upsilon^\vartheta,
  \qquad
  F(u) = -\beta\big(\E^{-\langle D,u\rangle}-1\big),
\end{equation*}
where $\langle\cdot,\cdot\rangle$ denotes the Hilbert–Schmidt pairing on $\cH_w$.

\begin{lemma}\label{lem:OU-bns-riccati-clean}
Let $w(x)=\E^{\alpha x}$ with $\alpha>0$ and $(f_n)_{n\in\N}$ be the orthonormal basis from \eqref{eq:f_basis}. 
Assume the OU drift is diagonal,
$$
  Q \;=\; -\sum_{n\ge1} a_n\, f_n\otimes f_n, 
  \qquad a_n\ge0,
$$
and set $D^{1/2}=\tfrac{1}{\sqrt2}\sum_{n\ge1}\tfrac{1}{n}\,f_n\otimes f_n$, 
$c_j(\vartheta):=-\tfrac12\langle f_j,S^*(\vartheta)h_0\rangle_w$ (so $D^{1/2}\Upsilon^\vartheta=\tfrac{1}{\sqrt2}\sum_{j\ge1}\tfrac{1}{j}c_j(\vartheta)\,f_j$). 
For $u_1=(\nu+i\lambda)u_\vartheta$ with $\nu>1$, $\lambda\in\R$, the complex Riccati solution to \eqref{eq:OU-bns} satisfies
\begin{align}
  \psi_1(t,u)(x) 
  &= (\nu+i\lambda)\!\left(1+\frac{1}{\alpha}\big(1-\E^{-\alpha(x\wedge t)}\big)
                               +\frac{1}{\alpha}\E^{-\alpha t}\,\mathbf 1_{\{x\geq t\}}
                                  \big(1-\E^{-\alpha(\vartheta\wedge x-t)}\big)\right), \label{eq:psi1-bns-clean}\\[4pt]
  \psi_2(t,u) f_n 
  &= -\sum_{j\geq 1}\int_0^t \E^{-(a_n+a_j)(t-s)}
     \Big[ \frac{(\nu+i\lambda)^2}{4\,n j}\,f_n(s+\vartheta)f_j(s+\vartheta)
           + \frac{(\nu+i\lambda)}{2\,n j}\,c_j(\vartheta)\,f_n(s+\vartheta)\Big]\D s  f_j, 
  \label{eq:psi2-bns-clean}
\end{align}
and
\begin{align}
    \phi(t, u) &= \beta t - \beta \int_0^t \exp\big(\frac{1}{8} (\nu+\mathrm{i}\lambda)^2\sum_{n=1}^\infty\frac{1}{n^4}\E^{\frac{s}{n^2}} \int_0^s \E^{-\frac{l}{n^2}} f_n(l+\vartheta)\left(f_n(l+\vartheta)+\frac{2c_n(\vartheta)}{\nu+\mathrm{i}\lambda}\right)\, \D l\big)\, \D s.\label{eq:phi}
\end{align}
In particular, we have
\begin{equation}\label{eq:Dxi2}
  \langle D,\psi_2(t,u)\rangle
  = -\int_0^t\Big[
       \frac{(\nu+i\lambda)^2}{8}\sum_{n\geq 1}\frac{1}{n^4} f_n(s+\vartheta)^2
       + \frac{(\nu+i\lambda)}{4}\sum_{n\geq 1}\frac{c_n(\vartheta)}{n^4} f_n(s+\vartheta)
     \Big] \E^{-2a_n(t-s)}\,\D s.
\end{equation}
\end{lemma}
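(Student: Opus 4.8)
The plan is to solve the three equations of the complex Riccati system \eqref{eq:complex-Riccati-phi}--\eqref{eq:complex-Riccati-psi-2} in cascade. The structural point that makes this case considerably simpler than the Wishart model is that, for the operator-valued Barndorff--Nielsen--Shephard dynamics \eqref{eq:OU-bns}, the vector field $R$ is \emph{affine-linear} in its second argument: the drift $B(y)=Qy+yQ^{*}$ contributes only the linear term $uQ^{*}+Qu$, with no quadratic-in-$u$ term. Consequently each of the three equations is either a linear inhomogeneous (operator) ODE or a plain quadrature, and global existence and uniqueness of the solution are automatic, so the only task is to write the solution down explicitly in the basis $(f_{n})$.

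First I would solve $\psi_{1}$. Equation \eqref{eq:complex-Riccati-psi-1} decouples from $\psi_{2}$, and exactly as in the pure-jump computation (Lemma~\ref{lem:Riccati-pure-jump-stoc-vol}) its unique mild solution is $\psi_{1}(t,u)=S^{*}(t)u_{1}$; in particular \eqref{eq:psi1-bns-clean} is literally \eqref{eq:riccati_pure_jump1}. With $u_{1}=(\nu+i\lambda)u_{\vartheta}=(\nu+i\lambda)S^{*}(\vartheta)h_{0}$ the semigroup property gives $\psi_{1}(t,u)=(\nu+i\lambda)u_{t+\vartheta}$, so that $\langle\psi_{1}(t,u),h\rangle_{w}=(\nu+i\lambda)h(t+\vartheta)$ for every $h\in H_{w}$; inserting the explicit form \eqref{eq:semigroup-excplicit-form} of $S^{*}(t)$ for the weight $w(x)=e^{\alpha x}$ yields \eqref{eq:psi1-bns-clean}.

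Next I would solve $\psi_{2}$. Substituting $\psi_{1}$ into $R$ turns \eqref{eq:complex-Riccati-psi-2} into the linear inhomogeneous operator ODE $\partial_{t}\psi_{2}=Q\psi_{2}+\psi_{2}Q^{*}+G(t)$, $\psi_{2}(0)=0$, with the continuous, trace-class-valued source $G(t)=-\tfrac12\bigl(D^{1/2}\psi_{1}(t)\bigr)^{\otimes2}-D^{1/2}\psi_{1}(t)\otimes D^{1/2}\Upsilon^{\vartheta}$. Using $D^{1/2}=\tfrac1{\sqrt2}\sum_{m}\tfrac1m f_{m}\otimes f_{m}$ and the identity for $\langle\psi_{1},f_{m}\rangle_{w}$ above one computes $D^{1/2}\psi_{1}(s)=\tfrac{\nu+i\lambda}{\sqrt2}\sum_{m}\tfrac{f_{m}(s+\vartheta)}{m}f_{m}$ and $D^{1/2}\Upsilon^{\vartheta}=\tfrac1{\sqrt2}\sum_{j}\tfrac{c_{j}(\vartheta)}{j}f_{j}$, hence the matrix entries $\langle f_{j},G(s)f_{n}\rangle_{w}=-\tfrac{(\nu+i\lambda)^{2}}{4nj}f_{n}(s+\vartheta)f_{j}(s+\vartheta)-\tfrac{\nu+i\lambda}{2nj}c_{j}(\vartheta)f_{n}(s+\vartheta)$. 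The $C_{0}$-semigroup generated by $Z\mapsto QZ+ZQ^{*}$ is $Z\mapsto e^{tQ}Ze^{tQ^{*}}$, and since $Q=-\sum_{n}a_{n}f_{n}\otimes f_{n}$ is diagonal and self-adjoint it multiplies the matrix entry $Z_{jn}$ by $e^{-(a_{j}+a_{n})t}$. The variation-of-constants formula $\psi_{2}(t)=\int_{0}^{t}e^{(t-s)Q}G(s)e^{(t-s)Q^{*}}\,\D s$, applied to $f_{n}$, then yields \eqref{eq:psi2-bns-clean}; being continuous and trace-class-valued this $\psi_{2}$ is the unique global solution (existence being immediate here since the equation is linear with bounded generator, cf.\ the analogue of Proposition~\ref{prop:existence-complex-solution}). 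Taking the Hilbert--Schmidt pairing with $D$ and using $Df_{n}=\tfrac1{2n^{2}}f_{n}$ kills all off-diagonal terms, $\langle D,\psi_{2}(t)\rangle=\sum_{n}\tfrac1{2n^{2}}\langle f_{n},\psi_{2}(t)f_{n}\rangle_{w}$, which after interchanging the sum with the time integral gives \eqref{eq:Dxi2}. Finally, integrating \eqref{eq:complex-Riccati-phi}, $\phi(t,u)=\beta\int_{0}^{t}\bigl(e^{-\langle D,\psi_{2}(s,u)\rangle}-1\bigr)\,\D s=-\beta t+\beta\int_{0}^{t}e^{-\langle D,\psi_{2}(s,u)\rangle}\,\D s$, and substituting the explicit form of $\langle D,\psi_{2}\rangle$ for the diagonal $Q$ at hand and simplifying produces \eqref{eq:phi}.

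I do not expect a genuine obstruction; the only work beyond bookkeeping is the analytic justification of two points. First, one must verify that the formal Duhamel integral really defines a trace-class-valued $C^{1}$ curve (so that $\langle D,\psi_{2}(t)\rangle$ and $F(\psi_{2}(t))$ are meaningful) and coincides with the mild solution from the general theory — this is routine because $Q\in\cL(H_{w})$ is bounded and $G$ is continuous into the trace class. Second, one must justify the Fubini-type interchanges of the infinite sums over $(f_{n})$ with the time integrals used in the passages to \eqref{eq:Dxi2} and \eqref{eq:phi}; these rest on a uniform bound $\sup_{n}\sup_{x\in[0,L]}|f_{n}(x)|<\infty$ on compact intervals (from the boundedness of the weighted Laguerre functions defining $(f_{n})$, see Appendix~\ref{sec:appendix1}) together with $\sum_{n}n^{-4}<\infty$, which make every series involved absolutely convergent, uniformly on compact time intervals. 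Establishing this last estimate cleanly is the most delicate ingredient of the argument.
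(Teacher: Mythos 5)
This is essentially the paper's own proof: $\psi_1(t,u)=S^*(t)u_1$, variation of constants for $\psi_2$ with $e^{(t-s)\mathcal{L}_Q}$ acting on the basis tensors $f_n\otimes f_j$ by the factor $e^{-(a_n+a_j)(t-s)}$, orthonormality reducing $\langle D,\psi_2\rangle$ to the diagonal terms \eqref{eq:Dxi2}, and a quadrature for $\phi$; the analytic caveats you flag (trace-class regularity of the Duhamel integral, interchange of sums and time integrals) are not spelled out in the paper but are routine, as you say. One bookkeeping correction: with the $F(u)=\beta(e^{-\langle D,u\rangle}-1)$ you quote, integration gives $\phi(t,u)=-\beta t+\beta\int_0^t e^{-\langle D,\psi_2(s,u)\rangle}\,\D s$, which is the negative of \eqref{eq:phi}; the intended convention is that of \eqref{eq:F}, where the jump integral enters with a minus sign (as used in the proof of Lemma~\ref{lem:Riccati-pure-jump-stoc-vol}), and the paper's own proof of the present lemma contains the same slip, so this is a sign-convention issue rather than a gap in your argument (note also that \eqref{eq:phi} as printed already specializes the weight $e^{-2a_n(t-s)}$ from \eqref{eq:Dxi2} to $a_n=\tfrac{1}{2n^2}$, with the exponent's sign flipped, so your general formula is the correct target).
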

\begin{proof}
Since $\psi_1$ solves $\psi_1(t,u)=u_1+\cA^*\int_0^t \psi_1(s,u)\,\D s$ with $u_1=(\nu+i\lambda)u_\vartheta$,
we have the mild form $\psi_1(t,u)=S^*(t)u_1$, which yields \eqref{eq:psi1-bns-clean} exactly as in Lemma~\ref{lem:Riccati-pure-jump-stoc-vol}.

The $\psi_2$-equation reads
$$
  \partial_t\psi_2(t)
  = \underbrace{\psi_2(t)Q^*+Q\psi_2(t)}_{\mathcal L_Q(\psi_2(t))}
        \;-\;\frac{1}{2}\,(D^{1/2}\psi_1(t))^{\otimes 2}
        \;-\; D^{1/2}\psi_1(t)\otimes D^{1/2}\Upsilon^\vartheta,
  \qquad \psi_2(0)=0.
$$
Hence, by variation of constants,
\begin{equation}\label{eq:mild-psi2}
  \psi_2(t)
  = \int_0^t \E^{(t-s)\mathcal L_Q}\left[\,
           -\frac{1}{2}(D^{1/2}\psi_1(s))^{\otimes 2}
           - D^{1/2}\psi_1(s)\otimes D^{1/2}\Upsilon^\vartheta
        \right]\D s.
\end{equation}
Assume $Q=-\sum_{n\ge1} a_n\, f_n\otimes f_n$ (with $a_n\geq 0$). Then, for basis tensors,
$$
  \E^{(t-s)\mathcal L_Q}(f_n\otimes f_j) = \E^{-(a_n+a_j)(t-s)}\,f_n\otimes f_j.
$$
Moreover,
$$
  D^{1/2}\psi_1(s)
  = \frac{\nu+i\lambda}{\sqrt 2}\sum_{n\geq 1}\frac{1}{n}\,f_n(s+\vartheta)\,f_n,
  \qquad
  D^{1/2}\Upsilon^\vartheta \;=\; \frac{1}{\sqrt 2}\sum_{j\geq 1}\frac{1}{j}\,c_j(\vartheta)\,f_j,
  \quad c_j(\vartheta):=-\tfrac12\langle f_j,S^*(\vartheta)h_0\rangle_w.
$$
Using $(a\otimes b)h=\langle a,h\rangle_w\,b$ and \eqref{eq:mild-psi2}, we obtain for each $n$:
\begin{align*}
  \psi_2(t)f_n
  &= -\!\sum_{j\ge1}\!\int_0^t \E^{-(a_n+a_j)(t-s)}
     \Big[\frac{(\nu+i\lambda)^2}{4\,n j}\,f_n(s+\vartheta)f_j(s+\vartheta)
          +\frac{(\nu+i\lambda)}{2\,n j}\,c_j(\vartheta)\,f_n(s+\vartheta)\Big]\D s\; f_j,
\end{align*}
which is precisely \eqref{eq:psi2-bns-clean}. Since $D=\frac12\sum_{n\ge1}\frac{1}{n^2}\,f_n\otimes f_n$,
$$
  \langle D,\psi_2(t)\rangle
  = \sum_{n\ge1}\frac{1}{2n^2}\,\langle f_n,\psi_2(t)f_n\rangle_w.
$$
By orthonormality, only the $j=n$ terms survive in $\langle f_n,\psi_2(t)f_n\rangle_w$, giving
$$
  \langle f_n,\psi_2(t)f_n\rangle_w
  = -\int_0^t \E^{-2a_n(t-s)}
      \left[\frac{(\nu+i\lambda)^2}{4\,n^2}\,f_n(s+\vartheta)^2
           +\frac{(\nu+i\lambda)}{2\,n^2}\,c_n(\vartheta)\,f_n(s+\vartheta)\right]\D s,
$$
from which we deduce formula \eqref{eq:Dxi2}.
Finally, since for the compound-Poisson input without finite-variation drift it holds that
$\partial_t\phi(t,u)=F(\psi_2(t,u))=-\beta\big(\E^{-\langle D,\psi_2(t,u)\rangle}-1\big)$,
integrating over $[0,t]$ yields \eqref{eq:phi}.
\end{proof}

Therefore, the option price is
\begin{equation}\label{eq:option1_b0}
\begin{aligned}
\mathbb{E}_{\MQ}\big[(F(t,t+\vartheta)-K)^+\mid \cF_0\big]
&= \frac{1}{2\pi}\int_{\R} g(\lambda)\,
   \exp\Big(
      -\phi(t,u) + \langle X_0,\psi_1(t,u)\rangle_w - \langle Y_0,\psi_2(t,u)\rangle
   \Big)\,\D\lambda \\
&= \frac{1}{2\pi}\int_{\R} g(\lambda)\,
   \exp\Big(
      \beta\!\int_0^t \!\big(\E^{-\langle D,\psi_2(s,u)\rangle}-1\big)\,\D s\big)
      \\ &\qquad\qquad \times\exp\Big( (\nu+i\lambda)\,X_0(t+\vartheta)
      - \langle Y_0,\psi_2(t,u)\rangle\Big)\,\D\lambda,
\end{aligned}
\end{equation}
where $u_1=(\nu+i\lambda)u_\vartheta$ with $\nu>1$, and $\langle D,\psi_2(t,u)\rangle$ is as in \eqref{eq:Dxi2}, 
with $Q=-\sum_{n\ge1} a_n\,f_n\otimes f_n$ ($a_n\ge0$) and $c_n(\vartheta):=-\tfrac12\langle f_n,S^*(\vartheta)h_0\rangle_w$.
The term $\langle X_0,\psi_1(t,u)\rangle_w=(\nu+i\lambda)X_0(t+\vartheta)$ comes from $\psi_1(t,u)=S^*(t)u_1$.

For the numerics, we truncate the series at $N$:
\begin{equation}\label{eq:option1_b0_trunc}
\begin{aligned}
\mathbb{E}_{\MQ}\big[(F(t,t+\vartheta)-K)^+\mid \cF_0\big]
&\approx \frac{1}{2\pi}\int_{\R} g(\lambda)\,
   \exp\Big(
      \beta\!\int_0^t \!\big(\E^{-\langle D,\psi_2^{(N)}(s,u)\rangle}-1\big)\,\D s\Big) \\
      &\qquad\qquad\times \exp\Big((\nu+i\lambda)\,X_0(t+\vartheta)
      - \langle Y_0,\psi_2^{(N)}(t,u)\rangle
   \Big)\,\D\lambda,
\end{aligned}
\end{equation}
with
$$
\langle D,\psi_2^{(N)}(t,u)\rangle
= -\int_0^t\Big[
   \frac{(\nu+i\lambda)^2}{8}\sum_{n=1}^{N}\frac{1}{n^4} f_n(s+\vartheta)^2
 + \frac{(\nu+i\lambda)}{4}\sum_{n=1}^{N}\frac{c_n(\vartheta)}{n^4} f_n(s+\vartheta)
 \Big] \E^{-2a_n(t-s)}\,\D s.
$$
In our experiments, we specialize to $a_n=\frac{1}{2n^2}$, then $\E^{-2a_n(t-s)}=\E^{-(t-s)/n^2}$. \medskip

The following Figure \ref{fig:mc_b1} and Tables \ref{tab:n_theta_b1} and \ref{tab:n_beta_b1} present the Monte Carlo convergence and the convergence analysis of number of used basis function $N$. Similar to the results in the pure jump case, one observes the option price converges fast w.r.t the number of used basis function, and Monte Carlo simulation converges to our analytical formula.
\begin{table}[h]
    \centering
    \begin{tabular}{|c|ccccc|}
        \hline
        \textbf{Time to maturity} & \textbf{1} & \textbf{3} & \textbf{5} & \textbf{10} & \textbf{20} \\
        \hline
        \textbf{N=2} & 0.03\% & 0.07\% & 0.23\% & 0.42\% & 0.42\% \\
        \textbf{N=3} & 0.01\% & 0.03\% & 0.02\% & 0.03\% & 0.01\% \\
        \textbf{N=5} & 0.01\% & 0.02\% & 0.00\% & 0.00\% & 0.00\% \\
        \textbf{N=8} & 0.00\% & 0.00\% & 0.00\% & 0.00\% & 0.00\% \\
        \hline
    \end{tabular}
    \vspace{2mm}
    \caption{Relative absolute difference for different number of basis function $N$ and time to maturity $\vartheta$ on the option price compared to $N=10$, with $Y_0=D, h_0=1, \alpha=0.1, T_0=1$, strike $K=1$ and the intensity $\beta=1$. BNS model}
    \label{tab:n_theta_b1}
\end{table}

\begin{table}[h]
    \centering
    \begin{tabular}{|c|cccc|}
    \hline
    \textbf{beta} & \textbf{0.5} & \textbf{1} & \textbf{2} & \textbf{5} \\
        \hline
        \textbf{N=2} & 0.37\% & 0.32\% & 0.24\% & 0.03\% \\
        \textbf{N=3} & 0.03\% & 0.05\% & 0.10\% & 0.21\% \\
        \textbf{N=5} & 0.00\% & 0.00\% & 0.00\% & 0.00\% \\
        \textbf{N=8} & 0.00\% & 0.00\% & 0.00\% & 0.00\% \\
        \hline
    \end{tabular}
    \vspace{2mm}
    \caption{Relative absolute difference for different $N$ and $\beta$ on the option price compared to $N=10$, with $Y_0=D, h_0=1, \alpha=0.1, T_0=2$, strike $K=2$ and the time to maturity $\vartheta=1$. BNS model.}
    \label{tab:n_beta_b1}
\end{table}

\begin{figure}[ht]
    \centering
    \includegraphics[width=1\linewidth]{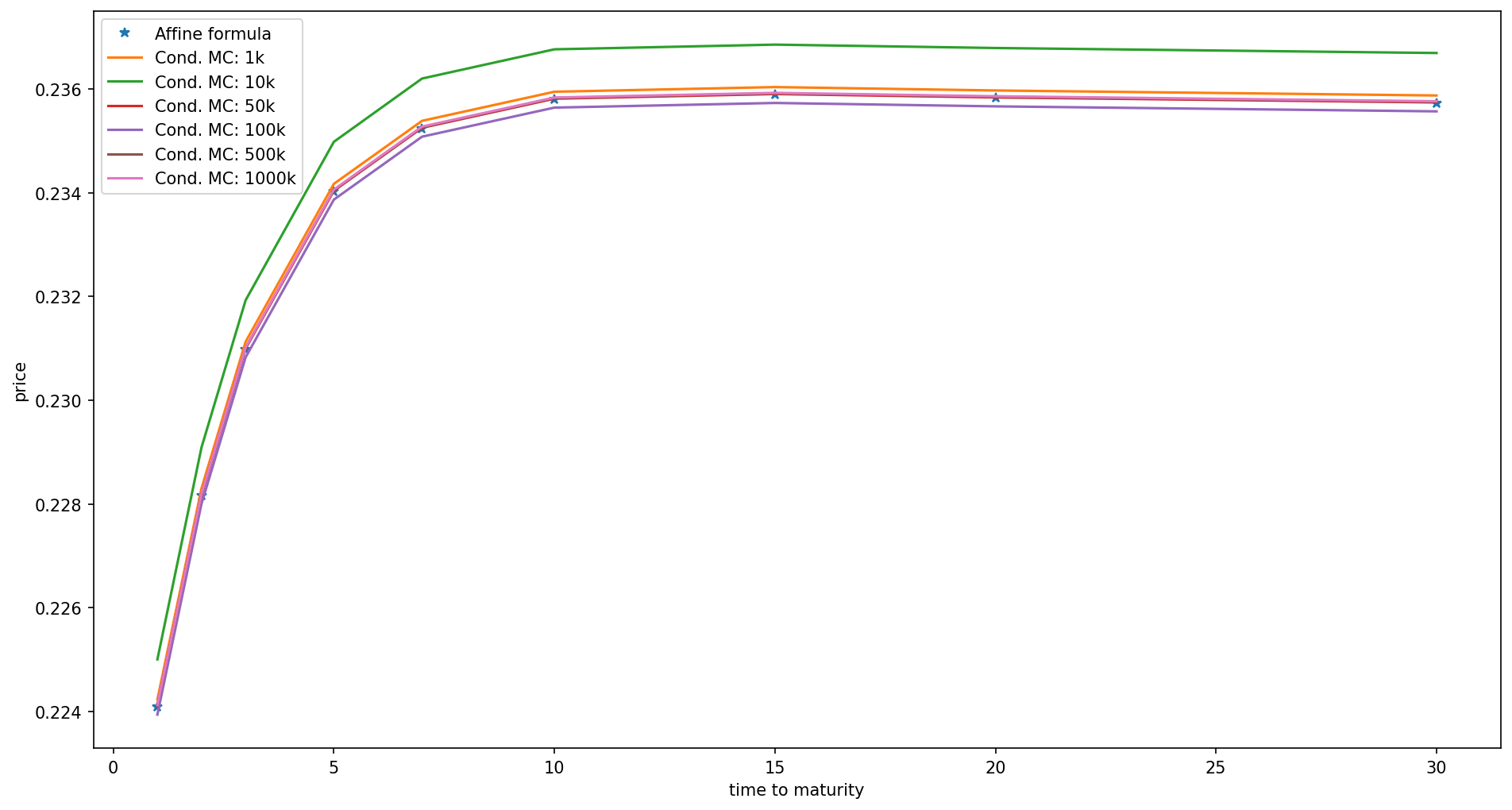}
    \caption{Conditional Monte Carlo vs.\ affine formula, with $Y_0=D, h_0=1, \alpha=0.1, T_0=1, N=5$, strike $K=1$ and the intensity $\beta=1$. }
    \label{fig:mc_b1}
\end{figure}

\subsection{Pricing in the affine jump model with state-dependent intensity}\label{sec:example4}

As a fourth example we validate the affine semi-explicit formula for the model from Section~\ref{sec:OU-jumps} in the case of a \emph{state-dependent} compound-Poisson jump kernel. The covariance process is now an operator-valued OU process driven by jumps whose arrival rate depends on the current state.

\paragraph{Model specification.}
We take $N=5$ basis functions $(f_n)_{n=1}^5$ as in \eqref{eq:f_basis} and set $D = \frac{1}{2}\sum_{n=1}^N n^{-2} f_n\otimes f_n$.
The covariance process $\mathsf{Y}_t = \sum_{n=1}^N y_n(t)\, f_n\otimes f_n$ satisfies
\begin{equation}\label{eq:state-dep-Y}
  \D\mathsf{Y}_t = -(Q\mathsf{Y}_t + \mathsf{Y}_t Q)\,\D t + \D J_t,
  \qquad \mathsf{Y}_0 = \sum_{n=1}^N \tfrac{1}{2n^2}\, f_n\otimes f_n,
\end{equation}
with $Q = \sum_{n=1}^N \kappa_n\, f_n\otimes f_n$, $\kappa_n = n^{-2}$, so each diagonal entry satisfies $\D y_n = -2\kappa_n y_n\,\D t + \D J_n$. The jump process $J_t$ has a single jump direction $\Xi = \sum_{n=1}^N \xi_n\, f_n\otimes f_n$, $\xi_n = (2n^2)^{-1}$, and state-dependent intensity
\begin{equation}\label{eq:state-dep-intensity}
  \lambda(\mathsf{Y}) = \lambda_0 + \sum_{n=1}^N \frac{c_n}{\|\xi\|_2^2}\, y_n,
  \qquad \lambda_0 = 0.05,\quad c_n = \frac{0.05}{n^2},\quad \|\xi\|_2^2 = \sum_{n=1}^N \xi_n^2.
\end{equation}
The forward dynamics~\eqref{eq:sto-vol-pure-jump} and HJM drift $\Upsilon^\vartheta = -\frac{1}{2}S^*(\vartheta)h_0$ are as in the previous examples.

\paragraph{Numerical Riccati scheme.}
Unlike the pure-jump L\'evy case (Lemma~\ref{lem:Riccati-pure-jump-stoc-vol}), no closed-form Riccati solution is available due to the state-dependent jump intensity. In the single-direction diagonal specification implemented below, the generalized Riccati equations from Theorem~\ref{prop:complex-extension} reduce, for the matrix coefficients $q_{ij}(\tau) = \langle f_i, \psi_2(\tau,u)f_j\rangle_w$, to the system
\begin{equation}\label{eq:state-dep-riccati}
  \partial_\tau q_{ij}
  = -(\kappa_i+\kappa_j)\,q_{ij}
  - \frac{c^2}{4\,ij}\,f_i(\vartheta+\tau)\,f_j(\vartheta+\tau)
  + \frac{c}{4\,ij}\,f_i(\vartheta+\tau)\,f_j(\vartheta)
  - \delta_{ij}\,\frac{c_i}{\|\xi\|_2^2}\!\left(\exp\Bigl(-\sum_{n=1}^N \xi_n\, q_{nn}\Bigr)-1\right),
\end{equation}
where $c = \nu+i\lambda$. We solve~\eqref{eq:state-dep-riccati} numerically by a forward Euler scheme with step $\delta=1/365$, and recover option prices via the Fourier inversion formula~\eqref{eq:chkk-call-option-price} with truncation $N_{\rm trunc}=10$.

\paragraph{Monte Carlo benchmark.}
Given a simulated covariance path $(y_n(t_l))$, obtained by Euler--Maruyama with state-dependent Poisson arrivals and step $\delta=1/365$, the log-forward $\log F(T,T+\vartheta)$ is conditionally Gaussian with conditional mean and variance
\begin{align*}
  \mu_{\rm cond} &= X_0(T+\vartheta)
    - \frac{\delta}{4}\sum_{l}\sum_{n=1}^N \frac{y_n(t_l)}{n^2}\,f_n(\vartheta)\,f_n(\vartheta+s_l),\\
  \sigma_{\rm cond}^2 &= \frac{\delta}{2}\sum_{l}\sum_{n=1}^N \frac{y_n(t_l)}{n^2}\,f_n(\vartheta+s_l)^2,
\end{align*}
where $s_l=(l+\frac12)\delta$. Option prices are obtained by averaging the Black formula over simulated paths. We use $\alpha=0.1$, $h_0=1$, $T=1$, $K=1$, and $N_{\rm sim}\in\{1\mathrm{k},10\mathrm{k},50\mathrm{k},100\mathrm{k},200\mathrm{k}\}$.

Figure~\ref{fig:mc_state_dep} shows the option prices as a function of time to maturity $\vartheta$. The Monte Carlo estimates converge to the semi-explicit affine prices as $N_{\rm sim}$ increases, confirming consistency of the numerical Riccati scheme with direct simulation of the state-dependent model.

\begin{figure}[ht]
  \centering
  \includegraphics[width=1\linewidth]{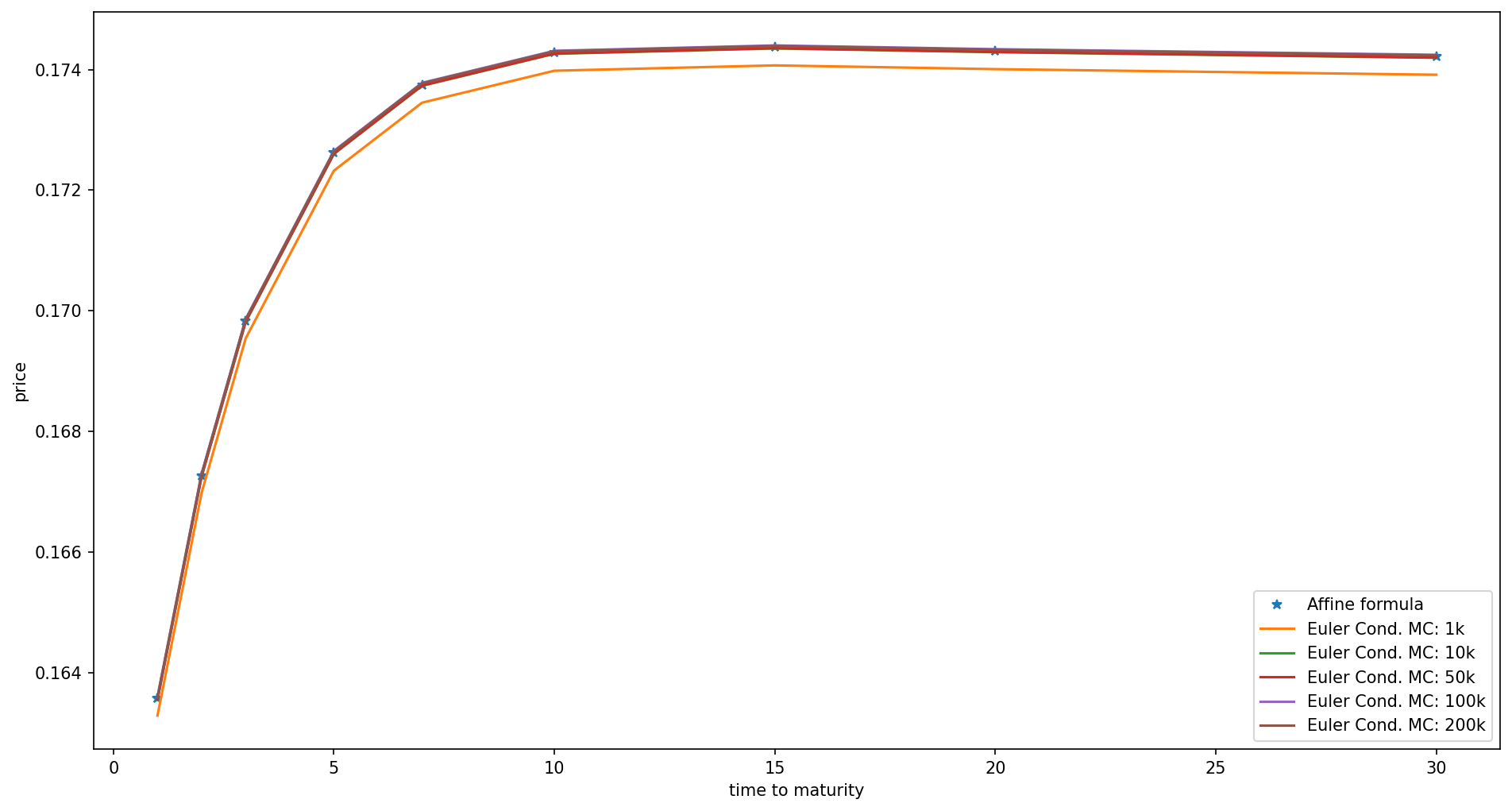}
  \caption{Euler conditional Monte Carlo vs.\ affine semi-explicit formula for the state-dependent jump model, with $\alpha=0.1$, $h_0=1$, $T=1$, $K=1$, $N=5$, $\lambda_0=0.05$, $\kappa_n=n^{-2}$, $\xi_n=(2n^2)^{-1}$.}
  \label{fig:mc_state_dep}
\end{figure}

\subsection{Running time analysis}
We report runtimes from the canonical root benchmark entry point
\texttt{python main\_new.py runtime\_benchmarks\_paper}, which stores the corresponding machine metadata and timings in \texttt{runtime\_benchmarks\_paper.json}. The benchmark was run on a Lenovo ThinkPad X1 Carbon Gen 11 (model 21HM006WMH) with a 13th Gen Intel Core i7-1355U processor and 32\,GB RAM, under Python~3.12.2, NumPy~2.2.5, and SciPy~1.15.2. For the explicit L\'evy and BNS models we use $100\,000$ conditional Monte Carlo paths, while for the state-dependent model we use $200\,000$ paths. Each entry in Table~\ref{tab:run_time} is a single wall-clock evaluation of the corresponding pricing routine in the canonical root implementation.

\begin{table}[h]
    \centering
    \begin{tabular}{|c|cc|}
    \hline
     Runtime (seconds)    & Affine formula & Monte Carlo\\
         \hline
     L\'evy driven model &  0.16 & 88.89\\
     BNS model   & 0.08 & 103.51\\
     State-dependent jump model & 55.01 & 71.78\\
     \hline
    \end{tabular}
        \vspace{2mm}
    \caption{Canonical runtime benchmark generated by \texttt{python main\_new.py runtime\_benchmarks\_paper} on the Lenovo ThinkPad X1 Carbon Gen 11 described above. The L\'evy and BNS rows use $100\,000$ Monte Carlo scenarios; the state-dependent row uses $200\,000$ scenarios.}
    \label{tab:run_time}
\end{table}

These timings confirm the qualitative picture already seen in the pricing figures: for the explicit L\'evy and BNS specifications, the affine formulas are orders of magnitude faster than the corresponding Monte Carlo benchmark, whereas in the state-dependent example the runtime gap is much smaller, due to the complex generalized Riccati equations which do not have closed form solutions beyond the OU case. Developing fast schemes for this type of infinite-dimensional Riccati equations is a promising research question left for the future.

\section{Computation of complex moments}\label{sec:complex-moments}

In this section, we prove the existence of solutions to the complex-extensions of the generalized Riccati equations~\eqref{eq:complex-Riccati-phi}-\eqref{eq:complex-Riccati-psi-2} and~\eqref{eq:complex-Riccati-phi-Wishart}-\eqref{eq:complex-Riccati-psi-2-Wishart}. These existence results are then used to compute exponential moments for the affine pure-jump stochastic volatility model and to derive an approximation for the Laplace transform of the Wishart stochastic volatility model. To this end, we introduce the concept of quasi-monotonicity.

\begin{definition}
    Let $V$ be a real Banach space, $K\subset V$ a wedge. Given $ D\subset V$ and $f\colon D\to V$ is said to be quasi monotone (with respect to $K$) if $\forall x,y \in D$ satisfying 
 $x\leq_{K}y$ and $\forall x' \in K^*$ satisfying  $\langle x', x-y\rangle_{V', V} =0$, it holds $\langle x', f(x)-f(y)\rangle_{V', V} \leq 0\,.$
\end{definition}

\subsection{Real and complex moments for the affine pure-jump stochastic volatility model}

In the following proposition we show the existence of a unique solution to the
extended Riccati equations~\eqref{eq:extended-Riccati-P}-\eqref{eq:extended-Riccati-q2} 
up to its maximal lifetime. 

\begin{proposition}\label{prop:extended-Riccati-existence}
Let $(b,B,m,\mu)$ be an admissible parameter set satisfying Assumptions \ref{def:admissibility} and let $(\cA, \dom(\cA))$ be the
generator of a strongly continuous semigroup, let $D \in \cL(H)$ be a positive
self-adjoint operator, and let $\Upsilon \in H$. Then for every $u=(u_{1},u_{2})\in H\times \cO$ there
exist a positive real number $T_{\sf q_{2}}$ and a unique solution $({\sf p}(\cdot,u), {\sf q}_1(\cdot,u),
{\sf q}_2(\cdot,u))$ to \eqref{eq:extended-Riccati-P}-\eqref{eq:extended-Riccati-q2}
on $[0,T_{\sf q_{2}})$.
\end{proposition}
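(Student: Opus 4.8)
The plan is to solve the three equations in \eqref{eq:extended-Riccati} in cascade, exploiting that \eqref{eq:extended-Riccati-q1} is linear and decoupled, that \eqref{eq:extended-Riccati-q2} for ${\sf q}_{2}$ feeds on ${\sf q}_{1}$ only, and that \eqref{eq:extended-Riccati-P} then recovers ${\sf P}$ by a quadrature. First I would note that, since $(\cA,\dom(\cA))$ generates a $C_{0}$-semigroup on the Hilbert space $H$, its adjoint $(\cA^{*},\dom(\cA^{*}))$ generates the adjoint $C_{0}$-semigroup $(S^{*}(t))_{t\ge0}$, and that the unique mild solution of \eqref{eq:extended-Riccati-q1} is ${\sf q}_{1}(t,u)=S^{*}(t)u_{1}$; this lies in $C([0,T],H)$ for every $T>0$, and the range $\{{\sf q}_{1}(t,u):t\in[0,T]\}$ is compact, hence bounded, in $H$. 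Uniqueness of ${\sf q}_{1}$ follows from the usual argument that $\int_{0}^{t}r(s)\,\D s$ is a classical solution of $\dot v=\cA^{*}v$, $v(0)=0$, whenever $r$ solves the homogeneous version of \eqref{eq:extended-Riccati-q1}.

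\textbf{The key regularity estimate.} The core of the proof is to show that ${\sf R}\colon H\times\cO\to\cH$ from \eqref{eq:R} is well defined, jointly continuous, and locally Lipschitz in its second argument uniformly for the first argument in bounded subsets of $H$, and likewise that ${\sf F}\colon\cO\to\MR$ from \eqref{eq:F} is well defined and continuous. In ${\sf R}$ the map $u\mapsto\tilde B^{*}(u)$ is bounded and linear (here $\tilde B\in\cL(\cH)$ by the finite-variation Assumption~\ref{ass:chkk-finite-variation}), and $-\tfrac{1}{2}(D^{1/2}h)^{\otimes2}-D^{1/2}h\otimes D^{1/2}\Upsilon$ is a quadratic polynomial in $h$ not involving $u$, hence smooth. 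The work is concentrated in the jump integral $\int_{\cHpluso}(\E^{-\langle\xi,u\rangle}-1)\,\mu(\D\xi)/\|\xi\|_{2}^{2}$ (which is absent in case (i) of Assumption~\ref{ass:chkk-finite-variation}, where it reduces trivially), and I would control it by splitting the integration over $\cHpluso$ according to $\{\|\xi\|_{2}>1\}$ and $\{\|\xi\|_{2}\le1\}$. On the large-jump part the defining condition of $\cO$ together with dominated convergence yields analyticity in $u$ and local Lipschitz bounds; on the small-jump part the elementary estimates $|\E^{-\langle\xi,u\rangle}-1|\le C(u)\,\|\xi\|_{2}$ and $|\E^{-\langle\xi,u\rangle}-\E^{-\langle\xi,v\rangle}|\le C\,\|\xi\|_{2}\,\|u-v\|_{2}$ (for $u,v$ in a fixed bounded set), combined with the $\cH^{+}$-valuedness of $\mu$ and the near-origin integrability built into Assumption~\ref{def:admissibility}(iii) and Assumption~\ref{ass:chkk-finite-variation}, show that the integral defines an element of $\cH$ depending locally Lipschitz-continuously on $u$. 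The same dyadic split (now using the integrability of $m$ from Assumption~\ref{def:admissibility}(i) together with the definition of $\cO$) handles ${\sf F}$.

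\textbf{Local well-posedness and conclusion.} Substituting ${\sf q}_{1}(t,u)=S^{*}(t)u_{1}$ into \eqref{eq:extended-Riccati-q2} turns it into the non-autonomous ODE $\dot{\sf q}_{2}(t)={\sf R}(S^{*}(t)u_{1},{\sf q}_{2}(t))$ on the open set $\cO\subset\cH$, with initial datum $u_{2}\in\cO$; by the previous paragraph its right-hand side is continuous in $t$ and locally Lipschitz in ${\sf q}_{2}$, so the Picard--Lindel\"of theorem for ordinary differential equations in Banach spaces (in the form of \cite{Mar76}) produces a maximal lifetime $T_{\sf q_{2}}\in(0,\infty]$ and a unique solution ${\sf q}_{2}(\cdot,u)\in C^{1}([0,T_{\sf q_{2}}),\cO)$, where $T_{\sf q_{2}}>0$ precisely because $\cO$ is open and $u_{2}\in\cO$. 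Finally I would set ${\sf P}(t,u):=\int_{0}^{t}{\sf F}({\sf q}_{2}(s,u))\,\D s$ for $t\in[0,T_{\sf q_{2}})$; since $s\mapsto{\sf q}_{2}(s,u)$ is continuous into $\cO$ and ${\sf F}$ is continuous on $\cO$, the function ${\sf P}(\cdot,u)$ is $C^{1}$ in $t$ and solves \eqref{eq:extended-Riccati-P}. This exhibits $({\sf P}(\cdot,u),{\sf q}_{1}(\cdot,u),{\sf q}_{2}(\cdot,u))$ as a mild solution on $[0,T_{\sf q_{2}})$, and uniqueness of the triple follows from uniqueness in each of the three cascaded equations. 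The step I expect to be the main obstacle is the regularity claim of the middle paragraph: verifying that the jump integral genuinely lands in the space of Hilbert--Schmidt operators and is locally Lipschitz (indeed analytic) in $u$, with estimates that are uniform for $h$ in bounded sets and locally uniform for $u\in\cO$; the finite-dimensional counterpart is \cite[Theorem~2.26]{KM15}, and the passage to the Hilbert--Schmidt setting is exactly where the admissibility and finite-variation hypotheses have to be invoked with care.
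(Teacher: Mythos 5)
Your proposal is correct and follows essentially the same route as the paper's proof: solve \eqref{eq:extended-Riccati-q1} by ${\sf q}_{1}(t,u)=S^{*}(t)u_{1}$ via the adjoint semigroup, substitute into \eqref{eq:extended-Riccati-q2} to get a non-autonomous ODE on the open set $\cO$ whose right-hand side $\cR_{u_{1}}(t,\cdot)$ is locally Lipschitz, invoke the standard Banach-space ODE result of \cite{Mar76} for a unique maximal solution ${\sf q}_{2}$, and recover ${\sf P}$ by quadrature using continuity of ${\sf F}$. You in fact supply more detail than the paper on the local Lipschitz verification for the jump integral (the paper simply asserts it), and your appeal to Assumption~\ref{ass:chkk-finite-variation} for the well-definedness of $\tilde{B}$ matches how the paper itself justifies that definition.
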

\begin{proof}
Standard semigroup theory ensures that for any $u_{1}\in H$ and $T>0$ the unique mild solution to
\eqref{eq:extended-Riccati-q1} is given by ${\sf q}_{1}(t,u)=S^*(t)u_{1}$ for $t\in
[0,T]$. We define $\cR_{u_1} (t, \cdot)\colon \cO \to \cH$, by 
\begin{align*}
\cR_{u_1}(t,u) &= 
\tilde{B}^{*}(u) 
-
\tfrac{1}{2}\big(D^{1/2}S^*(t)u_1\big)^{\otimes 2}- D^{1/2}S^*(t)u_1\otimes D^{1/2}\Upsilon \\
&\qquad - \int_{\cHpluso}\big(\E^{-\langle
    \xi,u\rangle}-1\big)\frac{\mu(\D \xi)}{\norm{\xi}_2^{2}}.
\end{align*}
Plugging ${\sf q}_{1}(t,u)$ into \eqref{eq:extended-Riccati-q2}, we thus obtain the equation
\begin{align}\label{eq:extended-Riccati-q-2-1-proof}
  \begin{cases}
    \frac{\partial {\sf q}_{2}}{\partial t}(t,u)= \cR_{u_1}(t,{\sf q}_2(t,u))\,,\\
    {\sf q}_{2}(0,u)=u_{2}\,. 
  \end{cases}
\end{align}
Observe that the function
$\cR_{u_{1}}(t,\cdot)$ is locally Lipschitz continuous on $\cO$ for every
$t\geq 0$ and $u_{1}\in H$. Since $\cO$ is assumed to be open, it follows from
standard ODE results (see
e.g.~\cite[Chapter 6, Proposition 1.2]{Mar76}) that for every $u_{2}\in\cO$ there exist a $T_{\sf q_{2}}>0$ and a unique solution
${\sf q}_{2}(\cdot,u)$ of~\eqref{eq:extended-Riccati-q-2-1-proof} on $[0,T_{{\sf q}_{2}})$ with
\begin{align*}
T_{\sf q_{2}}=\liminf_{n\to\infty}\set{t\geq 0\colon \norm{{\sf q}_{2}(t,u)}_2\geq
  n\text{ or }{\sf q}_{2}(t,u)\in\partial \cO}.
\end{align*}
By inserting $\sf q_2(\cdot,u)$ in  \eqref{eq:extended-Riccati-P} and observing that
$F$ is continuous on $\cO$, the statement follows.
\end{proof}

To compute exponential moments for the process $(X_t, \sY_t)_{t\geq 0}$ in \eqref{eq:joint-affine-SDE}, we first consider
the joint process $(X^{(n)}_t, \sY_t)_{t\geq 0}$ obtained by replacing $\mathcal{A}$ in~\eqref{eq:joint-affine-SDE} by its Yosida approximation $\mathcal{A}^{(n)}:=n\mathcal{A}(nI-\mathcal{A})^{-1}$. Namely,
we consider the process $X^{(n)}\colon [0,\infty)\times \Omega \rightarrow H$ given by the solution to (see \cite[Proposition 6.4]{DZ92} for the existence of the solution to $X^{(n)}$)
\begin{align}\label{eq:approximating-Y}
X^{(n)}_{t}=x+\int_{0}^{t} (\cA^{(n)}X^{(n)}_{s}+ D^{1/2}\sY_{s}D^{1/2}\Upsilon) \,\D s+ \int_{0}^{t} D^{1/2}\sY_{s}^{1/2}\,\D W_{s}\,, \quad t\geq 0.
\end{align}
The use of the approximation allows us to exploit the semimartingale theory and to apply the It\^o formula and standard techniques in order to compute exponential moments for 
 $(X^{(n)}_t, \sY_t)_{t\geq 0}$. Then we take the limit when $n$ goes to $\infty$. Observe that from \cite[Proposition 7.5]{DZ92} it holds
 \begin{equation}\label{eq:Yosida_OU_converges}
\lim_{n \rightarrow \infty} 
\mathbb{E} \left[ 
\sup_{0 \leq t \leq T} \| X^{(n)}_t -X_t \|^2_{H} 
\right]= 0\,.
\end{equation}
We are ready to derive the exponential moments of the affine pure-jump stochastic volatility model. 
\begin{proposition}\label{prop:extended-affine-Z}
Let $(b,B,m,\mu)$ be an admissible parameter set satisfying Assumptions~\ref{def:admissibility} and~\ref{ass:chkk-finite-variation},
let $(X,Y)$ be the stochastic volatility model \eqref{eq:joint-affine-SDE},
let $u=(u_1, u_2) \in H\times \cO$, and let $({\sf P}(\cdot,u), {\sf q}_1(\cdot, u), {\sf q}_2(\cdot,u))$ be the mild solution to the Riccati equation \eqref{eq:extended-Riccati} up to time 
$T$ the existence of which is guaranteed by Proposition \ref{prop:extended-Riccati-existence}.
Then we have:
  \begin{align}
    \label{eq:extended-expo-moment-finite}
    \EX{\E^{\langle X_{T}, u_{1}\rangle_{H}-\langle {\sf Y}_{T}, u_{2}\rangle}}<\infty.
  \end{align}
  Moreover, for every $t\leq T$, it holds
  \begin{align}
    \label{eq:extended-affine-formula-intro}
    \EX{\E^{\langle X_{T}, u_{1}\rangle_{H}-\langle {\sf Y}_{T},
    u_{2}\rangle}\mid \mathcal{F}_t}=\E^{-{\sf P}(T-t,u)+\langle X_t, {\sf q}_{1}(T-t,u)\rangle_{H}-\langle {\sf Y}_t, {\sf q}_{2}(T-t,u)\rangle}. 
  \end{align}  
\end{proposition}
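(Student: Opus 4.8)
The plan is to pass to the Yosida-regularised model, where Itô calculus is available, prove the affine identity there as a genuine \emph{martingale} property, and then send $n\to\infty$ via~\eqref{eq:Yosida_OU_converges}. In outline: (i) replace $\cA$ by its Yosida approximation and solve the corresponding Riccati system; (ii) apply Itô's formula to the natural candidate process $M^{(n)}$ and check, using the Riccati equations, that it is a local martingale, hence---being nonnegative---a supermartingale, which already yields the moment bound for the approximate process; (iii) upgrade the supermartingale to a true martingale; (iv) pass to the limit. The one genuinely delicate step is (iii), since $u_2$ is allowed to sit on the \emph{maximal} domain $\cO$ where the candidate need not be a priori bounded.

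\emph{Regularisation and Itô.} Work with $(X^{(n)},\sY)$ from~\eqref{eq:approximating-Y}, where $\cA^{(n)}=n\cA(nI-\cA)^{-1}$; since $\cA^{(n)}$ is bounded, $(X^{(n)},\sY)$ is a genuine $H\times\cH$-valued semimartingale with square-integrable jump part $J$. Let $({\sf P}^{(n)},{\sf q}_1^{(n)},{\sf q}_2^{(n)})$ solve \eqref{eq:extended-Riccati-P}--\eqref{eq:extended-Riccati-q2} with $\cA$ replaced by $\cA^{(n)}$ (the approximate form of Proposition~\ref{prop:extended-Riccati-existence}), so that ${\sf q}_1^{(n)}(t,u)=\E^{t(\cA^{(n)})^{*}}u_1$; standard Yosida facts give ${\sf q}_1^{(n)}\to{\sf q}_1$ uniformly on $[0,T]$, and since ${\sf R}$ is continuous in its first argument and locally Lipschitz in its second, continuous dependence of the ODE~\eqref{eq:extended-Riccati-q2} on its forcing term gives ${\sf q}_2^{(n)}\to{\sf q}_2$ uniformly on $[0,T]$ (using that ${\sf q}_2(\cdot,u)$ stays in the open set $\cO$, so the approximate flows remain in $\cO$ for $n$ large) and hence ${\sf P}^{(n)}\to{\sf P}$. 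Then apply the Itô formula for Hilbert-space semimartingales to
\[
  M^{(n)}_t:=\exp\!\Big(-{\sf P}^{(n)}(T-t,u)+\langle X^{(n)}_t,{\sf q}_1^{(n)}(T-t,u)\rangle_{H}-\langle \sY_t,{\sf q}_2^{(n)}(T-t,u)\rangle\Big).
\]
The $\partial_t$-contributions combine with: the $\cA^{(n)}X^{(n)}_t$-drift paired against ${\sf q}_1^{(n)}$; the drift $D^{1/2}\sY_tD^{1/2}\Upsilon$ together with the Brownian second-order term $\tfrac12\langle\sY_t,(D^{1/2}{\sf q}_1^{(n)})^{\otimes2}\rangle$, both paired against $\sY_t$; the $b+B(\sY_t)$-drift of $\sY$; and the $\sY_t$-dependent compensator $M(\sY_t,\D\xi)$ of the jumps, which reproduces precisely the integral terms in ${\sf F}$ and ${\sf R}$ once $\chi$ is absorbed into $\tilde b,\tilde B$. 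By construction of \eqref{eq:extended-Riccati-P}--\eqref{eq:extended-Riccati-q2} all these finite-variation terms cancel, so $M^{(n)}$ is a nonnegative local martingale, hence a supermartingale, and at $t=0$ one reads off $\EX{\E^{\langle X^{(n)}_T,u_1\rangle_H-\langle\sY_T,u_2\rangle}}\le M^{(n)}_0<\infty$.

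\emph{The true-martingale property (the crux).} Localising by stopping times $\tau_k\uparrow\infty$ at which $M^{(n)}$ is bounded yields $\EX{M^{(n)}_{t\wedge\tau_k}}=M^{(n)}_0$, and it remains to verify uniform integrability of $(M^{(n)}_{t\wedge\tau_k})_{k}$. The most transparent route is a reduction: for $u_1=0$ and $u_2\in\cHplus$, invariance of $\cHplus$ under the Riccati flow---a consequence of quasi-monotonicity of ${\sf R}$, established in Section~\ref{sec:complex-moments}---gives ${\sf q}_2^{(n)}(\cdot,u)\in\cHplus$, so $\langle\sY_t,{\sf q}_2^{(n)}\rangle\ge0$ by self-duality of $\cHplus$ and $M^{(n)}$ is bounded, hence a martingale; the general case $(u_1,u_2)\in H\times\cO$ then follows from the convexity of $H\times\cO$, continuity of both sides in $(u_1,u_2)$, and the quasi-monotone comparison estimates for ${\sf q}_2$. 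Alternatively one argues directly: since ${\sf q}_2(\cdot,u)$ ranges over a compact subset of the open set $\cO$, a slightly larger exponent $p>1$ is still admissible for the big-jump compensator, and combining this with the finite-variation Assumption~\ref{ass:chkk-finite-variation} (small jumps of $J$) and the conditionally-Gaussian tail of the $W$-stochastic integral one gets $\sup_k\EX{(M^{(n)}_{t\wedge\tau_k})^p}<\infty$, with constants uniform in $n$ because ${\sf q}_i^{(n)}\to{\sf q}_i$.

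\emph{Passage to the limit.} Granting this, the affine identity \eqref{eq:extended-affine-formula-intro} holds for $(X^{(n)},\sY)$; then~\eqref{eq:Yosida_OU_converges} (whence $X^{(n)}_s\to X_s$ in $L^2(H)$, hence along a subsequence in probability at the relevant times), the $n$-uniform $L^p$-bound of the previous step (to justify $L^1$-passage inside the conditional expectation), and the convergences ${\sf P}^{(n)}\to{\sf P}$, ${\sf q}_i^{(n)}\to{\sf q}_i$ let us take $n\to\infty$ on both sides, giving \eqref{eq:extended-affine-formula-intro} for $(X,\sY)$; specialising to $t=0$ yields \eqref{eq:extended-expo-moment-finite}. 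I expect the real obstacle to be the true-martingale step: away from a region where $M^{(n)}$ is obviously bounded, promoting the supermartingale to a martingale forces one to use either the quasi-monotone invariance/comparison machinery or the quantitative uniform-integrability estimate above; by contrast the Itô computation is routine and the Yosida limit is soft analysis.
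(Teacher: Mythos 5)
Your proposal follows essentially the same route as the paper: Yosida regularisation of $\cA$, It\^o's formula showing the candidate exponential process is a positive local martingale (hence a supermartingale, which already yields the moment bound), upgrading it to a true martingale, and then letting $n\to\infty$ using the uniform convergence of the approximate Riccati solutions together with \eqref{eq:Yosida_OU_converges}. The only difference is that the paper disposes of the It\^o computation, the martingale upgrade, and the Riccati convergence by appealing to the analogous arguments in \cite{CKK22b} (Theorem 3.3 and Proposition 3.5) rather than spelling out uniform-integrability estimates as you sketch.
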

\begin{proof}
Let $({\sf P}^{(n)}(\cdot, u), {\sf q}_1^{(n)}(\cdot, u), {\sf q}_2^{(n)}(\cdot,u))$ be the solution to \eqref{eq:extended-Riccati} with $\cA =\cA^{(n)}$ (the $n$th Yosida approximation). We define the function $g_u^{(n)}(t,y,x) \colon[0,T]\times H\times \cHplus \rightarrow \R$ as follows 
$$g_u^{(n)}(t,x,y) = \exp\left\{-{\sf P}^{(n)}(T-t, u)+ \langle x, {\sf q}^{(n)}_1(T-t,u)\rangle_H-\langle y,{\sf q}^{(n)}_2(T-t,u)\rangle\right\}\,.$$
Similarly to the proof of \cite[Theorem 3.3]{CKK22b}, we apply the It\^o's formula to $g_{u}^{(n)}(t,X^{(n)}_t, {\sf Y}_t)$ to deduce that it is a local martingale. 

  Then observing that $g_{u}^{(n)}(t,X_{t}^{(n)}, {\sf Y}_t)$ is strictly positive for all $t \in [0,T]$, we infer it is a
  supermartingale and we have
    \begin{align*}
  \EX{\E^{\langle X^{(n)}_{T}, u\rangle_H-\langle {\sf Y}_{T}, u\rangle }}&= \EX{g_{u}^{(n)}(T,X_{T}^{(n)}, {\sf Y}_T)}\\
    &\leq g_u^{(n)}(0,x,y) < \infty\,, \quad (x,y) \in H\times \cH\,,
  \end{align*}
  from which it follows that $(g_{u}^{(n)}(t,X_{t}^{(n)}, {\sf Y}_t))_{t\in [0,T]}$ is a martingale and hence it holds
  \begin{align*}
  \EX{\E^{\langle X_{T}^{(n)},u_1\rangle_H - \langle {\sf Y}_T, u_2\rangle} \mid \mathcal{F}_t}
    &=\E^{- {\sf P}^{(n)}(T-t,u)+\langle X_t, {\sf q}_1^{(n)}(T-t,u)\rangle_H - \langle {\sf Y}_t, {\sf q}_2^{(n)}(T-t,u)\rangle}.  
  \end{align*}

Following similar arguments as in the proof of \cite[Proposition 3.5]{CKK22b}, we deduce that
$$\lim_{n\rightarrow \infty} \sup_{t \in [0,T]}|{\sf P}^{(n)}(t,u) - {\sf P}(t,u)| =0$$
and 
$$\lim_{n\rightarrow \infty}\left( \sup_{t \in [0,T]} \|{\sf q}_1^{(n)}(t,u) - {\sf q}_1(t,u)\|_H + \sup_{t \in [0,T]} \|{\sf q}_2^{(n)}(t,u) - {\sf q}_2(t,u)\|_2\right) = 0\,. $$
Hence taking limits for $n \rightarrow \infty$ and invoking \eqref{eq:Yosida_OU_converges} yields the statement.
\end{proof}

Before proving the existence and uniqueness of a solution to \eqref{eq:complex-Riccati-phi}-\eqref{eq:complex-Riccati-psi-2}, we give the following
lemma: 
\begin{lemma}\label{lem:growth-inequality}
Let ${\sf R}$ be as in \eqref{eq:R}. Then there exists a locally Lipschitz continuous function $g$ on $\cH$ such that
for all $h\in H^{\MC}$ and $u\in S(\cO)$ we have
\begin{align}\label{eq:growth-inequality}
  \Re(\langle \overline{u} , {\sf R}(h,u)\rangle)\leq g(\Re(u))(1+\norm{u}_{\cH^{\mathbb{C}}}^{2})(1+\norm{h}_{H^{\mathbb{C}}}),
\end{align}
\end{lemma}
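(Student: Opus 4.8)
The plan is to bound $\Re\langle\bar u,\mathsf R(h,u)\rangle$ by splitting $\mathsf R$ into the four summands appearing in~\eqref{eq:R}: the linear term $\langle\bar u,\tilde B^{*}(u)\rangle$, the quadratic term $-\tfrac12\langle\bar u,(D^{1/2}h)^{\otimes2}\rangle$, the bilinear term $-\langle\bar u,D^{1/2}h\otimes D^{1/2}\Upsilon\rangle$, and the jump integral $-\int_{\cHpluso}(\E^{-\langle\xi,u\rangle}-1)\,\|\xi\|_2^{-2}\langle\bar u,\mu(\D\xi)\rangle$, bounding the real part of each from above and then recombining, using $t\le1+t^{2}$ to reconcile mismatched powers of $\|u\|_{\cH^{\MC}}$. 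A useful preliminary is the rank-one pairing identity $\langle\bar u,a\otimes b\rangle=\langle u b,a\rangle_{H^{\MC}}$, which holds because every $u\in\cH^{\MC}$ satisfies $u^{T}=u$, and which reduces the operator-level estimates to elementary ones in $H^{\MC}$.

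The linear, bilinear and quadratic terms are routine. Cauchy--Schwarz in $\cH^{\MC}$ gives $\lvert\langle\bar u,\tilde B^{*}(u)\rangle\rvert\le\|\tilde B\|_{\cL(\cH)}\|u\|_{\cH^{\MC}}^{2}$. The pairing identity turns $-\langle\bar u,D^{1/2}h\otimes D^{1/2}\Upsilon\rangle$ into $-\langle uD^{1/2}\Upsilon,D^{1/2}h\rangle_{H^{\MC}}$, of modulus at most $\|D^{1/2}\|_{\cL(H)}^{2}\|\Upsilon\|_{H}\,\|u\|_{\cH^{\MC}}\|h\|_{H^{\MC}}$ (using $\|u\|_{\cL(H)}\le\|u\|_{\cH^{\MC}}$). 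For the quadratic term, $\langle\bar u,(D^{1/2}h)^{\otimes2}\rangle=\langle uD^{1/2}h,D^{1/2}h\rangle_{H^{\MC}}$; writing $u=\Re u+\mathrm i\,\mathcal{I}(u)$ with $\Re u,\mathcal{I}(u)\in\cH$ self-adjoint shows that both $\langle(\Re u)D^{1/2}h,D^{1/2}h\rangle$ and $\langle(\mathcal{I}(u))D^{1/2}h,D^{1/2}h\rangle$ are real, so $\Re\big(-\tfrac12\langle\bar u,(D^{1/2}h)^{\otimes2}\rangle\big)=-\tfrac12\langle(\Re u)D^{1/2}h,D^{1/2}h\rangle$. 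This is $\le0$ when $\Re u\in\cHplus$, and in general is bounded above by $\tfrac12\|\Re u\|_{\cL(H)}\|D^{1/2}\|_{\cL(H)}^{2}\|h\|_{H^{\MC}}^{2}$ --- a quantity depending only on $\Re u$ and $h$, hence absorbable into $g(\Re u)$.

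The jump integral is the main obstacle, and here the finite-variation Assumption~\ref{ass:chkk-finite-variation} enters in an essential way (in case~(i) there is nothing to estimate, since $\mu=0$; so suppose case~(ii)). I would split the domain at $\|\xi\|_2=1$. On $\{\|\xi\|_2\le1\}$ the first-order bound $\lvert\E^{-\langle\xi,u\rangle}-1\rvert\le\lvert\langle\xi,u\rangle\rvert\,\E^{\lvert\langle\xi,\Re u\rangle\rvert}\le\|\xi\|_2\,\|u\|_{\cH^{\MC}}\,\E^{\|\Re u\|_{2}}$ tames the singular factor $\|\xi\|_2^{-2}$ to $\|\xi\|_2^{-1}$, and finiteness of $\int_{\{\|\xi\|_2\le1\}}\|\xi\|_2^{-1}\lvert\langle\bar u,\mu(\D\xi)\rangle\rvert$, with a bound of the form $c\,\|u\|_{\cH^{\MC}}$, follows from Assumption~\ref{ass:chkk-finite-variation}(ii) via $\lvert\langle\bar u,\mu(\D\xi)\rangle\rvert\le\langle\lvert\Re u\rvert+\lvert\mathcal{I}(u)\rvert,\mu(\D\xi)\rangle$ and the observation that $y\mapsto\int_{\{\|\xi\|_2\le1\}}\|\xi\|_2^{-1}\langle y,\mu(\D\xi)\rangle$ is a positive, everywhere-finite, hence bounded, linear functional on $\cH$. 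On $\{\|\xi\|_2>1\}$ I would use $\lvert\E^{-\langle\xi,u\rangle}-1\rvert\le1+\E^{-\langle\xi,\Re u\rangle}$ and $\|\xi\|_2^{-2}\le1$: the ``$1$''-part is finite by the L\'evy-measure property $\int_{\cHpluso}(\|\xi\|_2^{2}\wedge1)\,M(y,\D\xi)<\infty$ of the jump intensity~\eqref{eq:measure-M} (which gives $\int_{\{\|\xi\|_2>1\}}\|\xi\|_2^{-2}\langle y,\mu(\D\xi)\rangle<\infty$), while the ``$\E^{-\langle\xi,\Re u\rangle}$''-part is finite precisely because $\Re u\in\cO$; in both cases the same positive-functional argument yields a bound $c(\Re u)\,\|u\|_{\cH^{\MC}}$. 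Collecting the four estimates produces the stated inequality with $g(\Re u)$ assembled from $\|\tilde B\|_{\cL(\cH)}$, $\|D\|_{\cL(H)}$, $\|\Upsilon\|_{H}$, $\|\Re u\|_{\cL(H)}$, $\E^{\|\Re u\|_{2}}$, and the two exponentially weighted ``compensator'' functionals attached to $\Re u$. Local Lipschitz continuity of $g$ on $\cH$ would then be obtained from the convexity of $v\mapsto\E^{-\langle\xi,v\rangle}$ together with dominated convergence, exploiting the openness of $\cO$, exactly as in~\cite{Mar76,KM15}. The genuinely delicate part is precisely this last step --- controlling the exponentially weighted integrals, and their regularity, uniformly on compact subsets of $\cO$ --- the algebra of the first three terms being entirely elementary.
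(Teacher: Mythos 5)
Your argument follows essentially the same route as the paper's own proof: the same decomposition into the linear/quadratic/bilinear terms plus the jump integral, the same split of the jump integral at $\norm{\xi}_2=1$ with the first-order bound carrying the factor $\E^{\norm{\Re(u)}_2}$ on small jumps (via the finite-variation assumption) and the bound $1+\E^{-\langle \xi,\Re(u)\rangle}$ on large jumps (via $\Re(u)\in\cO$ and finiteness of $\mu$ on $\{\norm{\xi}_2>1\}$), assembling $g$ from the same ingredients and deferring local Lipschitz continuity in the same way. Note only that, exactly as in the paper's proof, the quadratic term yields a factor $(1+\norm{h}_{H^{\MC}}^{2})$ rather than the $(1+\norm{h}_{H^{\MC}})$ displayed in the statement --- the squared version is also what is invoked in the proof of Proposition~\ref{prop:existence-complex-solution}, so this is a typo in the statement rather than a gap in your argument.
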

\begin{proof}
  We first split $\Re(\langle \overline{u} , {\sf R}(u,h)\rangle)$ into three parts as follows:
  \begin{align}\label{eq:growth-inequality-1}
   \Re(\langle \overline{u},R(h,u)\rangle)&= I_1 + I_2+ I_3\,,
   \end{align}
   where
   \begin{align*}
   I_1&=  \Re(\langle \overline{u},
    \tilde{B}^{*}(u)\rangle)+\Re(\langle \overline{u}, D^{1/2}h\otimes D^{1/2}h\rangle) + \Re(\langle \overline{u}, D^{1/2}h\otimes D^{1/2}\Upsilon \rangle)\,, \\
    I_2 &= -\Re\Big(\int_{\cHplus\cap\set{0<\norm{\xi}_2\leq 1}}\big(\E^{-\langle
      u,\xi\rangle}-1\big)\langle\frac{\mu(\D\xi)}{\norm{\xi}_2^{2}},\overline{u}\rangle\Big)\,,\\
    I_3 &=  -\Re\Big(\int_{\cHplus\cap\set{\norm{\xi}_2>1}}\big(\E^{-\langle
      u,\xi\rangle}-1\big)\langle\frac{\mu(\D\xi)}{\norm{\xi}_2^{2}},\overline{u}\rangle\Big).
  \end{align*}
  For the first term on the right-hand side of~\eqref{eq:growth-inequality-1}, it holds
  \begin{align*}
    I_1&\leq
                      \norm{B^{*}}_{\cL(\cH)}\norm{u}^{2}_{\cH^{\mathbb{C}}}+\norm{D}_{\cL(H)}\norm{h}_{H^{\MC}}^{2}\norm{u}_{\cH^{\mathbb{C}}} + \norm{D}_{\cL(H)}\norm{h}_{H^{\MC}}\norm{\Upsilon}_{H}\norm{u}_{\cH^{\mathbb{C}}}\\
                    &\leq g_{1}(1+\norm{u}_{\cH^{\mathbb{C}}}^{2})(1+\norm{h}_{H^{\MC}}^{2}), 
  \end{align*}
  where $g_{1}=\norm{B^{*}}_{\cL(\cH)}+\norm{D}_{\cL(H)} +\norm{D}_{\cL(H)} \norm{\Upsilon}_H$. 
  Next observe that 
  \begin{align}\label{eq:big-jumps}
  \int_{\cHplus\cap\set{\norm{\xi}_2>1}}\E^{-\langle
    u,\xi\rangle} \frac{\mu(\D\xi)}{\norm{\xi}_2^{2}} &\leq_{\cHplus} \int_{\cHplus\cap\set{\norm{\xi}_2>1}}\E^{-\langle
    \Re(u),\xi\rangle} \frac{\mu(\D\xi)}{\norm{\xi}_2^{2}}
  \end{align}
  and 
  \begin{align}\label{eq:small-jumps}
    \int_{\cHplus\cap\set{0<\norm{\xi}_2\leq 1}} (\E^{-\langle
    u,\xi\rangle} -1)\frac{\mu(\D\xi)}{\norm{\xi}_2^{2}}
                                                                               &=
                                                    \int_{\cHplus\cap\set{0<\norm{\xi}_2\leq
                                                    1}}\int_{0}^{1}\langle
                                                    \xi,u\rangle \E^{-\langle
                                                    \xi,u\rangle}\D s\frac{\mu(\D\xi)}{\norm{\xi}_2^{2}}\nonumber\\ 
    &\leq_{\cHplus} \E^{\norm{\Re(u)}_2}\norm{u}_{\cH^\mathbb{C}}\int_{\cHplus\cap\set{0<\norm{\xi}_2\leq 1}}\norm{\xi}_2^{-1}\mu(\D\xi).
  \end{align}
  Inequality \eqref{eq:big-jumps} and the monotonicity of $\cHplus$ imply 
  \begin{align*}
    I_2&\leq
                                                                                \norm{\int_{\cHplus\cap\set{\norm{\xi}_2>1}}\E^{-\langle
                                                                                \Re(u),\xi\rangle}\frac{\mu(\D\xi)}{\norm{\xi}_2^{2}}}\norm{u}_{\cH^{\mathbb{C}}}
    +\norm{\mu(\cHplus\cap\set{\norm{\xi}_2>1})}_2\norm{u}_{\cH^{\mathbb{C}}}\\
                                                                              &\leq g_{2}(\Re(u))(1+\norm{u}_{\cH^{\mathbb{C}}}^{2}),
  \end{align*}
  where $g_{2}(u)=\norm{\int_{\cHplus\cap\set{\norm{\xi}_2>1}}\E^{-\langle
    \xi, u\rangle}\frac{\mu(\D\xi)}{\norm{\xi}_2^{2}}}+\norm{\mu(\cHplus\cap\set{\norm{\xi}_2>1})}$.
  Finally, for the third term in the right hand side of~\eqref{eq:growth-inequality-1}, it follows from inequality \eqref{eq:small-jumps} and the monotonicity of $\cHplus$
   \begin{align*}
  I_3 &\leq
   \norm{\int_{\cHplus\cap\set{0<\norm{\xi}_2\leq 1}}(\E^{-\langle u,\xi\rangle}-1)\frac{\mu(\D\xi)}{\norm{\xi}_2^{2}}}_{2}\norm{u}_{\cH^{\mathbb{C}}}.\\  
   &\leq\E^{\norm{\Re(u)}_2}\norm{\int_{\cHplus\cap\set{0<\norm{\xi}_2\leq
    1}}\norm{\xi}_2^{-1}\mu(\D\xi)}\norm{u}_{\cH^{\mathbb{C}}}^2.
  \end{align*}
  Hence setting
  $g_{3}(u)\df\E^{\norm{u}_{\cH^{\mathcal{C}}}}\norm{\int_{\cHplus\cap\set{0<\norm{\xi}_2\leq
        1}}\norm{\xi}_2^{-1}\mu(\D\xi)}_2$, we find that~\eqref{eq:growth-inequality} holds for the
  function $g=g_{1}+g_{2}+g_{3}$, which is locally Lipschitz continuous. 
\end{proof}

In the following proposition we prove the existence of a unique solution to \eqref{eq:complex-Riccati-phi}-\eqref{eq:complex-Riccati-psi-2} on $[0,T]$:
\begin{proposition}\label{prop:existence-complex-solution}
Let $(b,B,m,\mu)$ be an admissible parameter set satisfying Assumptions \ref{def:admissibility} and \ref{ass:chkk-finite-variation},
let $(\cA,\dom(\cA))$ be the generator of a strongly continuous semigroup, let $D \in
\cL(H)$ be a positive self-adjoint operator, and let $\Upsilon \in H$. Let $(u_1, u_2) \in H^\mathbb{C} \times S(\cO)$. Assume that the
extended generalized Riccati
equations~\eqref{eq:extended-Riccati-P}-\eqref{eq:extended-Riccati-q2} have
a unique solution on $[0,T]$, for the initial condition \sloppy $u= (\cR (u_1),\cR (u_2))$. Then there exists a unique
solution $(\phi(\cdot,u), \psi_1(\cdot,u), \psi_2(\cdot,u))$ to
\eqref{eq:complex-Riccati-phi}-\eqref{eq:complex-Riccati-psi-2} on $[0,T]$, with initial value $u=(u_1,0)$.     
\end{proposition}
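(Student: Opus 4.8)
The plan is to solve the three equations one after the other: the linear equation~\eqref{eq:complex-Riccati-psi-1} for $\psi_1$ in closed form, then~\eqref{eq:complex-Riccati-psi-2} as an abstract ODE for $\psi_2$ on the open set $S(\cO)\subset\cH^{\MC}$, and finally~\eqref{eq:complex-Riccati-phi} for $\phi$ by a quadrature; all the substance lies in showing that the $\psi_2$-flow stays in $S(\cO)$ and does not explode before time $T$. First, since $\cA^{*}$ generates the adjoint strongly continuous semigroup $(S^{*}(t))_{t\geq0}$, the unique mild solution of~\eqref{eq:complex-Riccati-psi-1} is $\psi_1(t,u)=S^{*}(t)u_1$, which is continuous on $[0,T]$ with $\|\psi_1(t,u)\|_{H^{\MC}}\leq M_T\|u_1\|_{H^{\MC}}$, $M_T:=\sup_{0\leq t\leq T}\|S^{*}(t)\|$. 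Substituting this into~\eqref{eq:complex-Riccati-psi-2} gives $\partial_t\psi_2=\mathsf R(S^{*}(t)u_1,\psi_2)$ with $\psi_2(0)=0$. Exactly as in the proof of Proposition~\ref{prop:extended-Riccati-existence}, $v\mapsto\mathsf R(h,v)$ is locally Lipschitz on $S(\cO)$ for every $h\in H^{\MC}$ (the only $v$-dependence is through the bounded operator $\tilde B^{*}$ and through $\int_{\cHpluso}(\E^{-\langle\xi,v\rangle}-1)\|\xi\|_2^{-2}\mu(\D\xi)$, which is finite and analytic on $S(\cO)$ under Assumptions~\ref{def:admissibility}--\ref{ass:chkk-finite-variation}), and $t\mapsto\mathsf R(S^{*}(t)u_1,v)$ is continuous. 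Standard ODE theory in Banach spaces (\cite[Chapter~6, Proposition~1.2]{Mar76}) then yields a unique non-continuable solution $\psi_2(\cdot,u)$ on a maximal interval $[0,T_{\max})$ with the dichotomy: if $T_{\max}\leq T$ then $\limsup_{t\uparrow T_{\max}}\big(\|\psi_2(t,u)\|_2+\mathrm{dist}(\psi_2(t,u),\partial S(\cO))^{-1}\big)=\infty$. It remains to rule out both alternatives on $[0,T]$.

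For the boundary alternative, I would compare the real part $v(t):=\Re(\psi_2(t,u))$ with the given real solution $\mathsf q_2$ of~\eqref{eq:extended-Riccati-P}--\eqref{eq:extended-Riccati-q2} with initial data $(\Re(u_1),0)$. Since $S^{*}$ is real, $\mathsf q_1(t)=S^{*}(t)\Re(u_1)=\Re(\psi_1(t,u))$, and taking the real part of~\eqref{eq:complex-Riccati-psi-2} (using that $\tilde B$, $D$, $\Upsilon$ are real and $\Re\,\E^{-\langle\xi,\psi_2\rangle}=\E^{-\langle\xi,v\rangle}\cos\langle\xi,\mathcal I(\psi_2)\rangle$) one obtains
\begin{equation*}
 \partial_t v(t)=\mathsf R\big(\mathsf q_1(t),v(t)\big)+\tfrac12\big(D^{1/2}\mathcal I(\psi_1(t,u))\big)^{\otimes2}+\int_{\cHpluso}\E^{-\langle\xi,v(t)\rangle}\big(1-\cos\langle\xi,\mathcal I(\psi_2(t,u))\rangle\big)\frac{\mu(\D\xi)}{\|\xi\|_2^{2}},
\end{equation*}
and both correction terms lie in $\cHplus$. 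Hence $v$ is a supersolution of the $\mathsf q_2$-equation with the same initial value $0$; because $\mathsf R(\mathsf q_1(t),\cdot)$ is quasi-monotone with respect to $\cHplus$ — this is precisely what the admissibility conditions~(iii)--(iv) of Assumption~\ref{def:admissibility} encode, as in the stochastic-invariance theory underlying \cite[Theorem~2.8]{CKK22a} — the comparison principle for quasi-monotone ODEs gives $\mathsf q_2(t)\leq_{\cHplus}v(t)$ for all $t\in[0,T_{\max})\cap[0,T]$. Now $\cO$ is upward closed for $\leq_{\cHplus}$ (if $w'\leq_{\cHplus}w$ then $\langle\xi,w\rangle\geq\langle\xi,w'\rangle$ for every $\xi\in\cHplus=(\cHplus)^{*}$, so $\E^{-\langle\xi,w\rangle}\leq\E^{-\langle\xi,w'\rangle}$), and $\mathsf q_2([0,T])$ is a compact subset of the open set $\cO$; a short argument then shows $\mathrm{dist}(v(t),\partial\cO)\geq\delta:=\mathrm{dist}(\mathsf q_2([0,T]),\partial\cO)>0$, i.e.\ $\mathrm{dist}(\psi_2(t,u),\partial S(\cO))\geq\delta$ throughout, excluding the boundary alternative.

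For the norm alternative, Lemma~\ref{lem:growth-inequality} gives $\tfrac{d}{dt}\|\psi_2(t,u)\|_2^2=2\Re\langle\overline{\psi_2(t,u)},\mathsf R(\psi_1(t,u),\psi_2(t,u))\rangle\leq 2g(v(t))\big(1+\|\psi_2(t,u)\|_2^2\big)\big(1+M_T\|u_1\|_{H^{\MC}}\big)$, and the growth function $g=g_1+g_2+g_3$ from that lemma is controlled along the flow using $v(t)\geq_{\cHplus}\mathsf q_2(t)$: the $\mu$-dependent pieces $g_2,g_3$ involve $\E^{-\langle\xi,v(t)\rangle}\leq\E^{-\langle\xi,\mathsf q_2(t)\rangle}$, which is bounded uniformly on $[0,T]$ since $\mathsf q_2([0,T])\subset\cO$; under case~(i) of Assumption~\ref{ass:chkk-finite-variation} they vanish outright, while under case~(ii) the estimate additionally uses the finite-variation integrability $\int_{\cHplus\cap\{\|\xi\|_2\leq1\}}\|\xi\|_2^{-1}\mu(\D\xi)<\infty$. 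Gronwall's lemma then bounds $\|\psi_2(t,u)\|_2$ uniformly on $[0,T_{\max})\cap[0,T]$, so $T_{\max}>T$. Finally, $\phi(t,u)=\int_0^t\mathsf F(\psi_2(s,u))\,\D s$ is well defined and in $C^1([0,T],\MC)$ because $\mathsf F$ is continuous on $S(\cO)$, and uniqueness of $(\phi,\psi_1,\psi_2)$ is inherited from the uniqueness of mild solutions of each of the three (linear, locally Lipschitz, quadrature) equations. I expect the a priori norm estimate — specifically, keeping $g$ integrable along the $\psi_2$-flow in case~(ii) of Assumption~\ref{ass:chkk-finite-variation}, together with the continuation/uniqueness bookkeeping for ODEs on an open subset of the Banach space $\cH^{\MC}$ — to be the main obstacle; the comparison step is conceptually central but routine once the positivity of the two correction terms is noted.
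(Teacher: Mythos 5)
Your overall architecture coincides with the paper's: solve $\psi_1$ explicitly via the adjoint semigroup, treat $\psi_2$ as a locally Lipschitz ODE on the open set $S(\cO)$ with the continuation dichotomy, rule out the boundary alternative by showing $\Re\psi_2$ is a supersolution and invoking a quasi-monotone comparison (Volkmann) to get $\mathsf q_2(t)\leq_{\cHplus}\Re\psi_2(t)$ together with upward closedness of $\cO$, then a Gronwall bound for the norm, and finally a quadrature for $\phi$. The real-part computation you give, identifying the two non-negative correction terms $\tfrac12(D^{1/2}\Im\psi_1)^{\otimes2}$ and $\int\E^{-\langle\xi,\Re\psi_2\rangle}(1-\cos\langle\xi,\Im\psi_2\rangle)\,\mu(\D\xi)/\|\xi\|_2^2$, is correct and is exactly the inequality \eqref{eq:comparison} that the paper writes in contracted form.

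The gap is in the norm step. You invoke Lemma~\ref{lem:growth-inequality} as a black box and claim that both $\mu$-dependent pieces $g_2,g_3$ of its growth function $g$ involve $\E^{-\langle\xi,v(t)\rangle}$, which the comparison $v(t)\geq_{\cHplus}\mathsf q_2(t)$ would then dominate. That is not what $g_3$ is: in the lemma, $g_3(\Re\psi_2)=\E^{\norm{\Re\psi_2}_{\cH}}\,\norm{\int_{\{0<\|\xi\|_2\le1\}}\|\xi\|_2^{-1}\mu(\D\xi)}_2$, i.e.\ it carries the norm of $\Re\psi_2$ in the exponent, not a pairing $\langle\xi,\Re\psi_2\rangle$. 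The cone inequality $\Re\psi_2\geq_{\cHplus}\mathsf q_2$ does not give any upper bound on $\norm{\Re\psi_2}_2$ (upper domination in the order $\leq_{\cHplus}$ does not translate into a norm bound), so $g(\Re\psi_2(t))$ is not a priori controlled, and the single Gronwall you propose does not close: you would face a differential inequality of the type $\tfrac{d}{dt}\norm{\psi_2}_2^2\lesssim \E^{\norm{\Re\psi_2}_2}(1+\norm{\psi_2}_2^2)$, which can blow up in finite time. The paper closes the loop by inserting a \emph{separate} preliminary Gronwall step: it bounds $\norm{\Re(\cR_{u_1}(\psi_2(t,u)))}_2$ directly via the sharper estimate \eqref{eq:existence-complex-solution-2}, in which the exponential factor coming from the small-jump integral is dominated by $\E^{\norm{\mathsf q_2(t,\Re(u))}_2}$ — this \emph{does} follow from the comparison, because $\E^{-s\langle\xi,\Re\psi_2\rangle}\leq\E^{-s\langle\xi,\mathsf q_2\rangle}\leq\E^{\norm{\mathsf q_2}_2}$ for $\xi\in\cHplus$, $\|\xi\|_2\leq1$, $s\in[0,1]$ — and this yields a Gronwall-closed bound for $\norm{\Re\psi_2}_2$. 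Only after that does $g(\Re\psi_2(t))$ become uniformly bounded on $[0,T]$, and the second Gronwall via Lemma~\ref{lem:growth-inequality} controls $\norm{\psi_2(t,u)}_2$. So either you must repeat the paper's two-Gronwall structure, or re-derive a comparison-aware version of the growth inequality replacing $\E^{\norm{\Re u}}$ by $\E^{\norm{\mathsf q_2}}$; invoking Lemma~\ref{lem:growth-inequality} verbatim and crediting the comparison is not enough.
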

\begin{proof}
  Let $u=(u_{1},u_{2})\in H^{\MC}\times S(\cO)$ and $T\geq
  0$. By standard semigroup theory, we see that the unique mild solution
  to~\eqref{eq:complex-Riccati-psi-2} on $[0,T]$ is given by
  $\psi_1(t,u)=S^{*}(t)u_{1}\in H^{\MC}$. For every $t\in [0,T]$ we consider 
the extension of the map $\cR_{u_{1}}$, from the proof of
  Proposition~\ref{prop:extended-Riccati-existence}, to a function $\cR_{u_1}
  (t, \cdot)\colon S(\cO)\to \cH^\MC$. Then by plugging
  $\psi_{1}(t,u)$ into \eqref{eq:extended-Riccati-q2}, we obtain the equation 
  \begin{align}\label{eq:extended-Riccati-q-2-1}
    \begin{cases}
    \frac{\partial \psi_{2}}{\partial t}(t,u)&= \cR_{u_1}(t,\psi_2(t,u))\,,\\
    \psi_{2}(0,u)&=0.
    \end{cases}
  \end{align}
  The map $\mathcal{R}_{u_{1}}(t,\cdot)$ is locally Lipschitz continuous on
  $S(\cO)$, which again by standard arguments implies the existence
  of a unique solution $\psi_{2}(\cdot,u)$ on some interval
  $[0,T_{\psi_{2}})$ with values in $S(\mathcal{O})\subseteq\cH^{\MC}$. We
  want to show that the lifetime $T_{\psi_{2}}$ of $\psi_{2}(\cdot, u_1, 0)$
  is always greater or equal than the lifetime 
  $T$ of $q_{2}(\cdot,\Re(u_1),0)$. 
  First observe that
  $-\big( D^{1/2}S^{*}(t)\Im(u))^{\otimes 2}\leq_{\cHplus} 0$ and 
  \begin{align*}
    \int_{\cHpluso}\Re (\E^{-\langle \xi , u\rangle } -1) \frac{\mu(\D\xi)}{\norm{\xi}_2^{2}}\leq_{\cHplus} \int_{\cHpluso}(\E^{-\langle \xi , \Re{(u)}\rangle } -1)\frac{\mu(\D\xi)}{\norm{\xi}_2^{2}}.
  \end{align*}
  This together with the fact that $q_{2}(0,\Re(u_1),0)=\Re(\psi_{2}(0,u_1,0))=0$ imply for all $t<T\wedge T_{\psi_{2}}$
  \begin{align}\label{eq:comparison}
   & \frac{\partial
    q_{2}(t,\Re(u_1),0)}{\partial t}-\cR_{\Re(u_{1})}\big(t,q_{2}(t,\Re(u_1),0)\big)\nonumber\\
    &\qquad =\frac{\partial \Re\psi_{2}(t,u_1,0)}{\partial t}-\Re \cR_{u_{1}}\big(t,\psi_{2}(t,u_1,0)\big)\nonumber\\ 
    &\qquad  \leq_{\cHplus} \frac{\partial \Re\psi_{2}(t,u_1,0)}{\partial t}-\cR_{\Re(u_{1})}\big(t,\Re\psi_{2}(t,u_1,0)\big)\,.
  \end{align}
 Note that $\cR_u(t, \cdot)$ is quasi-monotone with respect to $\cH^+$, for $u\in H$ (see \cite[Section 6]{Vol73} for the notion of quasi-monotonicity). Then by Volkmann's comparison result \cite[Satz 2]{Vol73} and \eqref{eq:comparison}, we have
  \begin{align}\label{eq:volkmann-comparison}
    q_{2}(t,\Re(u_1),0)\leq_{\cHplus}\Re(\psi_{2}(t,u_1,0)),\quad \text{for all } t<T\wedge T_{\psi_{2}}. 
  \end{align}
  We now prove that this implies that $T\leq T_{\psi_{2}}$. First, note   
  that whenever $u\in\cO$ and $v\in\cH$ are such that $u\leq_{\cHplus} v$,
  then $v\in\cO$, hence $\Re(\psi_{2}(t,u_1,0))$ does not approach the boundary of
  $\cO$ before $q_{2}(t,u_1,0)$ does. Next we show that also
  $\norm{\Re\psi_{2}(\cdot,u)}_2$ does not explode before
  $\norm{q_{2}(\cdot,u)}_2$. We show this as follows. From \eqref{eq:big-jumps}, \eqref{eq:small-jumps}, and \eqref{eq:volkmann-comparison}, we infer  
  \begin{align}\label{eq:existence-complex-solution-2}
  &\norm{\Re(\cR_{u_{1}}(\psi_{2}(t,u)))}_2\nonumber\\
  &\quad \leq
                                          \norm{{B}}_{\cL(\cH)}\norm{\Re(\psi_{2}(t,u))}_2+\frac{1}{2} M^2 \E^{2\omega t}\norm{D}_{\cL(\cH)}\norm{u_1}_H \nonumber\\
                                          &\qquad +M \E^{\omega t}\norm{D}_{\cL(\cH)}\norm{u_1}_H\norm{\Upsilon}_H \nonumber\\ 
                                        &\qquad + \norm{\int_{\cHplus\cap
                                          \set{0<\norm{\xi}_2\leq
                                          1}}\norm{\xi}_2^{-1}\mu(\D\xi)}_2\E^{\norm{q_{2}(t,\Re(u))}_2}\norm{\Re{\psi_{2}(t,u)}}_2\nonumber\\
    &\qquad
      +\norm{\mu(\cHplus\cap\set{\norm{\xi}_2>1})}_2+\norm{\int_{\cHplus\cap\set{\norm{\xi}_2>1})}\E^{-\langle\xi,q_{2}(t,\Re(u))\rangle}\frac{\mu(\D\xi)}{\norm{\xi}_2^{2}}}_2, 
  \end{align}
  where the constants $M \geq 1$ and $\omega \in \R$ are such that $\|S^*(t)\|_{\cL(H)} \leq M \E^{\omega t}$.
  Hence by an application of Gronwall's inequality we conclude that $\norm{\Re
    \psi_{2}(t,u)}_2$ is bounded as long as $\norm{q_{2}(t,u)}_2$ is bounded. Now, by
  Lemma~\ref{lem:growth-inequality}, we have 
  \begin{align*}
    \Re(\langle \overline{u},R(h,u)\rangle)\leq g(\Re(u))(1+\norm{u}_{\cH^\mathbb{C}}^{2})(1+\norm{h}_{H^\mathbb{C}}^{2}),
  \end{align*}
  where $g$ is a locally Lipschitz continuous function on $\cH$ and thus
  \begin{align*}
    \frac{\partial}{\partial t}\norm{\psi_{2}(t,u)}^{2}&=2\Re(\langle
                                                         \overline{\psi_{2}(t,u)},R(\psi_{1}(t,u),\psi_{2}(t,u))\rangle)\\
                                                       &\leq  g(\Re(\psi_{2}(t,u)))(1+\norm{\psi_{2}(t,u)}_{\cH^\mathbb{C}}^{2})(1+\norm{\psi_{1}(t,u)}_{H^\mathbb{C}}^{2})\,.
  \end{align*}
  Again by an application of Gronwall's inequality and since
  we already proved that $\Re(\psi_{2}(\cdot,u_1,0))$ does not explode before
  $q_{2}(\cdot,\Re(u_1),0)$ we conclude that $T_{\psi_{2}}\geq T$, which proves the assertion.
\end{proof}

\subsection{Analysis of the Riccati equations in the infinite-dimensional Wishart stochastic volatility model}
In the following proposition we show the existence of a unique solution to the
extended Riccati equations~\eqref{eq:riccati-finite-Wishart} 
up to its maximal lifetime.
\begin{proposition}\label{prop:extended-Riccati-existence-Wishart}
Let $n \in \mathbb{N}$, $Q$, $D \in \cS_1^+(H)$, $\Gamma \in H$ and $\fA \in \cL(H)$. Let $(\cA, \dom(\cA))$ be the generator of a strongly continuous semigroup $(S_t)_{t\geq 0}$ and let $(\cA^*, \dom(\cA^*))$ be its adjoint. Let $D \in \cL(H)$ be a positive
self-adjoint operator, and let $\Upsilon \in H$. Then for every $u=(u_{1},u_{2})\in H\times \cL_n$ there
exist a positive real number $T_{\sf q_{2}}$ and a unique solution $(P_n(\cdot,u), q_{1}(\cdot,u),
q_{2,n}(\cdot,u))$ to \eqref{eq:riccati-finite-phi}-\eqref{eq:riccati-finite-psi-2}
on $[0,T_{q_{2}})$.
\end{proposition}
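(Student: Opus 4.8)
The plan is to follow the proof of Proposition~\ref{prop:extended-Riccati-existence} almost verbatim, the point being that once the linear equation for $q_{1}$ is solved explicitly, the equation for $q_{2,n}$ reduces to a genuinely finite-dimensional ODE on $\cS_n$, and~\eqref{eq:riccati-finite-phi} becomes a mere integration. First I would dispose of~\eqref{eq:riccati-finite-psi-1}: since $H$ is a Hilbert space and $(\cA,\dom(\cA))$ generates the $C_{0}$-semigroup $(S_{t})_{t\ge0}$, the adjoint $(\cA^{*},\dom(\cA^{*}))$ generates the adjoint $C_{0}$-semigroup $(S^{*}(t))_{t\ge0}$, so standard semigroup theory gives the unique mild solution $q_{1}(t,u)=S^{*}(t)u_{1}$, which lies in $C([0,T],H)$ and is in particular bounded on $[0,T]$. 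This step is identical to the corresponding one in Proposition~\ref{prop:extended-Riccati-existence}.

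Next I would substitute $q_{1}(t,u)=S^{*}(t)u_{1}$ into~\eqref{eq:riccati-finite-psi-2}, obtaining the ODE $\partial_{t}q_{2,n}(t,u)=\cR_{n,u_{1}}(t,q_{2,n}(t,u))$ on $\cS_n$, where $\cR_{n,u_{1}}(t,X):=R_{n}(S^{*}(t)u_{1},X)$ with $R_{n}$ from~\eqref{Fn-Rn}. The facts to check, all elementary, are: (i) for fixed $t$ the map $X\mapsto\cR_{n,u_{1}}(t,X)$ is a polynomial of degree two on the finite-dimensional space $\cS_n$ (affine in $X$ through $X\mathbb{A}^{*}+\mathbb{A}X$, quadratic through $-\tfrac12(X+X^{T})Q(X+X^{T})$, and with the remaining terms $-\tfrac12(D^{1/2}h)^{\otimes2}-D^{1/2}h\otimes D^{1/2}\Upsilon$ not involving $X$), hence smooth and in particular locally Lipschitz continuous, and it maps $\cS_n$ into itself (the rank-one cross term read in the symmetrized sense, cf.~\eqref{eq:R-Wishart}); (ii) for fixed $X$ the map $t\mapsto\cR_{n,u_{1}}(t,X)$ is continuous, because $t\mapsto S^{*}(t)u_{1}$ is strongly continuous in $H$ and $D^{1/2}$, $\Pi_{n}$, $\iota_{n}$ are bounded; and (iii) on each ball of $\cS_n$ the local Lipschitz constant of $\cR_{n,u_{1}}(t,\cdot)$ can be taken uniform for $t\in[0,T]$, since $\sup_{t\in[0,T]}\|S^{*}(t)u_{1}\|_{H}<\infty$. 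Granting these, the classical local existence and uniqueness result for ODEs with continuous right-hand side that is locally Lipschitz in the state (e.g.~\cite[Chapter~6, Proposition~1.2]{Mar76}, exactly as used in Proposition~\ref{prop:extended-Riccati-existence}) yields a maximal lifetime $T_{q_{2}}>0$ and a unique $q_{2,n}(\cdot,u)\in C^{1}([0,T_{q_{2}}),\cS_n)$, with $T_{q_{2}}=\liminf_{k\to\infty}\{t\ge0:\|q_{2,n}(t,u)\|\ge k\}$; note that here the only obstruction to continuation is norm blow-up, so no boundary term appears as it does in the $\cO$-valued case of Proposition~\ref{prop:extended-Riccati-existence}.

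Finally I would insert $q_{2,n}(\cdot,u)$ into~\eqref{eq:riccati-finite-phi}: since $F_{n}(X)=n\,\Tr(Q\,\iota_{n}X\iota_{n}^{*})$ is linear, hence continuous, the map $t\mapsto F_{n}(q_{2,n}(t,u))$ is continuous on $[0,T_{q_{2}})$, so $P_{n}(t,u):=\int_{0}^{t}F_{n}(q_{2,n}(s,u))\,\D s$ is the unique $C^{1}$ solution of~\eqref{eq:riccati-finite-phi}. This assembles the triple $(P_{n}(\cdot,u),q_{1}(\cdot,u),q_{2,n}(\cdot,u))$ solving~\eqref{eq:riccati-finite-Wishart} on $[0,T_{q_{2}})$ and establishes its uniqueness.

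I do not anticipate a genuine obstacle: after the semigroup step for $q_{1}$ the argument is a finite-dimensional Picard--Lindel\"of exercise. The only point needing a sentence of care is the uniform-in-$t$ local Lipschitz estimate (iii), which is precisely where the finite-rank projection $\Pi_{n}$ pays off by confining the otherwise merely continuous quadratic nonlinearity in $R$ to a fixed finite-dimensional subspace where local Lipschitz continuity of polynomials is automatic, together with the minor bookkeeping that the right-hand side genuinely preserves the symmetric matrices $\cS_n$.
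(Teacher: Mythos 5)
Your proposal is correct and follows essentially the same route as the paper: solve~\eqref{eq:riccati-finite-psi-1} via $q_1(t,u)=S^*(t)u_1$, reduce~\eqref{eq:riccati-finite-psi-2} to an autonomous-in-state locally Lipschitz ODE on the finite-dimensional space $\cS_n$, invoke the Martin~\cite{Mar76} local existence/uniqueness theorem with norm blow-up as the only continuation obstruction, and integrate $F_n$ for $P_n$. The paper additionally remarks that $\tilde{\cR}_n$ is quasi-monotone with respect to $\cS_n$ (citing~\cite[Lemma~5.1]{CFMT11}) and writes out the local Lipschitz estimate for $\tilde{\cR}_n$ explicitly, but neither point is needed for this proposition since $\cS_n$ is a full linear subspace rather than a cone or proper open set; your observation that the right-hand side is a degree-two polynomial on a finite-dimensional space, hence automatically locally Lipschitz with uniformity in~$t$ over bounded time intervals, is a cleaner way to the same conclusion.
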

\begin{proof}
Standard semigroup theory ensures that for any $u_{1}\in H$ and $T>0$ the unique solution to
\eqref{eq:riccati-finite-psi-1} is given by $q_{1,n}(t,u)=S_t^*u_{1}$ for $t\in
[0,T]$. We define $\cR_{u_1}^n (t, \cdot)\colon \cL_n \to \cL_n$ by 
\begin{align}\label{eq:Ru1}
\cR_{u_1}^n(t,h) &=  \tilde{\cR}_n(h)-\Pi_n\left(\tfrac{1}{2}\big(D^{1/2}S_t^*u_{1}\big)^{\otimes 2}+ D^{1/2}S_t^*u_{1}\otimes D^{1/2}\Upsilon\right)\Pi_n^*\,,
\end{align}
where 
$$\tilde{\cR}_n(h)=\Pi_n\left( \iota_n h\iota_n^*\fA^*+\fA \iota_n h \iota_n^* - \frac{1}{2}(\iota_n h \iota_n^*+ \iota_n h^T \iota_n^*) Q(\iota_n h \iota_n^*+ \iota_n h^T \iota_n^*)\right)\Pi_n^*\,.$$
Plugging $q_{1,n}(t,u)$ into \eqref{eq:riccati-finite-psi-2}, we thus obtain the equation
\begin{align*}
  \begin{cases}
    \frac{\partial q_{2,n}}{\partial t}(t,u)= \cR_{u_1}^n(t, q_{2,n}(t,u))\,,\\
     q_{2,n}(0,u)=u_{2}\,. 
  \end{cases}
\end{align*}
First observe that for $u, v \in \cL_n$
\begin{align*}
    &\norm{\tilde{\cR}_n(u) - \tilde{\cR}_n( v)}_1\\
    &\quad\leq  
    2\|u_n-v_n\|_{1} \|\cA\|_{1} + \frac{1}{2} \norm{(u_n+u_n^T)Q[(u_n+u_n^T)-(v_n+v_n^T)}_{1}\\
    &\qquad +\norm{[(u_n+u_n^T)-(v_n+v_n^T)]Q(v_n+v_n^T)}_{1} \\
    &\quad \leq C (\norm{u}_1 \vee \norm{v}_1)(\|u-v\|_1)\,,
\end{align*}
where $u_n = \iota_n u\iota^*_n$, $v_n = \iota_n v\iota^*_n$ and $C$ is a positive constant. It follows that 
$\cR_{u_{1}}^n(t,\cdot)$ is locally Lipschitz continuous on $\cL_n$ for every $t\geq 0$ and $u_{1}\in H$. Since $\cL_n \simeq \R^{n\times n}$ is finite-dimensional, standard ODE results (see
e.g.~\cite[Chapter 6, Proposition 1.1]{Mar76}) imply that for every $u_{2}\in \cL_n$ there exist a $T_{\sf q_{2}}>0$ and a unique solution
$q_{2,n}(\cdot,u)$ of~\eqref{eq:riccati-finite-psi-2} on $[0,T_{q_{2}})$ with $T_{q_{2}}=\infty$ or 
\begin{align*}
T_{q_{2}}=\liminf_{p\to\infty}\set{t\geq 0\colon \norm{ q_{2,n}(t,u)}_{1}\geq
  p}.
\end{align*}
By inserting $q_{2,n}(\cdot,u)$ in  \eqref{eq:riccati-finite-phi} and observing that
$F_n$ is continuous on $\cL_n$, the statement follows.
\end{proof}

\appendix

\section{Laguerre polynomials and evaluation of the basis function \texorpdfstring{$f_k$}{fk}}\label{sec:appendix1}
When computing the option prices, one has to calculate the integrals w.r.t. the basis function $f_k, k=1, \ldots, n$, see for instance, equations \eqref{eq:option_b0} and \eqref{eq:option1_b0}. This requires the evaluations of the basis function $f_k$, $k\geq 2$ which themselves are integrals of Laguerre polynomials with weight function $\E^{-\frac{1}{2}(\alpha+1)}$, see equation \eqref{eq:f_basis}. We start by introducing the Laguerre polynomials (see, for example, \cite[Chapter V]{freud2014orthogonal} for a good introduction to Laguerre polynomials).

The \textit{Laguerre polynomials} $L_n(x)$ are defined by
$$
L_n(x) = \frac{\E^x}{n!} \frac{\D^n}{\D x^n} \left( x^n \E^{-x} \right)\,, \qquad n\in \mathbb{N}.
$$
These polynomials are orthogonal with respect to the weight function $\E^{-x}$ on $[0, \infty)$, i.e.,
$$
\int_0^\infty L_n(x) L_m(x) \E^{-x} \, \D x = 
\begin{cases}
0 & \text{if } n \neq m, \\
1 & \text{if orthonormalized}, \\
\frac{1}{n!} \int_0^\infty \left[L_n(x)\right]^2 \E^{-x} \, \D x = \frac{1}{n!} & \text{(optional scaling)}.
\end{cases}
$$
They fulfill the so called \textit{three-term recurrence relation}
\begin{equation}\label{eq:reccurence}
L_{n+1}(x) = \frac{(2n + 1 - x) L_n(x) - n L_{n-1}(x)}{n + 1}\,,\qquad n\geq 2
\end{equation}

\noindent with initial conditions
$$
L_0(x) = 1, \quad L_1(x) = 1 - x\,.
$$
The derivative of the standard Laguerre polynomial $L_n(x)$ satisfies the identity
\begin{equation}\label{eq:derivative-laguerre}
x \frac{\D}{\D x} L_n(x) = n L_n(x) - n L_{n-1}(x)\,.
\end{equation}
We make use of these formulas for the Laguerre polynomials to present a fast evaluation approach for $f_k$, $k \geq 1$ by iterations. 
From \eqref{eq:reccurence}, it follows that, for $k\geq 1$
\begin{equation*}
\begin{aligned}
    f_{k+3}(x) &= \int_0^x L_{k+1}(s) \E^{-\frac{1}{2}(\alpha+1) s} \D s \\
    & = \frac{2k+1}{k+1}\int_0^x L_{k}(s)\E^{-\frac{1}{2}(\alpha+1) s}\D s - \frac{k}{k+1}\int_0^x L_{k-1}(s)\E^{-\frac{1}{2}(\alpha+1) s}\D s \\
    &\qquad - \frac{1}{k+1}\int_0^x sL_{k}(s)\E^{-\frac{1}{2}(\alpha+1) s}\D s \\
    & = \frac{2k+1}{k+1}f_{k+2}(x) - \frac{k}{k+1}f_{k+1} - \frac{1}{k+1}\int_0^x sL_{k}(s)\E^{-\frac{1}{2}(\alpha+1) s}\D s.
\end{aligned}   
\end{equation*}
Using \eqref{eq:derivative-laguerre}, we infer
\begin{equation*}
\begin{aligned}
    &\int_0^x sL_{k}(s)\E^{-\frac{1}{2}(\alpha+1) s}\D s \\
    &\quad = -\frac{2}{\alpha+1}sL_{k}(s)\E^{-\frac{1}{2}(\alpha+1) s}|_0^x + \frac{2}{\alpha+1} \int_0^x [L_k(s) + sL_{k}^\prime]\E^{-\frac{1}{2}(\alpha+1) s} \D s\\
    &\quad = -\frac{2}{\alpha+1}xL_{k}(x)\E^{-\frac{1}{2}(\alpha+1)x} + \frac{2}{\alpha+1}f_{k+2} + \frac{2}{\alpha+1} \int_0^x sL_{k}^\prime \E^{-\frac{1}{2}(\alpha+1) s} \D s\\
    &\quad  = -\frac{2}{\alpha+1}xL_{k}(x)\E^{-\frac{1}{2}(\alpha+1)x} + \frac{2}{\alpha+1}f_{k+2}\nonumber\\ 
    &\qquad + \frac{2}{\alpha+1} \int_0^x [ kL_k(s) - kL_{k-1}(s)] \E^{-\frac{1}{2}(\alpha+1) s} \D s \\
    &\quad  = -\frac{2}{\alpha+1}xL_{k}(x)\E^{-\frac{1}{2}(\alpha+1)x} + \frac{2}{\alpha+1}(k+1)f_{k+2} - \frac{2}{\alpha+1}kf_{k+1}\,.
\end{aligned}
\end{equation*}
Finally, we have the iteration for the basis functions $f_k, k\geq 1$
\begin{equation*}
\begin{aligned}
    f_{k+3}(x) &= \frac{2k+1}{k+1}f_{k+2}(x) - \frac{k}{k+1}f_{k+1} \\
    &+ \frac{1}{k+1}\frac{2}{\alpha+1}\left[xL_{k}(x)\E^{-\frac{1}{2}(\alpha+1)x} - (k+1)f_{k+2} + kf_{k+1}\right]\,.
\end{aligned}
\end{equation*}


\section{Numerical method for solving the Riccati equations in the option pricing of Wishart model}\label{sec:riccat_ws}
In the characteristic function and the option pricing equation, one has to compute the quantities
$\langle X_0, q_1(T, u)\rangle_w$, $\tr(Y_0q_2(T, u))$ and $P(T, u)$.
The first term is explicit because $q_1(t, u)$ solves the same linear equation as $\psi_1(t,u)$ in \eqref{eq:riccati_pure_jump1}. The scalar Riccati term satisfies
\begin{equation*}
    \frac{\partial P}{\partial t}(t, u) = n\tr(Qq_2(t, u)), \qquad P(0,u)=0.
\end{equation*}

For the numerical Wishart experiment of Section~\ref{sec:numerics}, we work with the diagonal parameterization
\begin{align*}
Q &= \sum_{k\ge1} q_k\, f_k\otimes f_k,\qquad
\mathbb{A} = -\sum_{k\ge1} a_k\, f_k\otimes f_k,\qquad
D = \frac{1}{2}\sum_{k\ge1}\frac{1}{k^2}\,f_k\otimes f_k,\\
\Upsilon^\vartheta &= -\tfrac12 S^*(\vartheta)h_0,\qquad
D^{1/2}=\frac{1}{\sqrt2}\sum_{k\ge1}\frac{1}{k}\,f_k\otimes f_k,
\end{align*}
with $q_k>0$ and $a_k>0$. For $u_1=(\nu+\mathrm{i}\lambda)u_\vartheta$, define
\begin{align*}
b_k(t,u) &:= \langle D^{1/2}q_1(t,u),f_k\rangle_w
        = \frac{\nu+\mathrm{i}\lambda}{\sqrt2\,k}\,f_k(t+\vartheta),\\
c_k^\vartheta &:= \langle D^{1/2}\Upsilon^\vartheta,f_k\rangle_w
        = -\frac{1}{2\sqrt2\,k}\,\langle f_k,S^*(\vartheta)h_0\rangle_w.
\end{align*}
Moreover, let
\begin{equation*}
F_{ij}(t,u):=\langle q_2(t,u)f_i,f_j\rangle_w,\qquad i,j\ge1.
\end{equation*}
Note that the coefficients $F_{ij}$ need not satisfy $F_{ij}=F_{ji}$, because the mixed term in~\eqref{eq:R-Wishart} is not symmetry preserving.
Since $q_2$ solves the Wishart Riccati equation associated with \eqref{eq:R-Wishart},
\begin{align*}
\partial_t q_2(t,u)
&= q_2(t,u)\mathbb{A}^*+\mathbb{A}q_2(t,u)
   -\tfrac12(D^{1/2}q_1(t,u))^{\otimes2}
   -D^{1/2}q_1(t,u)\otimes D^{1/2}\Upsilon^\vartheta \\
&\quad -\tfrac12(q_2(t,u)+q_2(t,u)^T)Q(q_2(t,u)+q_2(t,u)^T),
\end{align*}
we obtain, for all $i,j\ge1$,
\begin{equation}\label{eq:q_2_prod_infty}
\begin{aligned}
\frac{\partial}{\partial t}F_{ij}(t,u)
&= -(a_i+a_j)\,F_{ij}(t,u)
   -\frac12\, b_i(t,u)b_j(t,u)
   - b_i(t,u)c_j^\vartheta \\
&\quad -\frac12\sum_{k=1}^\infty q_k\,
   \big(F_{ik}(t,u)+F_{ki}(t,u)\big)\big(F_{kj}(t,u)+F_{jk}(t,u)\big).
\end{aligned}
\end{equation}

To implement the scheme, we truncate the system to $H_n=\mathrm{span}\{f_1,\dots,f_n\}$. Define
\begin{equation*}
\mathbf F_n(t,u):=\big(F_{ij}(t,u)\big)_{1\le i,j\le n},\qquad
\mathbf A_n:=\mathrm{diag}(a_1,\dots,a_n),\qquad
\mathbf Q_n:=\mathrm{diag}(q_1,\dots,q_n),
\end{equation*}
\begin{equation*}
\mathbf b_n(t,u):=\big(b_1(t,u),\dots,b_n(t,u)\big)^\top,\qquad
\mathbf c_n^\vartheta:=\big(c_1^\vartheta,\dots,c_n^\vartheta\big)^\top.
\end{equation*}
An explicit Euler step with mesh size $\delta>0$ yields the recursion
\begin{equation}\label{eq:q_2_prod_matrix}
\begin{aligned}
\mathbf F_n(t+\delta,u)
&\approx \mathbf F_n(t,u)-\delta\big(\mathbf A_n\mathbf F_n(t,u)+\mathbf F_n(t,u)\mathbf A_n\big)\\
&\quad -\frac{\delta}{2}\,\mathbf b_n(t,u)\mathbf b_n(t,u)^\top
      -\delta\,\mathbf b_n(t,u)(\mathbf c_n^\vartheta)^\top \\
&\quad -\frac{\delta}{2}\big(\mathbf F_n(t,u)+\mathbf F_n(t,u)^\top\big)
      \mathbf Q_n
      \big(\mathbf F_n(t,u)+\mathbf F_n(t,u)^\top\big).
\end{aligned}
\end{equation}
The initial condition is
\begin{equation*}
\mathbf F_n(0,u)=\big(\langle u_2f_i,f_j\rangle_w\big)_{1\le i,j\le n}.
\end{equation*}
In the first-moment experiment of Section~\ref{sec:numerics}, we take $u_2=0$, hence $\mathbf F_n(0,u)=0$.

The trace term is then approximated by
\begin{equation*}
\tr(Qq_2(t,u))\approx \sum_{k=1}^n q_k\,F_{kk}(t,u)=\tr(\mathbf Q_n\mathbf F_n(t,u)),
\end{equation*}
and therefore
\begin{equation*}
P_n(t+\delta,u)\approx P_n(t,u)+\delta\,n\,\tr(\mathbf Q_n\mathbf F_n(t,u)),\qquad P_n(0,u)=0.
\end{equation*}
Together with the explicit formula for $q_1$, this gives the finite-rank transform approximation used in the Wishart numerical experiment. In Section~\ref{sec:numerics} we further specialize to $q_k=a_k=1/k^2$.

\bibliographystyle{acm}
\bibliography{literatur}

@article{benth2025measure,
  title={Measure-Valued {CARMA} Processes},
  author={Benth, F. E. and Karbach, S. and Khedher, A.},
  journal={arXiv preprint arXiv:2505.08852},
  year={2025}
}

@article{cuchiero2025measure,
  title={Measure-valued processes for energy markets},
  author={Cuchiero, C. and Di Persio, L. and Guida, F. and Svaluto-Ferro, S.},
  journal={Mathematical Finance},
  volume={35},
  number={2},
  pages={520--566},
  year={2025},
  publisher={Wiley Online Library}
}

@misc{freud2014orthogonal,
 author = {Freud, G.},
 title = {Orthogonale {Polynome}},
 year = {1969},
 language = {German},
 howpublished = {Basel und {Stuttgart}: {Birkh{\"a}user} {Verlag}. 294 {S}. (1969).},
 zbMATH = {3270809},
 Zbl = {0169.08002}
}

@article{CKK22a,
 author = {Cox, Sonja and Karbach, Sven and Khedher, Asma},
 title = {Affine pure-jump processes on positive {Hilbert}-{Schmidt} operators},
 fjournal = {Stochastic Processes and their Applications},
 journal = {Stochastic Processes Appl.},
 issn = {0304-4149},
 volume = {151},
 pages = {191--229},
 year = {2022},
 language = {English},
 doi = {10.1016/j.spa.2022.05.008},
 keywords = {60J25,60J76},
 zbMATH = {7564659},
 Zbl = {1493.60114}
}

@article{CKK22b,
 author = {Cox, Sonja and Karbach, Sven and Khedher, Asma},
 title = {An infinite-dimensional affine stochastic volatility model},
 fjournal = {Mathematical Finance},
 journal = {Math. Finance},
 issn = {0960-1627},
 volume = {32},
 number = {3},
 pages = {878--906},
 year = {2022},
 language = {English},
 doi = {10.1111/mafi.12347},
 keywords = {91G20,60J60},
 url = {dare.uva.nl/personal/pure/en/publications/an-infinitedimensional-affine-stochastic-volatility-model(d732cbde-90a5-4473-aff3-6268597189c9).html},
 zbMATH = {7743080},
 Zbl = {1522.91270}
}

@article{FK24,
 author = {Friesen, Martin and Karbach, Sven},
 title = {Stationary covariance regime for affine stochastic covariance models in {Hilbert} spaces},
 fjournal = {Finance and Stochastics},
 journal = {Finance Stoch.},
 issn = {0949-2984},
 volume = {28},
 number = {4},
 pages = {1077--1116},
 year = {2024},
 language = {English},
 doi = {10.1007/s00780-024-00543-3},
 keywords = {91G15,37A25,60G10,62J10},
 zbMATH = {7927021},
 Zbl = {1555.91258}
}

@article{Kar25,
	author = {Sven Karbach},
	doi = {10.1016/j.jmaa.2025.129852},
	issn = {0022-247X},
	journal = {J. Math. Anal. Appl.},
	keywords = {Affine processes, Galerkin approximation, Infinite-dimensional Riccati equations, Hilbert-Schmidt operators, Finite-rank approximation},
	number = {2},
	pages = {129852},
	title = {{Finite-rank approximation of affine processes on positive Hilbert-Schmidt operators}},
	url = {https://www.sciencedirect.com/science/article/pii/S0022247X2500633X},
	volume = {553},
	year = {2026}
}

@misc{Kar24,
      title={Heat modulated affine stochastic volatility models for forward curve dynamics}, 
      author={Sven Karbach},
      year={2024},
      eprint={2409.13070},
      archivePrefix={arXiv},
      primaryClass={q-fin.MF},
      Note={Available at https://arxiv.org/abs/2409.13070}, 
}

@article{benth2018heston,
 author = {Benth, Fred Espen and Simonsen, Iben Cathrine},
 title = {The {Heston} stochastic volatility model in {Hilbert} space},
 fjournal = {Stochastic Analysis and Applications},
 journal = {Stochastic Anal. Appl.},
 issn = {0736-2994},
 volume = {36},
 number = {4},
 pages = {733--750},
 year = {2018},
 language = {English},
 doi = {10.1080/07362994.2018.1461566},
 keywords = {60G60,60H15,91G20,46N30},
 zbMATH = {6949208},
 Zbl = {1401.60093}
}

@article{zbMATH07927020,
 author = {Benth, Fred Espen and Sgarra, Carlo},
 title = {A {Barndorff}-{Nielsen} and {Shephard} model with leverage in {Hilbert} space for commodity forward markets},
 fjournal = {Finance and Stochastics},
 journal = {Finance Stoch.},
 issn = {0949-2984},
 volume = {28},
 number = {4},
 pages = {1035--1076},
 year = {2024},
 language = {English},
 doi = {10.1007/s00780-024-00546-0},
 keywords = {91G15,91G20,60G51,60G10},
 zbMATH = {7927020},
 Zbl = {1555.91254}
}

@article{cuchiero2020generalized,
 author = {Cuchiero, Christa and Teichmann, Josef},
 title = {Generalized {Feller} processes and {Markovian} lifts of stochastic {Volterra} processes: the affine case},
 fjournal = {Journal of Evolution Equations},
 journal = {J. Evol. Equ.},
 issn = {1424-3199},
 volume = {20},
 number = {4},
 pages = {1301--1348},
 year = {2020},
 language = {English},
 doi = {10.1007/s00028-020-00557-2},
 keywords = {60H15,60J25},
 zbMATH = {7339812},
 Zbl = {1462.60083}
}

@article{schmidt2020infinite,
  title={Infinite dimensional affine processes},
  author={Schmidt, T. and Tappe, S. and Yu, W.},
  journal={Stochastic Processes and their Applications},
  volume={130},
  number={12},
  pages={7131--7169},
  year={2020},
  publisher={Elsevier}
}

@misc{schroers2024robustfunctionaldataanalysis,
      title={Robust Functional Data Analysis for Stochastic Evolution Equations in Infinite Dimensions}, 
      author={Dennis Schroers},
      year={2024},
      eprint={2401.16286},
      archivePrefix={arXiv},
      primaryClass={stat.ME},
      note={Available at https://arxiv.org/abs/2401.16286}, 
}

@misc{schroers2024dynamicallyconsistentanalysisrealized,
      title={Dynamically Consistent Analysis of Realized Covariations in Term Structure Models}, 
      author={Dennis Schroers},
      year={2024},
      eprint={2406.19412},
      archivePrefix={arXiv},
      primaryClass={q-fin.ST},
      note={Available at https://arxiv.org/abs/2406.19412}, 
}

@Article{CM99,
  journal={Journal of Computational Finance},
  author={Peter Carr and Dilip B. Madan},
  title={Option valuation using the fast Fourier transform},
  year={1999},
  volume={2},
  pages={61--73},
  doi={10.21314/JCF.1999.043},
  url={https://doi.org/10.21314/JCF.1999.043},
}

@article{leippold2010asset,
  title={Asset pricing with matrix jump diffusions},
  author={Leippold, M. and Trojani, F.},
  journal={Available at SSRN 1572576},
  year={2010}
}

@article{koekebakker2005forward,
  title={Forward curve dynamics in the Nordic electricity market},
  author={Koekebakker, S. and Ollmar, F.},
  journal={Manag. Finance},
  volume={31},
  number={6},
  pages={73--94},
  year={2005},
  publisher={Emerald Group Publishing Limited}
}

@article{andresen2010modeling,
  title={Modeling electricity forward prices using the multivariate normal inverse {G}aussian distribution},
  author={Andresen, A. and Koekebakker, S. and Westgaard, S.},
  journal={J. Energy Markets.},
  volume={3},
  number={3},
  year={2010},
  publisher={Incisive Media}
}

@book{clewlow2000energy,
  title={Energy derivatives: pricing and risk management},
  author={Clewlow, L. and Strickland, C.},
  publisher={Lacima publications},
  year={2000}
}

@article{benth2008stochastic,
  title={Stochastic modeling of financial electricity contracts},
  author={Benth, F. E. and Koekebakker, S.},
  journal={Energy Econ.},
  volume={30},
  number={3},
  pages={1116--1157},
  year={2008},
  publisher={Elsevier}
}

@article{heath1992bond,
  title={Bond pricing and the term structure of interest rates: A new methodology for contingent claims valuation},
  author={Heath, D. and Jarrow, R. and Morton, A.},
  journal={Econometrica},
  pages={77--105},
  year={1992},
  publisher={JSTOR}
}

@book {Conway:2000,
    AUTHOR = {Conway, John B.},
     TITLE = {A course in operator theory},
    SERIES = {Graduate Studies in Mathematics},
    VOLUME = {21},
 PUBLISHER = {American Mathematical Society, Providence, RI},
      YEAR = {2000},
     PAGES = {xvi+372},
      ISBN = {0-8218-2065-6},
   MRCLASS = {47-01 (46-01)},
  MRNUMBER = {1721402},
MRREVIEWER = {Victor S. Shul\cprime man},
       DOI = {10.1090/gsm/021},
       URL = {https://doi.org/10.1090/gsm/021},
}

@book{filipovic2009term,
  title={Term-Structure Models. {A} Graduate Course.},
  author={Filipovic, D.},
  year={2009},
  publisher={Springer}
}

@article{eberlein2010analysis,
 author = {Eberlein, Ernst and Glau, Kathrin and Papapantoleon, Antonis},
 title = {Analysis of {Fourier} transform valuation formulas and applications},
 fjournal = {Applied Mathematical Finance},
 journal = {Appl. Math. Finance},
 issn = {1350-486X},
 volume = {17},
 number = {3-4},
 pages = {211--240},
 year = {2010},
 language = {English},
 doi = {10.1080/13504860903326669},
 keywords = {91G20,91G80},
 zbMATH = {5789611},
 Zbl = {1233.91267}
}

@article{da2013strong,
 author = {Da Prato, G. and Flandoli, F. and Priola, E. and R{\"o}ckner, M.},
 title = {Strong uniqueness for stochastic evolution equations in {Hilbert} spaces perturbed by a bounded measurable drift},
 fjournal = {The Annals of Probability},
 journal = {Ann. Probab.},
 issn = {0091-1798},
 volume = {41},
 number = {5},
 pages = {3306--3344},
 year = {2013},
 language = {English},
 doi = {10.1214/12-AOP763},
 keywords = {35R60,60H15},
 zbMATH = {6226026},
 Zbl = {1291.35455}
}

@Article{Vol73,
 Author = {Peter {Volkmann}},
 Title = {{\"Uber die Invarianz konvexer Mengen und Differentialungleichungen in einem normierten Raume}},
 FJournal = {{Mathematische Annalen}},
 Journal = {{Math. Ann.}},
 ISSN = {0025-5831},
 Volume = {203},
 Pages = {201--210},
 Year = {1973},
 Publisher = {Springer, Berlin/Heidelberg},
 Language = {German},
 DOI = {10.1007/BF01629254},
 MSC2010 = {34G99 34A30},
 Zbl = {0251.34039}
}

@article{cox2024infinite,
 author = {Cox, Sonja and Cuchiero, Christa and Khedher, Asma},
 title = {Infinite-dimensional {Wishart} processes},
 fjournal = {Electronic Journal of Probability},
 journal = {Electron. J. Probab.},
 issn = {1083-6489},
 volume = {29},
 pages = {46},
 note = {Id/No 123},
 year = {2024},
 language = {English},
 doi = {10.1214/24-EJP1173},
 keywords = {60J25,46N30,60H10},
 zbMATH = {7927510},
 Zbl = {1548.60103}
}

@book{eydeland2002energy,
  title={Energy and power risk management: New developments in modeling, pricing, and hedging},
  author={Eydeland, A. and Wolyniec, K.},
  volume={97},
  year={2002},
  publisher={John Wiley \& Sons}
}

@book{benth2012modeling,
  title={Modeling and pricing in financial markets for weather derivatives},
  author={Benth, F. E. and Saltyte-Benth, J.},
  volume={17},
  year={2012},
  publisher={World Scientific}
}

@Article{BK14,
  author       = {F. E. Benth and P. Krühner},
  title	       = {Representation of Infinite-Dimensional Forward Price Models
                  in Commodity Markets},
  journal    = {Commun. Math. Stat.},
  VOLUME       = {2},
  PAGES	       = {47--106},
  year	       = {2014},
  howpublished = {arXiv 1506.07245v1},
  DOI	       = {10.1007/s40304-014-0030-1},
  URL	       = {https://doi.org/10.1007/s40304-014-0030-1}
}

@article{benth2018ornstein,
 author = {Benth, Fred Espen and R{\"u}diger, Barbara and S{\"u}ss, Andre},
 title = {Ornstein-{Uhlenbeck} processes in {Hilbert} space with non-{Gaussian} stochastic volatility},
 fjournal = {Stochastic Processes and their Applications},
 journal = {Stochastic Processes Appl.},
 issn = {0304-4149},
 volume = {128},
 number = {2},
 pages = {461--486},
 year = {2018},
 language = {English},
 doi = {10.1016/j.spa.2017.05.005},
 keywords = {60B11,60G51,60H30,60H15},
 url = {hdl.handle.net/10852/59734},
 zbMATH = {6824506},
 Zbl = {1380.60010}
}

@article {DFS03,
    AUTHOR = {Duffie , D. and Filipovi\'{c}, D. and Schachermayer, W.},
     TITLE = {Affine processes and applications in finance},
   JOURNAL = {Ann. Appl. Probab.},
  FJOURNAL = {The Annals of Applied Probability},
    VOLUME = {13},
      YEAR = {2003},
    NUMBER = {3},
     PAGES = {984--1053},
      ISSN = {1050-5164},
   MRCLASS = {60J25 (91B28)},
  MRNUMBER = {1994043},
MRREVIEWER = {Niels Jacob},
       DOI = {10.1214/aoap/1060202833},
       URL = {https://doi-org.proxy.uba.uva.nl:2443/10.1214/aoap/1060202833}
}

@Book {CT06,
    AUTHOR = {Carmona, R. and Tehranchi, M},
     TITLE = {Interest Rate Models: an Infinite Dimensional Stochastic Analysis Perspective},
    SERIES = {Springer Finance},
 PUBLISHER = {Springer Verlag Berlin and Heidelberg},
      YEAR = {2006},
}

@Book {Fil01,
    AUTHOR = {Filipovic, D.},
     TITLE = {Consistency Problems for {H}eath-{J}arrow-{M}orton Interest rate Models},
    SERIES = {Lecture Notes in Mathematics},
 PUBLISHER = {Springer Verlag Berlin and Heidelberg},
      YEAR = {2001},
      DOI  = {10.1007/b76888}
}

@Book{DZ92,
  author       = {Da Prato, G. and Zabczyk, J.},
  title        = {Stochastic Equations in Infinite Dimensions},
  publisher    = {Cambridge University Press},
  year 	       = {1992},
  series       = {},
  address      = {Cambridge}
}

@article{KM15,
 author = {Keller-Ressel, Martin and Mayerhofer, Eberhard},
 title = {Exponential moments of affine processes},
 fjournal = {The Annals of Applied Probability},
 journal = {Ann. Appl. Probab.},
 issn = {1050-5164},
 volume = {25},
 number = {2},
 pages = {714--752},
 year = {2015},
 language = {English},
 doi = {10.1214/14-AAP1009},
 keywords = {60J25,91G80},
 zbMATH = {6430710},
 Zbl = {1332.60115}
}

@Book{Mar76,
  AUTHOR       = {Martin, R.},
  TITLE	       = {Nonlinear Operators and Differential Equations in Banach
                  Spaces},
  SERIES       = {Pure and applied mathematics},
  PUBLISHER    = {Wiley, New York},
  YEAR	       = 1976,
  ISBN	       = 0471573639
}

@article{Kel10,
 author = {Keller-Ressel, Martin},
 title = {Moment explosions and long-term behavior of affine stochastic volatility models},
 fjournal = {Mathematical Finance},
 journal = {Math. Finance},
 issn = {0960-1627},
 volume = {21},
 number = {1},
 pages = {73--98},
 year = {2011},
 language = {English},
 doi = {10.1111/j.1467-9965.2010.00423.x},
 keywords = {91B25,91G80},
 zbMATH = {5843568},
 Zbl = {1229.91135}
}

\end{document}